\DeclareFontFamily{U}{mathb}{\hyphenchar\font45}
\DeclareFontShape{U}{mathb}{m}{n}{
      <5> <6> <7> <8> <9> <10> gen * mathb
      <10.95> mathb10 <12> <14.4> <17.28> <20.74> <24.88> mathb12
      }{}
\DeclareSymbolFont{mathb}{U}{mathb}{m}{n}
\DeclareMathSymbol{\precneq}{3}{mathb}{"AC}
\newcommand{\myurl}[1]{\url{{\rm\texttt{#1}}}\xspace}
\newcommand{\reducneq}{\precneq}
\newcommand{\reduceq}{\preceq}
\newcommand{\reduceqt}{\reduceq_{\rm t}}
\newcommand{\IR}{\mathbb{R}}
\newcommand{\IRc}{\mathbb{R}_{\rm c}}
\newcommand{\ID}{\mathbb{D}}
\newcommand{\IQ}{\mathbb{Q}}
\newcommand{\IP}{\mathbb{P}}
\newcommand{\IC}{\mathbb{C}}
\newcommand{\IK}{\mathbb{K}}
\newcommand{\IZ}{\mathbb{Z}}
\newcommand{\IN}{\mathbb{N}}
\newcommand{\IS}{\mathbb{S}}
\newcommand{\Sierp}{\IS}
\newcommand{\Sierpinski}{Sierpi\'{n}ski\xspace}
\newcommand{\MLPO}{\text{MLPO}}
\newcommand{\Kol}{\text{C}}
\newcommand{\Kolu}{\text{C}_{\rm u}}
\newcommand{\Adic}{\operatorname{Adic}}
\newcommand{\dom}{\operatorname{dom}}
\newcommand{\range}{\operatorname{range}}
\newcommand{\bin}{\operatorname{bin}}
\newcommand{\rank}{\operatorname{rank}}
\newcommand{\lspan}{\operatorname{lspan}}
\newcommand{\norm}{\operatorname{norm}}
\newcommand{\kernel}{\operatorname{kernel}}
\newcommand{\LinEq}{\operatorname{LinEq}}
\newcommand{\Diag}{\operatorname{Diag}}
\newcommand{\Lev}{\operatorname{Lev}}
\newcommand{\LEV}{\operatorname{LEV}}
\newcommand{\EVec}{\operatorname{EVec}}
\newcommand{\Select}{\operatorname{Select}}
\newcommand{\Intermed}{\operatorname{Intermed}}
\newcommand{\Hausd}{\text{--}}
\newcommand{\ext}{\operatorname{ext}}
\newcommand{\extchull}{\operatorname{extchull}}
\newcommand{\id}{\operatorname{id}}
\newcommand{\cf}[1]{\mathbf{1}_{#1}}
\newcommand{\Tnull}{\ensuremath{\text{T}_0}}
\newcommand{\Tone}{\ensuremath{\text{T}_1}}
\newcommand{\Tthree}{\ensuremath{\text{T}_3}}
\DeclareMathOperator*{\ulim}{ulim}
\newcommand{\calO}{\mathcal{O}}
\newcommand{\calF}{\mathcal{F}}
\newcommand{\calS}{\mathcal{S}}
\newcommand{\calI}{\mathcal{I}}
\newcommand{\calA}{\mathcal{A}}
\newcommand{\calB}{\mathcal{B}}
\newcommand{\person}[1]{\textsc{#1}}
\newcommand{\mycite}[2]{{\rm\cite[\textsc{#1}]{#2}}}
\newcommand{\myto}{\!\to\!}
\newcommand{\toto}{\rightrightarrows}
\newcommand{\ball}{B}
\newcommand{\closure}[1]{\overline{#1}}
\newcommand{\mycard}{\mathfrak{C}_{\rm t}}
\newcommand{\mycomp}{\mathfrak{C}_{\rm c}}
\newcommand{\frakc}{\mathfrak{c}}
\newcommand{\myrho}{\rho}
\newcommand{\myl}{{\scriptscriptstyle<}}
\newcommand{\myg}{{\scriptscriptstyle>}}
\newcommand{\myrhol}{\myrho_{\raisebox{0.2ex}{$\myl$}}}
\newcommand{\myrhog}{\myrho_{\raisebox{0.2ex}{$\myg$}}}
\newcommand{\psiL}[1]{\psi^{\hspace*{-0.7pt}#1}_{\!\raisebox{0.2ex}{$\myl$}}}
\newcommand{\psiG}[1]{\psi^{\hspace*{-0.7pt}#1}_{\!\raisebox{0.2ex}{$\myg$}}}
\newcommand{\psidl}{\psiL{d}}
\newcommand{\psidg}{\psiG{d}}
\newcommand{\chull}{\operatorname{chull}}
\newcommand{\Lim}{\operatorname{Lim}}
\newcommand{\KleeneS}{\Sigma}
\newcommand{\BorelS}{\mathbf{\Sigma}}
\newcommand{\BorelP}{\mathbf{\Pi}}
\newcommand{\Card}{\operatorname{Card}}
\newcommand{\Gdelta}{\text{G}_{\delta}}
\newcommand{\Fsigma}{\text{F}_{\sigma}}
\newcommand{\COMMENTED}[1]{}
\spnewtheorem{observation}[theorem]{Observation}{\bfseries}{\itshape}
\spnewtheorem{fact}[theorem]{Fact}{\bfseries}{\itshape}
\spnewtheorem{myclaim}[theorem]{Claim}{\bfseries}{\itshape}
\spnewtheorem{scholium}[theorem]{Scholium}{\bfseries}{\itshape}
\spnewtheorem{myexample}[theorem]{Example}{\bfseries}{\itshape}
\spnewtheorem{myremark}[theorem]{Remark}{\bfseries}{\itshape}
\spnewtheorem{myquestion}[theorem]{Question}{\bfseries}{\itshape}
\begin{document}
\setcounter{secnumdepth}{3}
\setcounter{tocdepth}{3}
\title{\vspace*{-5ex}Real Computation with Least Discrete Advice: \\
A Complexity Theory of Nonuniform Computability}
\titlerunning{Real Computation with Least Discrete Advice:
A Complexity Theory of Nonuniform Computability}
\author{\vspace*{-1ex}Martin Ziegler\thanks{Supported by \textsf{DFG} grant \texttt{Zi\,1009/1-2}.
The author wishes to thank 
\textsc{Andrej Bauer, Vasco Brattka, Mark Braverman, Peter Hertling},
and \textsc{Arno Pauly} for their helpful advice (pun) during \textsf{CCA2009}
which spurred Sections~\ref{s:Fractional} and \ref{s:Topology}}}
\authorrunning{Martin Ziegler}
\institute{Technische Universit\"{a}t Darmstadt, GERMANY}
\date{}
%no \addcontentsline{toc}{title}{...}  or  \addcontentsline{toc}{author}{...}
\makeatletter
\renewcommand\maketitle{\newpage
  \refstepcounter{chapter}%
  \stepcounter{section}%
  \setcounter{section}{0}%
  \setcounter{subsection}{0}%
  \setcounter{figure}{0}
  \setcounter{table}{0}
  \setcounter{equation}{0}
  \setcounter{footnote}{0}%
  \begingroup
    \parindent=\z@
    \renewcommand\thefootnote{\@fnsymbol\c@footnote}%
    \if@twocolumn
      \ifnum \col@number=\@ne
        \@maketitle
      \else
        \twocolumn[\@maketitle]%
      \fi
    \else
      \newpage
      \global\@topnum\z@   % Prevents figures from going at top of page.
      \@maketitle
    \fi
    \thispagestyle{empty}\@thanks
    \def\\{\unskip\ \ignorespaces}\def\inst##1{\unskip{}}%
    \def\thanks##1{\unskip{}}\def\fnmsep{\unskip}%
    \instindent=\hsize
    \advance\instindent by-\headlineindent
%    \if!\the\toctitle!\addcontentsline{toc}{title}{\@title}\else
%       \addcontentsline{toc}{title}{\the\toctitle}\fi
    \if@runhead
       \if!\the\titlerunning!\else
         \edef\@title{\the\titlerunning}%
       \fi
       \global\setbox\titrun=\hbox{\small\rm\unboldmath\ignorespaces\@title}%
       \ifdim\wd\titrun>\instindent
          \typeout{Title too long for running head. Please supply}%
          \typeout{a shorter form with \string\titlerunning\space prior to
                   \string\maketitle}%
          \global\setbox\titrun=\hbox{\small\rm
          Title Suppressed Due to Excessive Length}%
       \fi
       \xdef\@title{\copy\titrun}%
    \fi
    \if!\the\tocauthor!\relax
      {\def\and{\noexpand\protect\noexpand\and}%
      \protected@xdef\toc@uthor{\@author}}%
    \else
      \def\\{\noexpand\protect\noexpand\newline}%
      \protected@xdef\scratch{\the\tocauthor}%
      \protected@xdef\toc@uthor{\scratch}%
    \fi
%    \addtocontents{toc}{\noexpand\protect\noexpand\authcount{\the\c@auco}}%
%    \addcontentsline{toc}{author}{\toc@uthor}%
    \if@runhead
       \if!\the\authorrunning!
         \value{@inst}=\value{@auth}%
         \setcounter{@auth}{1}%
       \else
         \edef\@author{\the\authorrunning}%
       \fi
       \global\setbox\authrun=\hbox{\small\unboldmath\@author\unskip}%
       \ifdim\wd\authrun>\instindent
          \typeout{Names of authors too long for running head. Please supply}%
          \typeout{a shorter form with \string\authorrunning\space prior to
                   \string\maketitle}%
          \global\setbox\authrun=\hbox{\small\rm
          Authors Suppressed Due to Excessive Length}%
       \fi
       \xdef\@author{\copy\authrun}%
       \markboth{\@author}{\@title}%
     \fi
  \endgroup
  \setcounter{footnote}{\fnnstart}%
  \clearheadinfo}
\makeatother

\maketitle
\def\thefootnote{\fnsymbol{footnote}}
\addtocounter{footnote}{3}
\begin{abstract}
It is folklore particularly in numerical and computer sciences
that, instead of solving some general problem $f:A\to B$, 
additional structural information about the input $x\in A$
(that is any kind of promise that $x$ belongs 
to a certain subset $A'\subseteq A$)
should be taken advantage of. 
Some examples from real number computation 
show that such discrete advice can even make the
difference between computability and uncomputability.
We turn this into a both topological and combinatorial 
complexity theory of information,
investigating for several practical problems
how much advice is necessary and sufficient
to render them computable. 
\\
Specifically, 
finding a nontrivial solution to a homogeneous
linear equation $A\cdot\vec x=0$ for a given singular
real $n\times n$-matrix $A$ is possible when
knowing $\rank(A)\in\{0,1,\ldots,n-1\}$;
and we show this to be best possible.
Similarly, diagonalizing (i.e. finding a basis of eigenvectors of) 
a given real symmetric $n\times n$-matrix $A$ is possible 
when knowing the number of distinct eigenvalues: 
an integer between $1$ and $n$
(the latter corresponding to the nondegenerate case).
And again we show that $n$--fold
(i.e. roughly $\log n$ bits of) 
additional information is indeed necessary
in order to render this problem (continuous and) computable;
whereas for finding \emph{some single} eigenvector of $A$,
providing the truncated binary logarithm of the least-dimensional
eigenspace of $A$---i.e. $(\lfloor\log_2 n\rfloor+1)$-fold
advice---is sufficient and optimal.
\end{abstract}
\begin{minipage}[c]{0.97\textwidth}\vspace*{-8ex}%
\renewcommand{\contentsname}{}
\tableofcontents
\end{minipage}
\pagebreak\addtocounter{page}{-1}%
%%%%%%%%%%%%%%%%%%%%%%%%%%%%%%%%%%%%%%%%%%%%%%%%%%%%%%%%%%%%%%%%%%%%%%
\section{Introduction} \label{s:Intro}
Recursive Analysis, that is Turing's \cite{Turing} theory of 
rational approximations with prescribable error bounds,
is generally considered a very realistic model 
of real number computation \cite{Braverman}.
Much research has been spent in `effectivizing' classical
mathematical theorems, that is replacing 
mere existence claims
\begin{enumerate}
\item[i)] ``{\it for all $x$, there exists some $y$ such that \ldots}''
\qquad with
\item[ii)] ``\it for all \emph{computable} $x$,
there exists some \emph{computable} $y$
such that \ldots''
\end{enumerate}
Cf. e.g. the \textsf{Intermediate Value Theorem}
in classical analysis \mycite{Theorem~6.3.8.1}{Weihrauch}
or the \textsf{Krein-Milman Theorem} from
convex geometry \cite{GeNerode}.
Note that Claim~ii) is non-uniform:
it asserts $y$ to \emph{be} computable
whenever $x$ is;
yet, there may be no way of \emph{converting}
a Turing machine $M$ computing $x$
into a machine $N$ computing $y$
\mycite{Section~9.6}{Weihrauch}.
In fact, computing a function $f:x\mapsto y$ 
is significantly limited by the 
sometimes so-called \textsf{Main Theorem},
requiring that any such $f$ be necessarily continuous:
because finite approximations to the argument $x$
do not allow to determine the value $f(x)$ 
up to absolute error smaller than the `gap'
$\limsup_{t\to x}f(t)-\liminf_{t\to x}f(t)$
in case $x$ is a point of discontinuity of $f$.
In particular any non-constant discrete-valued function on the reals is 
uncomputable---for information-theoretic 
(as opposed to recursion-theoretic) reasons.
Thus, Recursive Analysis is sometimes criticized
as a purely mathematical theory, rendering uncomputable
even functions as simple as Gau\ss' staircase \cite{Koepf}.

%%%%%%%%%%%%%%%%%%%%%%%%%%%%%%%%%%%%%%%
\subsection{Motivating Examples}
On the other hand many applications do provide,
in addition to approximations to the continuous argument $x$,
also certain promise or discrete `advice'; 
e.g. whether $x$ is integral or not.
And such additional information does render many otherwise
uncomputable problems computable:

\begin{myexample} \label{x:Staircase}
The Gau\ss{} staircase is discontinuous, hence uncomputable.
Restricted to integers, however, it is simply the
identity, thus computable.
And restricted to \emph{non-}integers, it is computable as well;
cf. \mycite{Exercise~4.3.2}{Weihrauch}.
Thus, one bit of additional advice (``\emph{integer or not}'')
suffices to make $\lfloor\,\cdot\,\rfloor:\IR\to\IZ$ computable.
\end{myexample}
Also many problems in analysis involving
compact (hence bounded) sets are discontinuous 
unless provided with some integer bound;
compare e.g. \mycite{Section~5.2}{Weihrauch}.
For a more involved illustration from computational linear algebra,
we report from \mycite{Section~3.5}{LA}
the following

\begin{myexample} \label{x:Diag}
Given a real symmetric $d\times d$ matrix $A$
(in form of approximations $A_n\in\IQ^{d\times d}$
with $|A-A_n|\leq2^{-n}$),
it is generally impossible, for lack of continuity
and even in the multivalued sense,
to compute (approximations to) any eigenvector of $A$.
\\
However when providing, in addition to $A$ itself,
the number of distinct eigenvalues
(i.e. \emph{not} counting multiplicities)
of $A$, finding the entire spectral resolution 
(i.e. an orthogonal basis of eigenvectors)
becomes computable.
\end{myexample}
Another case study on the benefit of additional discrete advice
to uniform computability is taken from \mycite{Lemma~2.8}{noCH}:
\begin{myexample} \label{x:Roux}
A closed subset $A\subseteq\IR^d$ is called $\psidg$--computable
if one can, given $\vec x\in\IR^d$, approximate the distance 
\begin{equation} \label{e:distA}
d_A(\vec x) \;= \;\min\big\{\|\vec x-\vec a\|_2 : \vec a\in A\big\} 
\end{equation}
from \emph{below}; more formally:
upon input of a sequence $\vec q_n\in\IQ^d$ with $\|\vec x-\vec q_n\|\leq2^{-n}$,
output a sequence $p_m\in\IQ$ with $\sup_m p_m=d_A(\vec x)$;
compare \mycite{Section~5.1}{Weihrauch}.
Similarly, $\psidl$--computability of $A$ means 
approximation of $d_A$ from \emph{above}.
\begin{enumerate}
\item[a)]
A finite set $A=\{\vec v_1,\ldots,\vec v_N\}\subseteq\IR^d$ is
$\psidl$--computable ~iff~ it is $\psidg$--computable ~iff~
each element $\vec v_i$ is computable.
\item[b)]
Neither of the three non-uniform equivalences in a) holds uniformly.
\item[c)]
However if the cardinality of $A$ is given as additional information,
$\psidl$--computability becomes uniformly equivalent to 
computability of $A$'s members
\item[d)]
whereas $\psidg$--computability still remains uniformly strictly weaker
than the other two.
\end{enumerate}
\end{myexample}
\begin{figure}[htb]
\centerline{\includegraphics[width=0.7\textwidth]{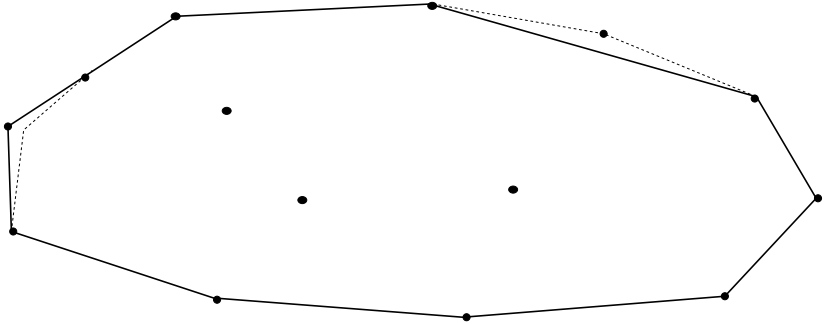}}%
\caption{\label{f:chull}The convex hull of some points in 2D.
Infinitesimal perturbation can heavily affect the (number and)
subset of extreme points.}
\end{figure}
Our next example treats a standard problem from computational geometry
\mycite{Section~1.1}{deBerg}:
\begin{myexample} \label{x:Chull}
For a set $S\subseteq\IR^d$,
its \emph{convex hull} is the least convex set containing $S$:
\[ \chull(S) \;:=\; \bigcap\big\{ C :
S\subseteq C\subseteq\IR^d, C\text{ convex}\big \} \enspace . \]
A \emph{polytope} is the convex hull of finitely many points,
$\chull(\{\vec p_1,\ldots,\vec p_N\})$. 
For a convex set $C$, point $\vec p\in C$ is called 
\emph{extreme} (written ``$\vec p\in\ext(C)$'')
if it does not lie on the interior of
any line segment contained in $C$:
\[ \vec p=\lambda\cdot\vec x+(1-\lambda)\cdot\vec y\;\wedge\;\vec x,\vec y\in C
\;\wedge\;0<\lambda<1 \quad\Rightarrow\quad \vec x=\vec y \enspace . \]
The problem 
\begin{equation} \label{e:Chull}
\extchull_N:
\binom{\IR^d}{N} \;\ni\; \{\vec x_1,\ldots,\vec x_N\}
\;\mapsto\; 
%\ext\big(C:=\chull(\vec x_1,\ldots,\vec x_N)\big):=
\big\{\vec y \text{ extreme point of }\chull(\vec x_1,\ldots,\vec x_N)\big\}
\end{equation}
of identifying the extreme points of the polytope $C$ spanned by 
given $\vec x_1,\ldots,\vec x_N$,
is discontinuous (and hence uncomputable) already in dimension $d=2$
and for $N=3$ with respect to output encoding $\psiG{}$,
cf. Figure~\ref{f:chull}:
\\
Let $\vec x_1:=(0,0)$, $\vec x_2:=(1,0)$, 
and $\vec x_3:=(\tfrac{1}{2},\epsilon)$:
For $\epsilon=0$, these points get mapped to $\{(0,0),(1,0)\}$;
whereas for $\epsilon\neq0$, the set of extreme points
is $\{(0,0),(1,0),(\tfrac{1}{2},\epsilon)\}$.
\end{myexample}
Trivially, $\extchull_N$ does become computable when giving,
in addition to approximations to the points $\vec x_1,\ldots,\vec x_N$,
one bit $b_i\in\{0,1\}$ for each $i=1,\ldots,N$
(that is, totally and in binary an integer between $0$ and $2^N-1$)
indicating whether $\vec x_i\in\ext\chull(\vec x_1,\ldots,\vec x_N)$.
However in Proposition~\ref{p:Chull} below we shall show
that, in order to compute $\extchull_N$, it suffices to 
know merely the \emph{number} $M\in\{2,\ldots,N\}$
of extreme points of $\chull(\vec x_1,\ldots,\vec x_N)$---and 
that $(N-1)$--fold discrete advice is in fact necessary.

%%%%%%%%%%%%%%%%%%%%%%%%%%%%%%%%%%%%%%%%%%%%%%%%%%%%
\subsection{Complexity Measure of Non-Uniform Computability}
We are primarily interested in problems over real 
Euclidean spaces $\IR^d$, $d\in\IN$.
Yet for reasons of general applicability to arbitrary
spaces $U$ of continuum cardinality, we borrow from
Weihrauch's TTE framework \mycite{Section~3}{Weihrauch}
the concept of a so-called \emph{representation},
that is an encoding of all elements $u\in U$ as
infinite binary strings; and a \emph{realizer} of a
function $f:U\to V$ maps encodings of $u\in U$
to encodings of $f(u)\in V$.
A \emph{notation} is basically a representation
of a merely countable set.
Providing \emph{discrete advice} to $f$ amounts to presenting
to the Turing machine, in addition to an infinite binary string 
encoding $u\in U$, some integer (or `colour') $i$; and doing so 
for each $u$, means to color $U$. Now it is natural to 
wonder about the least advice (i.e. the minimum number of colors)
needed:

\begin{definition} \label{d:Nonunif}
\begin{enumerate}
\item[a)]
A function $f:\subseteq A\to B$ between topological spaces
$A$ and $B$ is \emph{$k$-continuous}
if there exists a covering 
(equivalently: a partition) $\Delta$ 
of $\dom(f)=\bigcup_{D\in\Delta} D$
with $\Card(\Delta)=k$
such that $f|_{D}$ is continuous for each $D\in\Delta$.
\\
Call\footnote{We are grateful for having been pointed out
that the \textsf{Continuum Hypothesis} is \emph{not} needed
in order to make this minimum well-defined. Anyway, in the following 
examples it will be at most countable, usually even finite.}
$\mycard(f):=\min\{k:f\text{ is $k$-continuous}\}$
the \emph{cardinal of discontinuity of $f$}.
\item[b)]
A function $f:\subseteq A\to B$ between represented spaces
$(A,\alpha)$ and $(B,\beta)$ is 
\emph{$(\alpha,\beta)$--computable with $k$-wise advice} 
(or simply \emph{$k$-computable} if $\alpha,\beta$ are clear from the context)
if there exists an at most countable partition $\Delta$
of $\Card(\Delta)=k$
and a notation $\delta$ of $\Delta$
such that the mapping
$f_{\Delta}:(a,D)\mapsto f(a)$ is $(\alpha,\delta,\beta)$--computable 
on $\dom(f_{\Delta}):=\{(a,D):a\in D\in\Delta\}$.
\\
Call $\mycomp(f)=\mycomp(f,\alpha,\beta):=
\min\{k:\text{$f$ is $(\alpha,\beta)$--computable with $k$-wise advise}\}$
the \emph{complexity of non-uniform $(\alpha,\beta)$--computability of $f$}.
\item[c)]
A function $f:\subseteq A\to B$ is
\emph{nonuniformly} $(\alpha,\beta)$--computable
if, for every $\alpha$--computable $a\in\dom(f)$, 
$f(a)$ is $\beta$-computable.
\end{enumerate}
\end{definition}
So continuous functions are exactly the
$1$-continuous ones; and computability
is equivalent to computability with $1$-wise advice.
Also we have, as an extension of the \textsf{Main Theorem} 
of Recursive Analysis, the following immediate

\begin{observation} \label{o:MainThm2}
If $\alpha,\beta$ are \emph{admissible} representations
in the sense of \mycite{Definition~3.2.7}{Weihrauch},
then every $k$-wise $(\alpha,\beta)$--computable function 
is $k$-continuous (but not vice versa);
that is $\mycard(f)\leq\mycomp(f)$ holds.\\
More precisely, every $k$-wise $(\alpha,\beta)$--computable 
possibly multivalued function 
$f:\subseteq A\toto B$ has a $k$-continuous
$(\alpha,\beta)$--realizer in the sense of
\mycite{Definition~3.1.3.4}{Weihrauch}.
\end{observation}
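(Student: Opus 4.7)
The plan is to lift the classical Main Theorem of Recursive Analysis (the case $k=1$) pointwise across the partition witnessing $k$-wise computability. Suppose $f$ is $k$-wise $(\alpha,\beta)$--computable, witnessed by a partition $\Delta$ of $\dom(f)$ of cardinality $k$, a notation $\delta$ of $\Delta$, and a Type-2 machine computing a realizer $F$ of $f_\Delta:(a,D)\mapsto f(a)$. For each piece $D\in\Delta$ I pick one $\delta$-name $w_D$ of $D$ (which exists by the very definition of a notation) and consider the partial map $F_D(p):=F(p,w_D)$ defined on those $\alpha$-names $p$ for which $\alpha(p)\in D$.

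Being a hard-wired restriction of the Type-2 computable $F$, each $F_D$ is itself computable, hence continuous as a partial function $\subseteq\Alphabet^\omega\to\Alphabet^\omega$; moreover $F_D$ realizes $f|_D$, in the sense that $\beta\circ F_D(p)\in f(\alpha(p))$ whenever defined. By the universal property of admissible representations \mycite{Theorem~3.2.11}{Weihrauch}, a continuous realizer between admissibly represented spaces induces a continuous (possibly multi-valued) function; hence $f|_D$ is continuous for every $D\in\Delta$. The partition $\Delta$ therefore also witnesses $k$-continuity of $f$, giving $\mycard(f)\le\mycomp(f)$.

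For the sharper ``more precisely'' assertion I glue the $F_D$ back together: the partial map $G(p):=F(p,w_{D(p)})$, where $D(p)\in\Delta$ denotes the unique piece containing $\alpha(p)$, is a realizer of $f$ whose domain decomposes as $\bigcup_{D\in\Delta}\alpha^{-1}(D)$ into $k$ pieces, on each of which $G$ coincides with the continuous $F_D$; that is precisely a $k$-continuous realizer. The parenthetical ``but not vice versa'' is inherited from the classical $k=1$ case, witnessed e.g.\ by the continuous map $x\mapsto x+c$ with uncomputable $c\in\IR$.

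I expect no real obstacle: the argument is essentially routine once one notices that hard-wiring the discrete $\delta$-name $w_D$ turns the two-input computable $F$ into a one-input continuous realizer of $f|_D$. The only conceptual subtlety worth flagging is that the selector $p\mapsto D(p)$ used in the gluing is itself generally \emph{not} continuous---which is precisely the reason the discrete advice is needed in the first place, and the reason why $k$-continuity will remain genuinely weaker than $k$-computability rather than collapsing to an equivalence.
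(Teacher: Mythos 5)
Your proof is correct, and since the paper states Observation~\ref{o:MainThm2} as ``immediate'' without providing an argument, what you have written out is precisely the canonical justification the paper leaves to the reader: hard-wire each discrete advice value $w_D$ to obtain a computable (hence continuous) realizer $F_D$ of $f|_D$, invoke admissibility (Kreitz--Weihrauch) piecewise, and glue the $F_D$ along the preimage partition $\{\alpha^{-1}(D)\}_{D\in\Delta}$ to get the $k$-continuous realizer $G$. One small caveat worth flagging: your appeal to \mycite{Theorem~3.2.11}{Weihrauch} is stated as covering the ``possibly multi-valued'' case, but that theorem is formulated for single-valued functions; this is harmless here because the first clause of the Observation (where admissibility is used) concerns single-valued $f$, while the ``more precisely'' clause about multivalued $f$ only asserts the existence of a $k$-continuous \emph{realizer} $G$, for which your gluing argument needs no admissibility at all.
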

The above examples illustrate some interesting
discontinuous functions to be computable with
$k$-wise advice for some $k\in\IN$.
Specifically Example~\ref{x:Diag},
diagonalization of real symmetric $n\times n$--matrices
is $n$-computable; and Theorem~\ref{t:Diag} below
will show this value $n$ to be optimal.

\begin{myremark}
We advertise \textsf{Computability with Finite Advice} as
a generalization of classical Recursive Analysis:
\begin{enumerate}
\item[a)] It captures the concept of a hybrid approach 
  to discrete\&continuous computation.
\item[b)] It complements Type-2 oracle computation: \\
  In the discrete realm, \emph{every} function $f:\IN\to\IN$
  becomes computable
  when employing an appropriate oracle; whereas in the
  Type-2 case, exactly the \emph{continuous} functions $f:\IR\to\IR$
  are computable relative to some oracle
  \mycite{Corollary~6}{CIE}. On the other hand,
  2-wise advice can make a continuous function computable
  which without advice has unbounded degree of uncomputability;
  see Proposition~\ref{p:Comparison}d).
\item[c)]
  Discrete advice avoids a common major point of criticism 
  against Recursive Analysis, namely that it denounces
  even simplest discontinuous functions as uncomputable;
\item[d)]
  and such kind of advice is very practical: 
  In applications
  additional discrete information about the input 
  is often actually available and should be used.
  For instance a given real matrix may be known 
  to be non-degenerate (as is often exploited in numerics)
  or, slightly more generally, to have 
  $k$ eigenvalues coincide for some known $k\in\IN$.
\end{enumerate}
\end{myremark}

%%%%%%%%%%%%%%%%%%%%%%%%%%%%%%%%%%%%%%%%%%%%%%%%%%%%
\subsection{Related Work, in particular Kolmogorov Complexity} \label{s:Related}
Definition~\ref{d:Nonunif} comes from \mycite{Definition~3.3}{Weihrauch92};
see also \mycite{Definition~5.8}{Pauly} where our quantity
$\mycard(f)$ it is called ``\emph{basesize}. 
Providing discrete advice can also be considered as yet another instance of 
\textsf{enrichment} in mathematics \cite[p.238/239]{KreiselMacIntyre}.

Various other approaches have been pursued in the literature in order to make
discontinuous functions accessible to nontrivial computability investigations.
\begin{description}
\item[\sf Exact Geometric Computation] 
  considers the arguments $\vec x$ as exact rational numbers \cite{EGC}.
\item[\sf Special encodings of discontinuous functions]
  motivated by spaces in Functional Analysis,
  are treated e.g. in \cite{Ning}; however these
  do not admit evaluation. 
\item[\sf Weakened notions of computability]
  may refer to stronger models of computation \cite{Hotz};
  provide more information on (e.g. the binary encoding of, rather 
  than rational approximations with error bounds to) the argument $x$ \cite{Mori,Fine};
%XXX Fine = \myrhob mit eind. Binaerkod; \cite[\textsc{Exercise~4.2.11}e]{Weihrauch}, 
%Proposition~\ref{p:Rational}; siehe auch [Mori02] Proposition 3.3 und 4.1. quant. Vergleich?
  or expect less information on (e.g. no error bounds for approximations to)
  the value $f(x)$ \cite{SemiTCS}.
\item[\sf A taxonomy of discontinuous functions,]
  namely their \emph{degrees} of Borel measurability,
  is investigated in \cite{EffBorel,ToCS,CCA06}: \\
Specifically, a function $f:\subseteq A\to B$ is continuous (=$\Sigma_1$--measurable) ~iff,
for every closed $T\subseteq B$, its preimage $f^{-1}[T]$ is closed in $\dom(f)\subseteq A$;
and $f$ is computable iff this mapping $T\mapsto f^{-1}[T]$ on closed sets
is $(\psidg,\psidg)$--computable.
A degree relaxation, $f$ is called $\Sigma_2$--measurable ~iff,
for every closed $T\subseteq B$, $f^{-1}[T]$ is an $\text{F}_{\delta}$-set.
\item[\sf \emph{Wadge degrees} of discontinuity]
  are an (immense) refinement of the above,
  namely with respect to so-called \emph{Wadge reducibility};
  cf. e.g. \mycite{Section~8.2}{Weihrauch}.
\item[\sf \emph{Levels} of discontinuity]
  are studied in \cite{HertlingCCCG,HertlingDiss,Hertling}: \\
  Take the set $\LEV'(f,1)\subseteq\dom(f)=:\LEV'(f,0)$ 
  of points of discontinuity of $f=f|_{\LEV'(f,0)}$;
  then the set $\LEV'(f,2)\subseteq\LEV'(f,1)$ 
  of points of discontinuity of $f|_{\LEV'(f,1)}$
  and so on: the least index $k$ for which $\LEV'(f,k)=\emptyset$ holds
  is $f$'s level of discontinuity $\Lev'(f)$. \\
  A variant, $\Lev(f)$, considers $\LEV(f,1)$
  the closure of $\LEV'(f,1)$ in $\dom(f)$, 
  then $\LEV(f,2)$ the closure of points of discontinuity of $f|_{\LEV(f,1)}$
  and so on until $\LEV(f,k)=\emptyset$.
\end{description}
Our approach superficially resembles the third and last ones above.
A minor difference, they correspond to \emph{ordinal} measures
whereas the size of the partition considered in Definition~\ref{d:Nonunif}
is a \emph{cardinal}. 
As a major difference we now establish these measures 
as logically largely independent.
\begin{proposition} \label{p:Comparison}
\begin{enumerate}
\item[a)] 
There exists a $2$-computable function
$f:[0,1]\to\{0,1\}$ which is not measurable
nor on any level of discontinuity.
\item[b)]
There exists a $\Delta_2$--measurable function $f:[0,1]\to[0,1]$
with is not $k$-continuous for any finite $k$.
\item[c)]
If $f$ is on the $k$-th level of discontinuity,
it is $k$-continuous; in formula: 
$\mycard(f)\leq\Lev'(f)\leq\Lev(f)$.
\item[d)]
There exists a continuous, $2$-computable
function $f:\subseteq[0,1]\to[0,1]$
which is not computable,
even relative to any prescribed oracle.
\item[e)]
Every $k$-computable function is nonuniformly computable;
whereas there are nonuniformly computable functions
not $k$-computable for any $k\in\IN$.
\item[f)]
There even exists a nonuniformly computable $f:\IR\to\IR$
with $\mycard(f)=\frakc$, the cardinality of the continuum.
\end{enumerate}
\end{proposition}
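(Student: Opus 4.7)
The six claims decompose naturally into four groups of effort, with \textbf{(d)} the genuine obstacle.

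\textbf{(a)}. Pick any non-measurable $V\subseteq[0,1]$ (a Vitali set, say) and let $f:=\cf{V}$. The partition $\Delta:=\{V,[0,1]\setminus V\}$ with any two-symbol notation witnesses $2$-computability since $f$ is constant on each piece, and non-measurability of $f$ is inherited from $V$. Density and codensity of $V$ force every point of $[0,1]$ to be a discontinuity of $f$, so $\LEV'(f,1)=[0,1]=\LEV'(f,0)$ and the iteration stabilises without ever exhausting $\dom(f)$; the same holds for $\LEV$, placing $f$ on no level of discontinuity.

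\textbf{(c)}. The partition $E_i:=\LEV'(f,i)\setminus\LEV'(f,i+1)$, $i=0,\ldots,k-1$, witnesses $k$-continuity: by definition $E_i$ is the set of continuity points of $f|_{\LEV'(f,i)}$, and passing to the finer subspace topology of $E_i\subseteq\LEV'(f,i)$ only strengthens continuity. Hence $\mycard(f)\le k=\Lev'(f)$. The inequality $\Lev'(f)\le\Lev(f)$ follows from $\LEV'(f,i)\subseteq\LEV(f,i)$ by a routine induction on $i$ (the closure step in $\LEV$ only adds points).

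\textbf{(b),(e),(f)}. For (b) I take a countable closed scattered $S\subseteq[0,1]$ of Cantor--Bendixson rank $\omega+1$, built so that every rank-$(j{+}1)$ point is an ambient limit of rank-$j$ points (the standard nested $\{0\}\cup\{1/m\}\cup\{1/m+1/(mk)\}\cup\cdots$ stacked to depth $\omega$); set $f(x):=\rank_S(x)$ on $S$ and $f:=0$ off $S$. Each preimage $f^{-1}(V)$ of open $V\subseteq\IR$ reduces to a countable Boolean combination of the closed sets $S^{(n)}$ and is verified to be $\Delta_2^0$. Any finite partition $\{D_1,\ldots,D_k\}$ must, by pigeonhole on the $n{+}1$ ranks present in each $S_n$ for $n\ge k$, place some rank-$(j{+}1)$ point of $S_n$ in the same piece as arbitrarily close rank-$j$ points, forcing discontinuity there. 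The forward direction of (e) is immediate: the $\delta$-name of the piece containing a computable $a\in\dom(f)$ is a finite string, so the $k$-advice algorithm runs on two computable inputs and returns a computable $\beta$-name of $f(a)$. For the converse of (e) and for (f), I invoke the transfinite AC-construction (due to \Sierpinski) of a function $g\colon\IR\to\IR$ whose restriction $g|_A$ is continuous only when $|A|<\frakc$; modifying the recursion to enforce $g(x):=0$ on the countable set of computable reals preserves thinness while making $g$ trivially nonuniformly computable. Any partition of $\IR$ into $<\frakc$ pieces leaves a piece of cardinality $\frakc$ on which $g$ is discontinuous, so $\mycard(g)=\frakc$; Observation~\ref{o:MainThm2} then rules out $k$-computability for every $k<\frakc$, yielding both (f) and the converse of (e).

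\textbf{(d)}, the main obstacle. The delicate point is that a topologically continuous $f$ on a Polish $\dom(f)\subseteq[0,1]$ is determined by its values on a dense countable subset, hence automatically computable relative to the oracle encoding those values. To evade every Turing oracle one must leave the Polish regime: my plan is to take $\dom(f)\subseteq[0,1]$ an uncountable Bernstein-type set with a $2$-partition $D_1\sqcup D_2$ chosen by transfinite recursion so as to defeat every oracle-relative $\{0,1\}$-valued decision procedure on $\dom(f)$, and then to set $f\equiv i-1$ on $D_i$. The resulting $f$ is locally constant on $\dom(f)$ (hence continuous in the subspace topology) and trivially $2$-computable under the notation that names each $D_i$. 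The diagonalisation succeeds because there are $2^\frakc$ candidate partitions of an uncountable $\dom(f)$ versus only $\frakc$ oracle-computable procedures across all oracles; the technical step is verifying that the topological separation of $D_1,D_2$ within $\dom(f)$ genuinely survives the AC-selection.
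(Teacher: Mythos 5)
For parts (a), (c), (e), and (f) your proof follows essentially the paper's own route (indeed the \Sierpinski--Zygmund construction modified on $\IRc$ for (e)/(f) is exactly what the paper does), and for (b) your scattered-set construction with $f$ given by Cantor--Bendixson rank is a reasonable variant of the paper's Example~\ref{x:Borel}b), though the pigeonhole argument for non-$k$-continuity is hand-waved and would need the flag machinery of Lemma~\ref{l:Flag} (or a worked-out transfinite refinement) to actually close. Small cosmetic points: in (a) a bare Vitali set need not be dense and codense, so you must arrange that (as does the paper, which is similarly terse on this point); in (b) the rank function must be rescaled into $[0,1]$ (e.g.\ $j\mapsto 1/(j+1)$, $\omega\mapsto 0$) as the paper does.

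The genuine gap is in (d). You read the claim as demanding a \emph{single} continuous $f$ that is not computable relative to \emph{any} oracle, and you correctly observe that this is impossible on a Polish domain -- but you fail to notice that it is impossible on \emph{every} domain: by the Kreitz--Weihrauch Representation Theorem (\mycite{Theorem~3.2.11}{Weihrauch}) any continuous $f$ with admissible $\alpha,\beta$ admits a Cantor-continuous realizer $F$, and every Cantor-continuous $F:\subseteq\Sigma^\omega\to\Sigma^\omega$ is computable relative to some oracle (the paper cites exactly this as \mycite{Corollary~6}{CIE}); passing to a Bernstein-type non-Polish $\dom(f)$ does not escape this. So the strong reading of (d) is literally unsatisfiable, and the Bernstein/diagonalisation plan -- which is anyway left unfinished at exactly the "technical step" of making the partition clopen -- cannot succeed. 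The statement instead means: \emph{given} any prescribed oracle, there exists such an $f$ not computable relative to \emph{that} oracle. Under this reading the construction is elementary and is the paper's: fix $t\in[0,1]$ uncomputable relative to the prescribed oracle, set $f(x):=0$ on $[0,t)$, $f(x):=1$ on $(t,1]$, and leave $t$ out of $\dom(f)$; then $f$ is continuous (the jump lies outside the domain) and $2$-computable on $\{[0,t),(t,1]\}$, yet relative computability of $f$ would let one bisect for $t$.
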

Any real function is trivially $\frakc$-continuous by partitioning
its domain into singletons.
Item~f) is due to \person{Andrej Bauer}, personal communication.
Item~c) appears also in \mycite{Theorem~5.10}{Pauly}.
The last paragraph of \mycite{Section~5.1}{Pauly} includes
our Item~e) and partly extends Item~a) by exhibiting, 
to any ordinal $\lambda$ and cardinal $\beta\leq\lambda$, 
a function $f:\subseteq\IN^\lambda\to\beta$
with $\mycard(f)=\beta$ and $\Lev(f)=\lambda$.
Complementing Item~e),
conditions where nonuniform computability does imply
(even) $1$-computability 
have been devised in \cite{Invariance}.
\begin{proof}[Proposition~\ref{p:Comparison}]
\begin{enumerate}
\item[a)]
Consider a non Borel-measurable subset $S\subseteq[0,1]$;
e.g. exceeding the Borel hierarchy \cite{Hinman,Moschovakis}
by being complete for $\Delta_1^1$.
(Using the \textsf{Axiom of Choice}, $S$ can even
be chosen as non Lebesgue-measurable.)
Then its characteristic function $\cf{S}$ is not
measurable and totally discontinuous, hence
$[0,1]=\LEV'(\cf{S},1)=\LEV'(\cf{S},2)=\ldots$;
whereas $(S,[0,1]\setminus S)$ gives a
2-decomposition of $\dom(\cf{S})$ 
with $\cf{S}|_S\equiv 1$ and $\cf{S}|_{[0,1]\setminus S}\equiv0$.
\item[b)] See Example~\ref{x:Borel}b) below.
\item[c)]
By definition, $f$ is continuous
on $\LEV'(f,0)\setminus\LEV'(f,1)$, on $\LEV'(f,1)\setminus\LEV'(f,2)$, 
and so on---until $\LEV'(f,k-1)$ on which $f$ is continuous 
because $\LEV'(f,k)=\emptyset$.
Therefore $\Delta=\big(\LEV'(f,0)\setminus\LEV'(f,1),\LEV'(f,1)\setminus\LEV'(f,2),
\ldots,\LEV'(f,k-1)\big)$
constitutes a partition with the desired properties.
\item[d)]
Fix any uncomputable $t\in[0,1]$ and consider
\[ f:\subseteq[0,1]\to[0,1], \qquad
 f(x):=0\text{ for }x<t, \quad
 f(x):=1\text{ for }x>t, \quad f(t):=\bot \]
which is obviously continuous (because the
`jump' $x=t$ is not part of $\dom(f)$)
and 2-computable (namely on $[0,t)$ and $(t,1]$).
Since $t$ is uncomputable, $t\not\in\IQ$.
So if $f$ were computable, 
we could evaluate it at any $x\in\IQ$
to conclude whether $x<t$ or $x>t$;
and apply bisection to compute $t$ itself:
contradiction.
In fact we may choose $t$ uncomputable 
relative to any prescribed oracle
\cite{Xizhong,Barmpalias}.
\item[e)]
Let $f\big|_{D}$ be computable on each $D\in\Delta$.
Then $f(x)$ is computable for each computable $x\in D$;
hence also for each computable $x\in\dom(f)=\bigcup\Delta$.
\\
Example~\ref{x:Borel}b) below has range
$\{0\}\cup\{1/k:k\in\IN\}$ consisting of
computable (even rational) numbers only.
\item[f)]
Consider a \textsf{\Sierpinski-Zygmund Function} 
\mycite{Theorem~5.2}{Strange} $f:\IR\to\IR$,
i.e. such that $f|_D$ is discontinuous 
for any $D\subseteq\dom(f)$ of $\Card(D)=\frakc$.
Observe that this property is not affected by arbitrary modifications
of $f$ on any subset $X\subseteq\dom(f)$ of $\Card(X)<\frakc$:
If the restriction $f|_{\dom(f)\setminus X}$ is continuous on 
$D\setminus X$ for some $D\subseteq\dom(f)$ of $\Card(D)=\frakc$,
then so is $f$ on $D\setminus X$---contradicting $\Card(D\setminus X)=\frakc$.
\\
We may therefore modify the original function to be, say, identically 0 
on the countable subset $X:=\IRc$ of recursive reals,
thus rendering nonuniformly computable.
Now suppose $\Delta$ is any partition of $\IR$ of $\Card(\Delta)<\frakc$.
Then, by \mycite{Exercise~7.13}{CoriLascar},
\[ \frakc
\quad=\quad \Card(\IR)
\quad=\quad \sum\nolimits_{D\in\Delta}\Card(D)
\quad=\quad \max\big(\Card(\Delta),\sup\nolimits_{D\in\Delta}\Card(D)\big) \]
requires $\Card(D)=\frakc$ for some $D\in\Delta$;
but $f|_D$ is discontinuous, hence $\mycard(f)\geq\frakc$.
\qed\end{enumerate}\end{proof}
Further related research includes
\begin{description}
\item[\sf \emph{Computational} Complexity] of real functions;
  see e.g. \cite{Ko} and \mycite{Section~7}{Weihrauch}.
Note, however, that Definition~\ref{d:Nonunif} refers to a purely
\emph{information}-theoretic notion of complexity of a function
and is therefore more in the spirit of
\item[\sf Information-based Complexity] in the sense of \cite{Wozniakowski}.
 There, on the other hand, inputs are considered as real number entities
 given exactly; whereas we consider approximations to real inputs
 enhanced with discrete advice.
\item[\sf Finite Continuity] is being studied for \emph{Darboux Functions}
  in \cite{Pawlak,Marciniak}. It amounts to $d$-continuity
  for some $d\in\IN$ according to Definition~\ref{d:Nonunif}a).
\item[\sf Kolmogorov Complexity] has been investigated for finite strings
 and, asymptotically, for infinite ones; cf. e.g. \mycite{Section~2.5}{Vitanyi}
 and \cite{Staiger}.
 Also a kind of advice is part of that theory in form of
  \emph{conditional} complexity \mycite{Definition~2.1.2}{Vitanyi}.
\end{description}
We quote from \cite[\textsc{Exercise}~2.3.4abc]{Vitanyi}
the following

\begin{fact} \label{f:Kolmogorov}
An infinite string $\bar\sigma=(\sigma_n)_{n\in\omega}\in\Sigma^\omega$
is computable (e.g. printed onto a one-way output tape by some
so-called Type-2 or monotone machine; cf. {\rm\cite{Weihrauch,Schmidhuber}})
\begin{enumerate}
\item[a)]
iff~ its initial segments 
$\bar\sigma_{1:n}:=(\sigma_1,\ldots\sigma_n)$ have
Kolmogorov complexity $\leq\calO(1)$ \emph{conditionally to $n$},
i.e., iff $\Kol\big(\bar\sigma_{1:n}|n\big)$
is bounded by some $c=c(\bar\sigma)\in\IN$ independent of $n$.
\item[b)]
Equivalently: the \emph{uniform} complexity
$\Kolu(\bar\sigma_{1:n})
:=\Kol(\bar\sigma_{1:n};n)$ in the sense of
\mycite{Exercise~2.3.3}{Vitanyi}
is bounded by some $c$ for infinitely many $n$.
\end{enumerate}
\end{fact}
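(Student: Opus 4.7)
The plan is to establish the two easy forward implications ``$\bar\sigma$ computable $\Rightarrow$ (a)'' and ``$\bar\sigma$ computable $\Rightarrow$ (b)'' in parallel, then ``(b) $\Rightarrow \bar\sigma$ computable'' via a pigeonhole argument, and finally the main obstacle ``(a) $\Rightarrow \bar\sigma$ computable''.

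First, if a one-way Turing machine $M$ prints $\bar\sigma$, the program $\pi$ which simulates $M$ and halts after $m$ output symbols have been emitted has description length $|M|+\calO(1)$ independent of $m$. Fed the conditional input $n$, $\pi$ produces $\bar\sigma_{1:n}$, witnessing $\Kol(\bar\sigma_{1:n}|n)\leq\calO(1)$; since the \emph{same} $\pi$ works uniformly for every input $m$, also $\Kolu(\bar\sigma_{1:n})\leq\calO(1)$ holds for all $n$, and in particular infinitely often. Hence both (a) and (b) follow from computability.

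Next, for ``(b) $\Rightarrow \bar\sigma$ computable'' I would exploit the finiteness of the program space. Along the infinite sequence $n_1<n_2<\cdots$ witnessing $\Kolu(\bar\sigma_{1:n_k})\leq c$, pick length-$\leq c$ witness programs $p_k$ satisfying $U(p_k,m)=\bar\sigma_{1:m}$ for every $m\leq n_k$. Since there are at most $2^{c+1}$ such programs, by pigeonhole some $p^*$ appears as $p_k$ for infinitely many $k$; so for any $m\in\IN$, choosing such $k$ with $n_k\geq m$ gives $U(p^*,m)=\bar\sigma_{1:m}$, and $p^*$ computes $\bar\sigma$.

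The hard part is ``(a) $\Rightarrow\bar\sigma$ computable'', where the programs of length $\leq c$ witnessing $\Kol(\bar\sigma_{1:n}|n)\leq c$ may depend on $n$, so naive pigeonhole fails. My plan is to set $K:=2^{c+1}$ and to organize the uniformly c.e.\ sets $S_n:=\{\tau\in\Sigma^n : \Kol(\tau|n)\leq c\}$, each of cardinality at most $K$, into the tree
\[ T \;:=\; \bigl\{\tau\in\Sigma^* : \tau_{1:k}\in S_k \text{ for every } k\leq|\tau|\bigr\}, \]
which is uniformly c.e., has level-size at most $K$, and contains $\bar\sigma$ as an infinite branch. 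By K\"onig's Lemma combined with the level bound, $T$ has only $m\leq K$ distinct infinite branches, pairwise distinct from some finite level $N^*$ onward. The main remaining obstacle is to show that each such branch is \emph{individually} computable, since the predicate ``$\tau$ lies on an infinite branch of $T$'' is a priori $\Pi^0_2$. The plan is to exploit the level bound by enumerating $T$ in stages: at stage $s$, a length-$n$ string in the current approximation either persists in admitting extensions to level $s$ (if it lies on an infinite branch) or eventually loses all such extensions (if its sub-tree has finite depth). Because there are only $m\leq K$ infinite branches, this dichotomy stabilizes effectively at each fixed level $n\geq N^*$, at which point the $m$ branches become uniquely identifiable by their $N^*$-prefix; each is then read off by a computable procedure. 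Since $\bar\sigma$ is one of these finitely many computable branches, $\bar\sigma$ itself is computable.
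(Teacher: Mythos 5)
Your treatment of the easy implications and of claim~b) coincides with the paper's own proof: the paper also establishes only b), via the same pigeonhole over the $\calO(1)^c$ machines of size $\leq c$, while for claim~a) it merely cites \mycite{Exercise~2.3.4}{Vitanyi} and \cite{Loveland}. So the portion of your proposal devoted to a) is extra work that the paper does not carry out, and this is where a genuine gap appears — at exactly the spot you yourself flag as ``the main remaining obstacle.'' Reformulating a) as the assertion that every infinite branch of a c.e.\ tree $T$ of level-width $\leq K$ is computable is a faithful reduction (the two statements are in fact equivalent, since such a tree can be coded back into $\Kol(\cdot\,|\,n)\leq\log K+\calO(1)$), but that assertion \emph{is} Loveland's theorem, and your justification of it (``this dichotomy stabilizes effectively at each fixed level'') is not a proof. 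The bound on the width gives no effective criterion at a fixed level $n$ to separate $\bar\sigma_{1:n}$ from a dead node whose finite maximal extension length is not computable from $n$: one cannot wait out a dead sibling because one never learns that its supply of extensions in the c.e.\ enumeration of $T$ has been exhausted, and since $T$ is being enumerated (rather than co-enumerated, as for a $\Pi^0_1$ class), the standard ``all competitors get pruned'' argument for isolated paths is unavailable. This is precisely the $\Pi^0_2$ difficulty you name, and the proposal does not overcome it; the argument circles back to the theorem it is meant to prove. Loveland's actual proof needs further ideas — the paper itself remarks that it is ``too non-uniform'' — so for a) you should either cite the reference as the paper does or supply the missing combinatorial argument.
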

Recall that $\Kol(\bar\sigma_{1:n};n)$ is defined 
as the least size of a program computing any
(not necessarily proper) \emph{extension} of the function 
$\{1,\ldots,n\}\ni i\mapsto\sigma_i$
\mycite{Exercise~2.1.12}{Vitanyi};
i.e. in contrast to $\Kol(\bar\sigma_{1:n}|n)$,
only lower bounds $i$ to $n$ are provided.

\begin{proof}[Claim~b]
If $\bar\sigma$ is computable by some machine $M$,
then obviously a minor (and constant size) modification $M'$
of it will, given $n\in\IN$, print $\bar\sigma_{1:n}$.
Hence $\Kolu\big(\bar\sigma_{1:n}\big)\leq |\langle M'\rangle|$.
\\
Concerning the converse implication, % for b),
observe that there are only $\calO(1)^c$ machines
of size $\leq c$. And for each of the infinitely many $n$,
at least one of them prints all initial segments of length
up to $n$. Hence by pigeonhole principle,
a single one of them does so for infinitely many $n$.
Which implies it does so even for all $n$.
\qed\end{proof}
\begin{definition} \label{d:Kolmogorov}
\begin{enumerate}
\item[a)]
For $\bar\sigma\in\Sigma^\omega$,
write $\Kol(\bar\sigma):=\sup_n \Kol\big(\bar\sigma_{1:n}|n\big)$
and $\Kol(\bar\sigma|\bar\tau):=\sup_n \Kol\big(\bar\sigma_{1:n}|n,\bar\tau\big)$,
where the Kolmogorov complexity conditional to an \emph{in}finite string
is defined literally as for a finite one
\mycite{Definition~2.1.1}{Vitanyi}.
\item[b)]
Similarly, let
$\Kolu(\bar\sigma|\bar\tau):=\sup_n \Kolu\big(\bar\sigma_{1:n}|\bar\tau\big)$.
\item[c)]
For a represented space $(A,\alpha)$ and $a\in A$,
write $\Kol(a):=\inf\{\Kol(\bar\sigma):\alpha(\bar\sigma)=a\}$
and $\Kolu(a):=\inf\{\Kolu(\bar\sigma):\alpha(\bar\sigma)=a\}$.
\end{enumerate}
\end{definition}
Note that we purposely do not consider some \emph{normalized} 
form like $\Kol(\bar\sigma_{1:n}|n)\pmb{/n}$
in order to establish the following

\begin{proposition} \label{p:Kolmogorov}
A function $F:\subseteq\Sigma^\omega\to\Sigma^\omega$
is computable with finite advice 
~iff~ 
the Kolmogorov complexity 
$\Kolu\big(F(\bar\sigma)|\bar\sigma\big)$ is 
bounded by some $c$
independent of $\bar\sigma\in\dom(F)$.
\end{proposition}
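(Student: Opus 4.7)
The plan is to prove both implications by unfolding the definitions and exploiting the bijective correspondence between \emph{advice indices} $i\in\{1,\ldots,k\}$ and constant-length hard-coded \emph{program suffixes}. In each direction the translation is essentially syntactic; no deep structural result is needed.

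For the forward direction, assume $F$ is $k$-computable for some finite $k$, witnessed by a partition $\Delta=\{D_1,\ldots,D_k\}$ of $\dom(F)$, a notation $\delta:\{1,\ldots,k\}\to\Delta$, and a Turing machine $M$ that computes the advised map $F_{\Delta}:(\bar\sigma,i)\mapsto F(\bar\sigma)$ on input $\bar\sigma\in D_{\delta(i)}$. Fix any $\bar\sigma\in\dom(F)$ and let $i$ be the unique index with $\bar\sigma\in D_{\delta(i)}$. Hardcoding this integer $i$ into $M$ yields a single program $M_i$ of size $|M|+\lceil\log_2 k\rceil+\calO(1)$ which, given $\bar\sigma$ as oracle and any $n\in\IN$ as input, simulates $M$ long enough to print an extension of $F(\bar\sigma)_{1:n}$ onto the output tape. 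This establishes $\Kolu\bigl(F(\bar\sigma)|\bar\sigma\bigr)\leq c:=|M|+\lceil\log_2 k\rceil+\calO(1)$, a constant independent of $\bar\sigma$.

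For the reverse direction, suppose $\Kolu\bigl(F(\bar\sigma)|\bar\sigma\bigr)\leq c$ uniformly in $\bar\sigma\in\dom(F)$. Enumerate the finitely many programs $p_1,\ldots,p_N$ of size $\leq c$, where $N\leq\calO(2^c)$. By Definition~\ref{d:Kolmogorov}a), the bound $\sup_n\Kolu(F(\bar\sigma)_{1:n}|\bar\sigma)\leq c$ means that for every $\bar\sigma$ there exists some index $j\leq N$ such that $p_j$, executed with oracle $\bar\sigma$ on input $n$, outputs an extension of $F(\bar\sigma)_{1:n}$ for \emph{every} $n$ simultaneously; this is exactly the content packaged into $\Kolu$ (and the same pigeonhole argument as in the proof of Fact~\ref{f:Kolmogorov}b) applies if one wishes to reduce an ``infinitely many $n$'' hypothesis to ``all $n$''). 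Let $D_j:=\{\bar\sigma\in\dom(F): j\text{ is the least such valid index}\}$. The $D_j$'s form a partition $\Delta$ of $\dom(F)$ of cardinality at most $N$, equipped with the natural notation $\delta:j\mapsto D_j$. The map $(\bar\sigma,j)\mapsto F(\bar\sigma)$ is computable by simply running $p_j$ with oracle $\bar\sigma$ on successive inputs $n=1,2,\ldots$ and truncating each output to the appropriate length in order to read off the initial segments of $F(\bar\sigma)$. Hence $F$ is $N$-computable, and in particular computable with finite advice.

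The only delicate point is the distinction between the uniform measure $\Kolu$ (where a single program must produce extensions upon being given only a \emph{lower bound} $n$ on the desired length) and the conditional variant $\Kol(\cdot|n)$; this matters in the reverse direction, since we need one fixed program $p_j$ that witnesses the complexity bound for all $n$ at once, in order to hand off a single oracle computation to the advised machine. Fortunately this is precisely what the definition $\Kolu(\bar\sigma|\bar\tau)=\sup_n\Kolu(\bar\sigma_{1:n}|\bar\tau)$ guarantees, so no additional argument beyond the one already used to prove Fact~\ref{f:Kolmogorov}b) is required.
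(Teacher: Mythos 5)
Your proof is correct and follows essentially the same route as the paper's: in the forward direction you bound $\Kolu(F(\bar\sigma)|\bar\sigma)$ by the (constant) size of a program obtained by hardcoding the advice index into the advised machine, and in the reverse direction you enumerate the finitely many programs of size $\leq c$, invoke the pigeonhole/uniform-complexity argument of Fact~\ref{f:Kolmogorov}b) to select for each $\bar\sigma$ a single program producing all of $F(\bar\sigma)$, and partition $\dom(F)$ accordingly. The only cosmetic slips are the citation to Definition~\ref{d:Kolmogorov}a) where b) is meant, and a slightly compressed phrasing of the pigeonhole step (the $\Kolu$ bound gives per-$n$ existence of a short program, and only after pigeonhole plus the ``lower bound suffices'' property of $\Kolu$ does a single program work for all $n$); neither affects the substance.
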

It seems that (at least the proof in \cite{Loveland} of)
Fact~\ref{f:Kolmogorov}a) is `too non-uniform'
for Proposition~\ref{p:Kolmogorov}
to hold with $\Kolu$ replaced by $\Kol$,
even for compact $\dom(F)$.
\begin{proof}
Suppose $\bar\sigma\mapsto F(\bar\sigma)$
is computable for $\bar\sigma\in D_i$
by Turing machine $M_i$. Then obviously
$\Kolu\big(F(\bar\sigma)|\bar\sigma\big)\leq
|\langle M_i\rangle|+|\bin(i)|$ is bounded
independent of $i\leq d$.
\\
Conversely consider,
as in the proof of Fact~\ref{f:Kolmogorov}b),
the $d\leq\calO(1)^c$ machines $M_i$ of size $\leq c$; 
and remember that, for each $\bar\sigma\in\dom(F)$ 
and given $\bar\sigma$, 
some $M_i$ outputs the entire (as opposed to just
some initial segments of the) infinite string 
$F(\bar\sigma)$. 
Let $D_i\subseteq\dom(F)$ denote the set of
those $\bar\sigma$ for which $M_i$ does so.
Then $M_i$ computes $F\big|_{D_i}$ and
$\dom(F)=\bigcup_{i=1}^d D_i$:
$F$ is computable with $d$--fold advice.
\qed\end{proof}

%%%%%%%%%%%%%%%%%%%%%%%%%%%%%%%%%%%%%%%%%%%%%%%%%%%%%%%%%%%%%%%%
\section{Properties of the Complexity of Non-uniform Computability}
\begin{lemma} \label{l:Comp}
\begin{enumerate}
\item[a)]
Let $f:A\to B$ be $d$-continuous (computable)
and $A'\subseteq A$. Then the restriction $f|_{A'}$
is again $d$-continuous (computable).
\item[b)]
Let $f:A\to B$ be $d$-continuous (computable)
and $g:B\to C$ be $k$-continuous (computable).
Then $g\circ f:A\to C$ is $d\cdot k$-continuous (computable).
\item[c)]
If $f:A\to B$ is $(\alpha,\beta)$--computable with $d$-wise advice
and $\alpha'\reduceq\alpha$ and $\beta\reduceq\beta'$,
then $f$ is also $(\alpha',\beta')$--computable with $d$-wise advice.
\end{enumerate}
\end{lemma}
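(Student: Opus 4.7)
My plan is to prove all three parts by explicitly exhibiting, in each case, the covering/partition (and, in the computable setting, an associated notation together with a Turing algorithm) from the data witnessing the hypotheses. All three items are essentially bookkeeping; the mild care needed is in (c), where one has to chase the reduction functions through Definition~\ref{d:Nonunif}b) without disturbing the partition.

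For (a), start with a partition $\Delta$ of $\dom(f)$ of cardinality $d$ witnessing $d$-continuity. The collection $\Delta':=\{D\cap A':D\in\Delta\}$ is a partition of $\dom(f|_{A'})$ of cardinality at most $d$; and $f|_{D\cap A'}$ is continuous as a further restriction of the continuous $f|_D$, so $\mycard(f|_{A'})\leq d$. In the computable case, keep the same notation $\delta$ of $\Delta$ (those $D$ with $D\cap A'=\emptyset$ simply become redundant names): the $(\alpha,\delta,\beta)$--realizer of $f_\Delta$ also realizes $(f|_{A'})_{\Delta'}$.

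For (b), let $\Delta_f,\delta_f$ witness $d$-wise computability of $f$ and $\Delta_g,\delta_g$ witness $k$-wise computability of $g$. Define
\[
\Delta \;:=\; \bigl\{\,D\cap f^{-1}[E]\;:\;D\in\Delta_f,\,E\in\Delta_g\,\bigr\}\,,
\]
a partition of $\dom(g\circ f)$ of cardinality at most $d\cdot k$. On each cell, $f$ restricts to a continuous map into $E$, and $g|_E$ is continuous, so $(g\circ f)$ restricts to a continuous map; this proves the continuity part. For computability, notate $\Delta$ by the pair-notation $\delta_f\times\delta_g$ (cells with empty intersection receive names that are never used): given an $\alpha$-name of $a\in D\cap f^{-1}[E]$ together with the $(\delta_f\times\delta_g)$-name of $(D,E)$, first invoke the $f_{\Delta_f}$-machine to obtain a $\beta$-name of $f(a)$, then hand that name together with the $\delta_g$-name of $E$ to the $g_{\Delta_g}$-machine.

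For (c), let $\Delta,\delta$ witness $d$-wise $(\alpha,\beta)$--computability of $f$, and pick computable realizers $H_1$ of $\alpha'\reduceq\alpha$ (so $\alpha\circ H_1=\alpha'$) and $H_2$ of $\beta\reduceq\beta'$ (so $\beta'\circ H_2=\beta$). Keeping the partition $\Delta$ and its notation $\delta$ untouched, compose $H_1$ on the continuous input, the existing $(\alpha,\delta,\beta)$--realizer of $f_\Delta$, and $H_2$ on the output: this gives an $(\alpha',\delta,\beta')$--realizer of $f_\Delta$, witnessing that $f$ is $(\alpha',\beta')$--computable with the same $d$-wise advice. The only point requiring the slightest attention is that the discrete component $\delta$ is unaffected by reductions on the continuous representations, which is immediate since $H_1,H_2$ act only on the infinite-string parts.
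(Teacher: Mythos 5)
Your proof is correct and follows essentially the same route as the paper's: in (a) intersect the partition with $A'$, in (b) take the product/refinement partition $D\cap f^{-1}[E]$ and compose the realizers, and in (c) conjugate the given realizer by the reduction functions while leaving the discrete advice untouched. The paper states (a) and (c) as essentially obvious and gives only the one-sentence core of (b); your version just spells out the same bookkeeping in more detail.
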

\begin{proof}
\begin{enumerate}
\item[a)]
Obviously, any partition $\Delta$ of $A$
induces one $\Delta':=\{D\cap A':D\in\Delta\}$ of $A'$
of at most the same cardinality.
\item[b)]
If $f$ is continuous (computable) on $A_i\subseteq A$
and $g$ is continuous (computable) on $B_j\subseteq B$,
then $g\circ f$ is continuous (computable)
on $A_i\cap f^{-1}[B_j]$:
$f$ is on any subset of $A_i$;
and so is $g$ on any subset of $B_j$,
particularly on the image of $A_i\cap f^{-1}[B_j]\subseteq B_j$ under $f$.
\item[c)]
obvious.
\qed\end{enumerate}\end{proof}
A minimum size partition $\Delta$ of $\dom(f)$ 
to make $f$ computable on each $D\in\Delta$ 
need not be unique: Alternative to
Example~\ref{x:Staircase}, we

\begin{myremark} \label{r:Delta}
Given a $\myrho$--name of $x\in\IR$ and 
indicating whether $\lfloor x\rfloor\in\IZ$
is even or odd suffices to compute $\lfloor x\rfloor$:
\\
Suppose $\lfloor x\rfloor=2k\in 2\IZ$
(the odd case proceeds analogously).
Then $x\in[2k,2k+1)$. Conversely,
$x\in[2k-1,2k+2)$, 
together with the promise
$\lfloor x\rfloor\in 2\IZ$, implies
$\lfloor x\rfloor=2k$.
Hence, given $(q_n)\in\IQ$ with $|x-q_n|\leq2^{-n}$,
$k:=2\cdot\big\lfloor q_1/2+\tfrac{1}{4}\big\rfloor$ 
(calculated in exact rational arithmetic)
will yield the answer.
\qed\end{myremark}

%%%%%%%%%%%%%%%%%%%%%%%%%%%%%%%%%%%%
\subsection{Witness of $k$-Discontinuity} \label{s:Witness}
Recall that the partition $\Delta$ in Definition~\ref{d:Nonunif}
need not satisfy any (e.g. topological regularity) conditions.
The following notion turns out as useful in lower bounding the
cardinality of such a partition:

\begin{definition} \label{d:Flag}
\begin{enumerate}
\item[a)]
A \textsf{$d$-dimensional flag} $\calF$ in a topological Hausdorff space $X$
is a collection 
\[ x, \quad (x_n)_{_n}\;, \quad (x_{n,m})_{_{n,m}}\;, \quad (x_{n,m,\ell})_{_{n,m,\ell}}\;, 
\quad\ldots,\quad (x_{n_1,\ldots,n_d})_{_{n_1,\ldots,n_d}} \]
of a point and of (multi-)sequences\footnote{The generally more appropriate
concept is that of a \emph{Moore-Smith} sequence or \emph{net}.
However, being interested in second countable spaces,
we may and shall restrict to ordinary sequences.
Similarly, the Hausdorff condition is invoked for mere convenience.} 
in $X$ such that,
for each (possibly empty) multi-index $\bar n\in\IN^k$ ($0\leq k<d$),
it holds $x_{\bar n}=\lim\limits_{m\to\infty} x_{\bar n,m}$.
\item[b)]
$\calF$ is \textsf{uniform} if furthermore,
again for each $\bar n\in\IN^k$ ($0\leq k<d$) and for each $1\leq\ell\leq d-k$,
it holds
$x_{\bar n}=\lim\limits_{m\to\infty} 
x_{\bar n,\underbrace{\scriptscriptstyle m,\ldots,m}_{\ell\text{ times}}}$.
\item[c)]
For $f:\subseteq X\to Y$ and $x\in\dom(f)$
a \textsf{witness of discontinuity of $f$ at $x$}
is a sequence $x_n\in\dom(f)$ such that
$\lim_{n\to\infty} f(x_n)$ exists but differs from $f(x)$.
\item[d)]
For $f:\subseteq X\to Y$, 
a \textsf{witness of $d$-discontinuity of $f$}
is a uniform $d$-dimensional flag $\calF$ in $\dom(f)$ such that,
for each $k=0,1,\ldots,d-1$
and for each $\bar n\in\IN^k$
and for each $1\leq\ell\leq d-k$,
$\big(x_{\bar n,\underbrace{\scriptscriptstyle m,\ldots,m}_{\ell\text{ times}}}\big)_m$
is a witness of discontinuity of $f$ at $x_{\bar n}$.
\end{enumerate}
\end{definition}
Observe that, since $d$ is finite, 
we may always (although not effectively) 
proceed from a flag to a uniform one by iteratively
taking appropriate subsequences.
In fact, sub(multi)sequences of $d$-flags 
and of witnesses of discontinuity
are again $d$-flags and witnesses of discontinuity.

\begin{myexample} \label{x:Card}
Consider the mapping $\Card_d:\IR^n\ni(x_1,\ldots,x_d)\mapsto\Card\{x_1,\ldots,x_d\}$
and let $\vec X:=(0^d)$, 
$\vec X_n:=(1/n,0^{d-1})$,
$\vec X_{n,m}:=(1/n,1/n+1/m,0^{d-2})$,
$\vec X_{n_1,\ldots,n_k}:=(1/n_1,1/n_1+1/n_2,\ldots,1/n_1+\cdots+1/n_k,0^{d-k})$.
Then obviously $\lim_m \vec X_{n_1,\ldots,n_k,m,\ldots,m}=\vec X_{n_1,\ldots,n_k}$,
hence we have a uniform $d$-dimensional flag.
Moreover $\Card_d(\vec X_{n_1,\ldots,n_k})=k+1$ for $k=0,1,\ldots,d-1$.
shows it to be a witness of $(d-1)$-discontinuity.
\end{myexample}
Observe that $\Card_d$ is trivially $d$-continuous,
namely even constant on each 
$D_k:=\big\{(x_1,\ldots,x_d):\Card\{x_1,\ldots,x_d\}=k\big\}$,
$k=1,\ldots,d$. In fact $d$ is best possible as we have,
justifying the notion introduced in Definition~\ref{d:Flag}c),
the following

\begin{lemma} \label{l:Flag}
Let $X,Y$ be Hausdorff, $f:X\to Y$ a function,
and suppose there exists a witness 
of $d$-discontinuity of $f$.
Then $\mycard(f)>d$.
\end{lemma}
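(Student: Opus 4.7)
The plan is to argue by contradiction: suppose $f$ admits a partition $\dom(f)=D_1\sqcup\cdots\sqcup D_d$ with each $f|_{D_i}$ continuous (so $\mycard(f)\le d$), and extract from the witness of $d$-discontinuity a forced $(d{+}1)$-st colour. The key lever is a one-line \emph{continuity principle}: whenever $(y_k)\to y\in\dom(f)$ is such that $\lim_k f(y_k)$ exists and differs from $f(y)$, only finitely many $y_k$ may belong to the colour-class $D_{c(y)}$ of $y$; otherwise a subsequence would stay in $D_{c(y)}$, converge to $y\in D_{c(y)}$, and by continuity of $f|_{D_{c(y)}}$ force $f(y_k)\to f(y)$. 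Applied to each witnessing sequence $(x_{\bar n,m,\ldots,m})_m$ from the uniform $d$-flag, this yields, for every $\bar n$ of length $j<d$ and every $1\le\ell\le d-j$, a finite threshold beyond which $x_{\bar n,m,\ldots,m}\notin D_{c(x_{\bar n})}$.

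Next I would pass to a \emph{level-monochromatic sub-flag} by iteratively pigeonholing the finite family of $d$-colourings $\omega^k\to\{1,\ldots,d\}$, $\bar n\mapsto c(x_{\bar n})$ for $k=1,\ldots,d$. The outcome is an infinite $S\subseteq\omega$ and colours $c_0=c(x),c_1,\ldots,c_d$ such that, applying \emph{one and the same} strictly increasing enumeration $\sigma:\omega\to S$ to \emph{every} coordinate, the sub-flag $x'_{\bar n}:=x_{\sigma(n_1),\ldots,\sigma(n_k)}$ assigns colour $c_k$ to every node of the shape $(n_1,\ldots,n_j,m,\ldots,m)$ of length $k$. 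Using a single $\sigma$ in every coordinate is essential: only then do both the uniformity $x'_{\bar n,m,\ldots,m}\to x'_{\bar n}$ and its associated witness-of-discontinuity property at $x'_{\bar n}$ survive, which I would verify by sub-sampling the original diagonal convergence $x_{\sigma(\bar n),j,\ldots,j}\to x_{\sigma(\bar n)}$ along $j=\sigma(m)$.

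With this sub-flag in hand I would then show the $d+1$ colours $c_0,c_1,\ldots,c_d$ are pairwise distinct. For $0\le j<k\le d$ fix any $\bar m\in\omega^j$ and look at $(x'_{\bar m,m,\ldots,m})_m$ with $k-j$ trailing $m$'s: uniformity gives convergence to $x'_{\bar m}$ with $f$-values tending to a different limit, so the continuity principle forces only finitely many terms to lie in $D_{c_j}=D_{c(x'_{\bar m})}$; yet by construction each such term carries colour $c_k$, whence $c_k\neq c_j$. This produces $d+1$ pairwise distinct colours in a $d$-element partition---the desired contradiction.

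The main obstacle will be the Ramsey step: the refinement must simultaneously (i) apply the \emph{same} subsequence $\sigma$ in every coordinate (to preserve uniformity of the flag) and (ii) homogenise colours on all ``staircase-then-diagonal'' tuples $(n_1,\ldots,n_j,m,\ldots,m)$ over the countably many bases $\bar m\in S^j$ and all pairs $0\le j<k\le d$. I would handle this by finitely many nested pigeonholes combined with the standard diagonal extraction of a common subsequence, which is where the bulk of the routine but delicate bookkeeping lies.
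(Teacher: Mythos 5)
Your proof takes essentially the same route as the paper's: assume a $d$-element partition, apply the continuity principle to exclude the base colour along each trailing-diagonal witness sequence, pigeonhole, and exhibit $d{+}1$ pairwise-distinct colours. The paper interleaves the pigeonholes level by level; you batch them up front via a single monochromatization pass. Structurally the two arguments are identical, and your ``continuity principle'' is exactly the device the paper uses.

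The genuine gap is precisely the step you flag as ``the main obstacle'' and then consign to ``routine but delicate bookkeeping'': the level-monochromatic sub-flag along one increasing $\sigma$ need not exist, and no pigeonhole or diagonal extraction produces it. Already at level two you are asking for an infinite $S$ making the colouring $(n,m)\mapsto c(x_{n,m})$ constant on $S\times S$; the colouring with $c(n,m)=1$ for $n<m$ and $c(n,m)=2$ for $n\ge m$ admits no such $S$. Crucially, the witness-of-$d$-discontinuity hypotheses do not exclude this pattern: they constrain only the trailing-diagonal sequences $(x_{\bar n,m,\ldots,m})_m$, forcing each row $(x_{n,m})_m$ to eventually leave $D_{c(x_n)}$ and the diagonal $(x_{m,m})_m$ to eventually leave $D_{c(x)}$, but they are silent about whether the row's eventual colour matches the diagonal's. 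One can realise this pattern in a bona fide uniform $2$-flag: take $x=(0,0)$, $x_n=(1/n,0)$, $x_{n,m}=(1/n,1/m)$ in $\IR^2$, set $f(x)=0$, $f(x_n)=1$, $f(x_{n,m})=0$ for $n<m$ and $f(x_{n,m})=2$ for $n\ge m$; every witness condition holds, yet the two-part partition $D_1:=f^{-1}(0)$, $D_2:=\dom(f)\setminus D_1$ has both $f|_{D_1}$ (constant) and $f|_{D_2}$ (a discrete set) continuous. The paper's own proof has the same lacuna at ``some $i=i(n)$ for infinitely many $n$; hence \ldots also requires $i\ne1$'', which silently transfers a row-wise pigeonhole to the diagonal. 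So this is not bookkeeping: the homogenization is unavailable, and either the flag hypotheses must be strengthened (for instance, to demand witness behaviour of all multi-sub-sequences, not merely the trailing diagonals) or a substantively different argument is required.
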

It also follows that 
Example~\ref{x:Roux}c) is best possible:
Knowing $k:=\Card\{x_1,\ldots,x_d\}\in\{1,\ldots,d\}$
(i.e. $d$-wise advice according to Example~\ref{x:Card})
is necessary for the computability of the members
\emph{without} repetition
of the (however) given set $\{x_1,\ldots,x_d\}$,
that is of a $\myrho^k$--name of 
$(x_{i_1},\ldots,x_{i_k})$ with 
$k=\Card\{x_{i_1},\ldots,x_{i_k}\}$.
Whereas computability of its members
\emph{with} repetition does not require
any advice according to \mycite{Lemma~5.1.10}{Weihrauch}, anyway.

\begin{proof}[Lemma~\ref{l:Flag}]
Suppose $\dom(f)=\biguplus_{i=1}^d D_i$ is a partition 
such that $f|_{D_i}$ is continuous; w.l.o.g. $x\in D_1$.
\\
Now consider the sequence $(x_n)$ in the flag:
$x_n\in\bigcup_{i=1}^d D_i$ implies
by pigeonhole that some $D_i$ contains infinitely many (w.l.o.g. all) $x_n$;
and $f(\lim_n x_n)=f(x)\neq\lim_{n\to\infty} f(x_n)$ requires $i\neq1$
in order for $f|_{D_i}$ to be continuous. W.l.o.g. $i=2$.
\\
We proceed to the double sequence $(x_{n,m})$ in the flag:
For each $n\in\IN$, some $D_{i(n)}\ni x_{n,m}$ for infinitely
many $m$; and $f(\lim_m x_{n,m})=f(x_n)\neq\lim_m f(x_{n,m})$
requires $i(n)\neq2$ for $f|_{D_{i(n)}}$ to be continuous.
Moreover some $i=i(n)$ for infinitely many $n$;
hence $f(\lim_m x_{m,m})=f(x)\neq\lim_m f(x_{m,m})$
also requires $i\neq1$. W.l.o.g. $i=3$.
\\
And so on until $i\neq1,2,3,\ldots,d$:
contradiction.
\qed\end{proof}

%%%%%%%%%%%%%%%%%%%%%%%%%%%%%%%%%%%%%%%%%
\subsection{Three Examples} \label{s:Examples}
Observe that for an $N\times M$-matrix $A$
and $d:=\min(N,M)$, $\rank(A)$ is an integer
between $0$ and $d$; and knowing this number
makes $\rank$ trivially computable.
Conversely, such $(d+1)$--fold information is
necessary, as follows from Lemma~\ref{l:Flag}
in connection with

\begin{myexample} \label{x:Rank}
Consider the space $\IR^{N\times M}$ of rectangular matrices
and let $d:=\min(N,M)$. For $i\in\{0,1,\ldots,d\}$ write
\begin{gather*}
E_i\quad:=\quad \sum_{j=1}^i 
\big( (0,\cdots,0,\underbrace{1}_{j\text{-th}},
0,\cdots,\underbrace{0}_{n\text{-th}})^\dagger
\otimes (0,\cdots,0,\underbrace{1}_{j\text{-th}},0,\cdots,\underbrace{0}_{m\text{-th}}) \big)
\quad=\\=\quad\left(\begin{array}{c@{\quad}c@{\quad}c@{\quad}c@{\quad}c@{\qquad}ccc}
1 & 0 & 0 &\cdots & 0 & 0 & \cdots & 0 \\
0 & 1 & 0 & \cdots & 0 & 0 & \cdots & 0 \\
0 & 0 & 1 &  & 0 & 0 & \cdots & 0 \\[-1ex]
\vdots & \vdots & & \ddots & 0 & 0 & \cdots & 0 \\
0 & 0 & 0 & \cdots & 1 & 0 & \cdots & 0 \\[1ex]
0 & 0 & 0 & \cdots & 0 & 0 & \cdots & 0 \\[-1ex]
\vdots & \vdots & \vdots & \vdots & \vdots & \vdots & \vdots & \vdots \\
0 & 0 & 0 & \cdots & 0 & 0 & \cdots & 0
\end{array}\right) \quad\in\quad \IR^{N\times M} \\
X:=0, \qquad X_{n_1,\ldots,n_i}\quad:=\quad E_1/n_1\;+\;E_2/n_2\;+\;\cdots\;+\;E_i/n_i
\end{gather*}
has 
$\lim_{m\to\infty} X_{n_1,\ldots,n_i,m,\ldots,m}=X_{n_1,\ldots,n_i}$,
hence constitutes a uniform $d$-dimensional flag.
Moreover, $\rank(E_i)=i=\rank(X_{n_1,\ldots,n_i})
\neq i+\ell=\rank(X_{n_1,\ldots,n_i,
\underbrace{\scriptstyle m,\ldots,m}_{\scriptscriptstyle \ell \text{ times}}})$
shows it to be a witness of $d$-discontinuity of $\rank:\IR^{N\times M}\to\{0,1,\ldots,d\}$.
\qed\end{myexample}
\begin{figure}[htb]
\centerline{%
\includegraphics[width=0.8\textwidth]{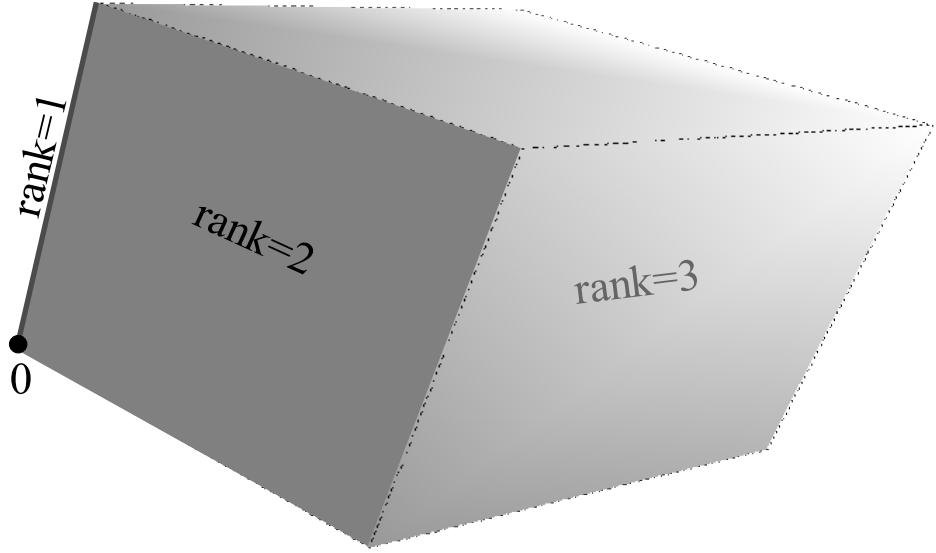}}%
\caption{\label{f:matrank}Schematic illustration of
the topology of the subspaces of rank-$k$ matrices, $k=0,1,2,3$.}
\end{figure}
\begin{myexample} \label{x:Borel}
Fix some bijection $\IN\times\IN\to\IN$,
$(x,y)\mapsto\langle x,y\rangle$;
e.g. $\langle x,y\rangle:=2^x\cdot(1+2y)$.
\begin{enumerate}
\item[a)]
 For $\bar n\in\IN^*$,
 let $\langle\bar n\rangle:=\sum_i 2^{-\langle i,n_i\rangle}$;
 and map the empty tuple to 0.\\
 This mapping $\langle\,\cdot\,\rangle:\IN^*\to[0,1]$ 
 is injective and maps to dyadic rationals.
 For each $k\in\IN$, the range $\langle\IN^k\rangle$ 
 belongs to $\Delta_2$; $\langle\IN^{\leq k}\rangle$ 
 is even closed a subset of $[0,1]$.
\item[b)]
 Consider $f:[0,1]\to[0,1]$ well-defined by
 $f(x):=1/k$ for $x=\langle\bar n\rangle$ with $\bar n\in\IN^k$;
 $f(x):=0$ for $x\not\in\langle\IN^*\rangle$. 
 Then $f$ is $\Delta_2$-measurable but not 
 $d$-continuous for any $d\in\IN$.
\end{enumerate} 
\end{myexample}
\begin{proof}[Example~\ref{x:Borel}]
\begin{enumerate}
\item[a)]
 Since the sum $\sum_{i\leq k}$ is finite for $\bar n\in\IN^k$,
 $\langle\bar n\rangle$ amounts to a dyadic rational,
 namely one with at most
 $k$ occurrences of the digit \texttt{1};
 the latter constitute a closed set.
% Conversely each dyadic rational $x\in[0,1]$ has a unique
% finite (!) binary expansion $x=\sum_{j\in J} 2^{-J}$;
% and to each $k$-element $J\subseteq\IN$ 
% belongs exactly one $k$-tuple
% $(n_1,\ldots,n_k)\in\IN^k$ with
% $J=\{ \langle i,n_i\rangle : 1\leq i\leq k\}$. \\
% $\langle\IN^k\rangle$ is a countable set and therefore in $\Delta_2$.
% The set of all dyadic rationals in $[0,1]$
% that admit a binary expansion containing the digit \texttt{1} 
% at most $k$ times, is closed and coincides with
% $\langle\IN^{\leq k}\rangle$.
\item[b)]
 Well-definition of $f$ follows from a).
 Moreover, $f^{-1}(1/k)=\langle\IN^k\rangle$ is in $\Delta_2$.
 Since $\range(f)=\{1/k:k\in\IN\}\cup\{0\}$,
 the preimage $f^{-1}[V]$ 
 of any open set $V\not\ni 0$
 is a union of finitely many $f^{-1}(1/k)$ 
 and therefore in $\Delta_2$, too;
 Whereas the preimage of open $V\ni 0$
 \emph{misses} finitely many $f^{-1}(1/k)$
 and thus also belongs to $\Delta_2$.\\
 Let $x:=0$, $x_n:=2^{-\langle 1,n\rangle}$, 
 $x_{n,m}:=2^{-\langle 1,n\rangle}+2^{-\langle 2,m\rangle}$,
 \ldots, $x_{n_1,\ldots,n_d}:=\sum_{i=1}^d 2^{-\langle i,n_i\rangle}$.
 This constitutes a uniform $d$-dimensional flag.
 And $f(x_{n_1,\ldots,n_k})=1/k\neq1/(k+\ell)=f(x_{n_1,\ldots,n_k,
 \underbrace{\scriptstyle m,\ldots,m}_{\scriptscriptstyle \ell \text{ times}}})$
 shows it to be a witness of $d$-discontinuity of $f$.
\qed\end{enumerate}\end{proof}
Recall Example~\ref{x:Chull} of computing (or rather identifying)
from a given $N$-tuple $(\vec x_1,\ldots,\vec x_N)$
of distinct points in $\IR^d$ those extremal to
(i.e. minimal and spanning) the convex hull
$\chull(\vec x_1,\ldots,\vec x_N)$.
In the 1D case, this problem $(x_1,\ldots,x_N)\mapsto\extchull_N(x_1,\ldots,x_N)$
is computable:
simply return the two (distinct!)
numbers $\min\{x_1,\ldots,x_N\}$ and $\max\{x_1,\ldots,x_N\}$.
We have already seen that in 2D it generally lacks
$\psiG{2}$--computability because of discontinuity.

\begin{proposition} \label{p:Chull}
Let $\vec x_1,\ldots,\vec x_N\in\IR^d$ be pairwise distinct
and $C:=\chull(\vec x_1,\ldots,\vec x_N)$.
\begin{enumerate}
\item[a)]
Let $\vec y\in\ext(C)$. Then there exists
a closed halfspace 
\[ H^+_{\vec u,t} \;=\; \big\{\vec z\in\IR^d: \sum\nolimits_i z_i u_i\geq t\big\} 
\;\subseteq\;\IR^d \]
with rational normal (although not necessarily unit) 
vector $\vec u\in\IQ^d\setminus\{0\}$ and $t>0$ such that
$H^+\cap\{\vec x_1,\ldots,\vec x_N\}=\{\vec y\}=\{\vec x_j\}$
for some $1\leq j\leq N$.
\item[b)]
Conversely $H^+_{\vec u,t}\cap\{\vec x_1,\ldots,\vec x_N\}=\{\vec x_j\}$ 
with $\vec u\neq0$ implies $\vec x_j\in\ext(C)$.
\item[c)]
Given $\vec x_1,\ldots,\vec x_N\in\IR^d$ as above
and for $1\leq j\leq N$, 
``$\vec x_j\in\ext(C)$'' is semi-decidable.
\item[d)]
The mapping $\extchull_N$
from Equation~\ref{e:Chull} is $(\myrho^{d\times N},\psidl)$--computable.
\item[e)]
For $N\geq2$ and given the number $M:=\Card\ext(C)\in\{2,\ldots,N\}$ of extreme points,
the set of their indices, i.e.
\[ \{i_1,\ldots,i_M\}\subseteq\{1,\ldots,N\} 
\quad\text{s.t.}\quad \ext(C)=\{\vec x_{i_1},\ldots,\vec x_{i_M}\} \]
becomes $(\myrho^{d\times N},\nu)$--computable.\\
In particular, $\extchull_N$
is $(\myrho^{d\times N},\psidg)$--computable with $(N-1)$-wise advice.
\item[f)]
It is however 
$(N-2)$-wise $(\myrho^{d\times N},\psidg)$--discontinuous
in dimensions $d\geq2$.
\end{enumerate}
\end{proposition}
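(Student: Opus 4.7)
The plan is to invoke Lemma~\ref{l:Flag}: exhibiting a witness of $(N-2)$-discontinuity of $\extchull_N$ with respect to the $\psidg$-topology on the space of finite subsets of $\IR^d$ yields $\mycard(\extchull_N)>N-2$ and hence the claim. Since the isometric embedding $\IR^2\hookrightarrow\IR^d$ by zero-padding preserves the set of extreme points of any configuration, it suffices to construct the witness in $\IR^2$.

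First I would fix the anchor points $\vec x_1:=(0,0)$ and $\vec x_2:=(1,0)$ together with distinct intermediate abscissae $a_j:=(j-2)/(N-1)$ for $j=3,\ldots,N$. The flag is defined recursively in $k=0,1,\ldots,N-2$: at level $k$ indexed by $(n_1,\ldots,n_k)$, the points $\vec x_3,\ldots,\vec x_{k+2}$ sit as vertices of the current convex hull (each having been lifted at the flag-level equal to its own index), while each not-yet-lifted $\vec x_j$ with $j>k+2$ is placed on the current upper boundary of that hull at abscissa $a_j$---hence on an edge, not a vertex. The key recipe: at level $k$, lift $\vec x_{k+2}$ to the height $\epsilon_k:=(n_1 n_2\cdots n_k)^{-1}$ \emph{above} the upper boundary of the level-$(k-1)$ hull at abscissa $a_{k+2}$.

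The main verifications, all by induction on $k$, are (i) pairwise distinctness, immediate from the distinct $a_j$; (ii) simultaneous extremality of all lifted $\vec x_3,\ldots,\vec x_{k+2}$ at each level~$k$; (iii) uniform convergence of the flag, since each $\epsilon_k\to 0$ along any diagonal with $n_k\to\infty$; and (iv) each diagonal sub-sequence $(x_{\bar n,m,\ldots,m})_m$ is a witness of discontinuity at $x_{\bar n}$, because the limit position of every newly-lifted point lies on the upper boundary of the lower-level hull, hence belongs to the Kuratowski $\limsup$ of the extreme sets but not to $\extchull_N(x_{\bar n})$, giving $\psidg$-discontinuity.

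The main obstacle is step (ii). In $\IR^2$ one cannot arrange many lifted points on a single concave upper arc with independently shrinking heights: the concavity inequalities would force lower bounds incompatible with $\epsilon_k\to 0$. The recursive ``just above the current upper hull'' placement sidesteps this obstruction: the extremality of any already-lifted $\vec x_j$ with $j<k+2$ at level $k$ reduces to an inequality of the form $\epsilon_{j-2}>(\text{const})\cdot\epsilon_k+(\text{smaller-order terms})$, which the telescoping choice $\epsilon_k=\epsilon_{k-1}/n_k$ (hence $\epsilon_k\le\epsilon_{k-1}$) satisfies for every admissible index $n_k\in\IN$. Once the witness is in place, Lemma~\ref{l:Flag} yields $\mycard(\extchull_N)\ge N-1$, completing the proof.
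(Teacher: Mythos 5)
You address only part~f); parts a)--e) are left entirely unproven. For f), you take the direct witness-of-$(N-2)$-discontinuity route---a route the paper explicitly mentions but declines, preferring instead a reduction: it computably maps $(x_1,\ldots,x_n)\in\IR^n$ to $n+1$ planar points whose non-extreme vertices encode the coincidences $x_i=x_{i+1}$, reads off $\Card\{x_1,\ldots,x_n\}$ with $\myrhog$-information from a $\psidg$-name of the extreme set, and invokes Lemma~\ref{l:Comp}b) together with the fact that $\Card_n$ is $(n-1)$-wise lower-semi-discontinuous (Example~\ref{x:Card} plus Lemma~\ref{l:Flag} on the \emph{discrete} target $\{1,\ldots,n\}$). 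Your recursive ``lift $\vec x_{k+2}$ by $\epsilon_k=(n_1\cdots n_k)^{-1}$ above the current upper boundary'' placement is geometrically plausible and does yield a uniform $(N-2)$-flag along which the number of extreme points increases by one per level; your identification of the convexity bookkeeping as the main verification is accurate, though only sketched.

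The genuine gap is the final step ``invoke Lemma~\ref{l:Flag}''. That lemma is stated, and its pigeonhole proof carried out, only for \emph{Hausdorff} targets $Y$: it relies on uniqueness of limits to turn ``$\lim_n f(x_n)$ exists and differs from $f(x)$'' into a contradiction with continuity of some $f|_{D_i}$. But the codomain of $\extchull_N$ under the $\psidg$-representation is the hyperspace of closed sets with the upper Fell (co-compact) topology, which is only $\Tnull$: a sequence of finite sets converging in Hausdorff distance to $A'$ also $\psidg$-converges to every closed superset of $A'$, so limits are badly non-unique, and the fact that the sequence $f(X_{\bar n,m})$ has \emph{some} limit different from $f(X_{\bar n})$ does not by itself contradict continuity. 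To repair this you would need to argue explicitly that the diagonal sequences of extreme sets do \emph{not} $\psidg$-converge to $f(X_{\bar n})$ (which is true in your construction, because the freshly-lifted vertex accumulates at a point of the old upper boundary that is bounded away from $f(X_{\bar n})$) and rerun the pigeonhole argument against that strengthened hypothesis; or pass to $\psidg$-names and realizers as the paper does for Theorem~\ref{t:Roux}b). The paper's reduction to $\Card_n$ sidesteps this entirely by routing through a discrete, hence Hausdorff, codomain---this is precisely what it buys by calling the reduction ``more elegant''.
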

\begin{proof}
\begin{enumerate}
\item[d)]
Follows from c) by trying all $j=1,\ldots,N$.
Indeed, a $\psidl$--name (but not a $\psidg$--name)
permits to `increase' at any time the set to be output.
\item[e)]
similarly to d), now trying all $M$-tuples $(i_1<i_2<\cdots<i_M)$ in $\{1,\ldots,N\}$.
Note that indeed $\Card\ext(C)\geq2$ because the $\vec x_i$ are pairwise distinct.
\item[c)]
Follows from a+b) by dovetailed search for some
$\vec u\in\IQ^d\setminus\{0\}$ with
$\langle\vec u,\vec x_j\rangle=:t>\langle\vec u,\vec x_i\rangle$
for all $i\neq j$,
where $\langle\vec u,\vec x\rangle:=u_1x_1+\cdots+u_dx_d$.
\item[a)] and b):
It is well-known \cite{Gruenbaum} that extreme points $\vec y$ 
of a polytope $C$ (although not necessarily of a general convex body)
are precisely its exposed points,
i.e. satisfy $\{\vec y\}=C\cap H_{\vec u,t}^+$ for some 
$t>0$ and $\vec u\in\IR\setminus\{0\}$. 
Equivalently: 
$\langle\vec u,\vec x_j\rangle>\langle\vec u,\vec x_i\rangle$
for all $i\neq j$---obviously a condition open in $\vec u$,
which therefore may be chosen from the dense subset
$\IQ^d\subseteq\IR^d$.
\item[f)]
We might construct a witness of $(N-2)$-discontinuity,
but take the more elegant approach of a reduction 
by virtue of Lemma~\ref{l:Comp}b).
To this end observe that semi-decidability
of \emph{in}equality makes
$\Card_n\IR^n\ni(x_1,\ldots,x_n)\to\Card\{x_1,\ldots,x_n\}$
$(\myrho^n,\myrhol)$--computable, i.e.
upper semi-continuous; hence by Example~\ref{x:Card},
$\Card_n$ must be $(n-1)$-wise \emph{lower semi}-discontinuous.
\\
Now let $x_1,\ldots,x_{n}\in\IR$ be given.
According to \mycite{Exercise~4.3.15}{Weihrauch}
suppose w.l.o.g. $x_1\geq x_2\geq \cdots \geq x_{n-1}\geq x_n=0$.
Then proceed to the following collection $X$ of 
$(n+1)$ points in 2D: $(0,0)$, $(1,x_1)$, $(2,x_1+x_2)$, \ldots,
$(n-1,x_1+\cdots+x_{n-1})$, $(n,x_n)$;
cf. Figure~\ref{f:chullcard}.
Let $f_n:\IR^n\to\IR^{2\times(n+1)}$
denote this computable mapping $(x_1,\ldots,x_n)\mapsto X$.
Observe that the sequence of slopes from points $\#i$ to $\#i+1$
is non-increasing because $x_i\geq x_{i+1}$;
and two successive slopes $(\#i-1\to\#i)$ and $(\#i\to\#i+1)$
coincide ~iff~ $x_i= x_{i+1}$;
which in turn is equivalent to point $\#i$ 
\emph{not} being extreme to $\chull(X)$.
In fact from a $\psiG{2}$--name of $\extchull(X)$
one can semi-decide $(i,x_1+\cdots+x_i)\not\in\extchull(X)$;
cf. e.g. \cite[\textsc{Lemma~25}c]{MLQ2}.
This yields a $(\psiG{2},\myrhog)$--computable mapping
$h_n:\extchull_{n+1}(X)\mapsto\Card\extchull_{n+1}(X)=\Card\{x_1,\ldots,x_n\}$ 
defined on the image of $\extchull_{n+1}\circ f_n$.
Now since $h_n\circ\extchull_{n+1}\circ f_n=\Card:\IR^n\to\{1,\ldots,n\}$ 
is $(n-1)$-wise lower semi-discontinuous by the above considerations,
Lemma~\ref{l:Comp}b) requires that 
$\extchull_{n+1}$ be $(n-1)$-wise $(\myrho^{n+1},\psiG{2})$--discontinuous.
\qed\end{enumerate}\end{proof}

\begin{figure}[htb]
\centerline{\includegraphics[width=0.7\textwidth]{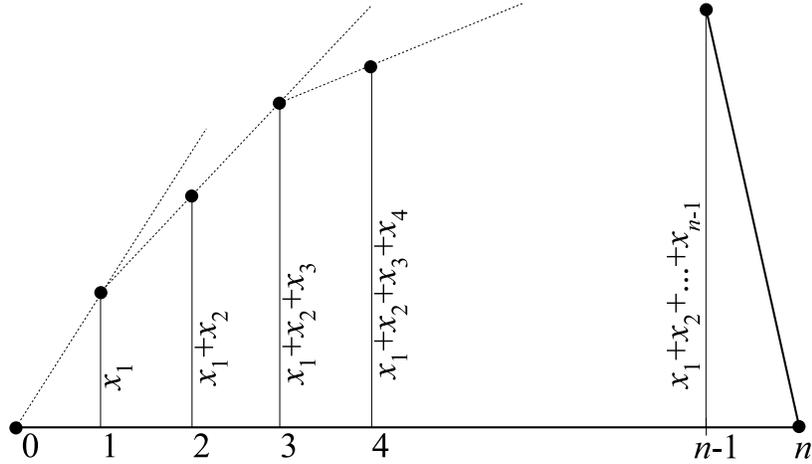}}%
\caption{\label{f:chullcard}Knowing in 2D which points are/not extreme to
their convex hull can be used to conclude which real numbers are
in-/equal: }
\end{figure}

%%%%%%%%%%%%%%%%%%%%%%%%%%%%%%%%%%%%%%%%%%%%%%%%%%%%%%%%%%%%%
\subsection{Further Remarks}
For some time the author had felt that 
when $\dom(f)$ is sufficiently `nice'
and for $x\in\dom(f)$,
the cardinal of discontinuity of $f$ could be lower bounded
in terms of the number of distinct limits
of $f$ at $x$, that is the cardinality of 
\[ \Lim(f,x) \quad:=\quad \big\{\lim\nolimits_{n\to\infty} f(x_n) : \dom(f)\ni x_n\to x\big\} 
\enspace . \]
However the following example 
(cf. also the right part of Figure~\ref{f:noLim3})
shows that this is not the case:
\[ f:[-1,1]\to[0,1], \qquad
2^{-n}\cdot 3^{-m}\mapsto 3^{-m} \quad (n,m\in\IN),
\qquad
f(x):\equiv0\text{ otherwise} \enspace .
\] 
Here $\Lim(f,0)$ is infinite but
$f$ is continuous on $D_1:=\{2^{-n}\cdot 3^{-m}:n,m\in\IN\}$
(because the latter set contains no accumulation point)
and $f\equiv 0$ on $D_2:=[-1,1]\setminus D_1$; hence $\mycard(f)=2$.
%\end{remark}
%
\begin{figure}[h]
\centerline{\includegraphics[width=0.45\textwidth]{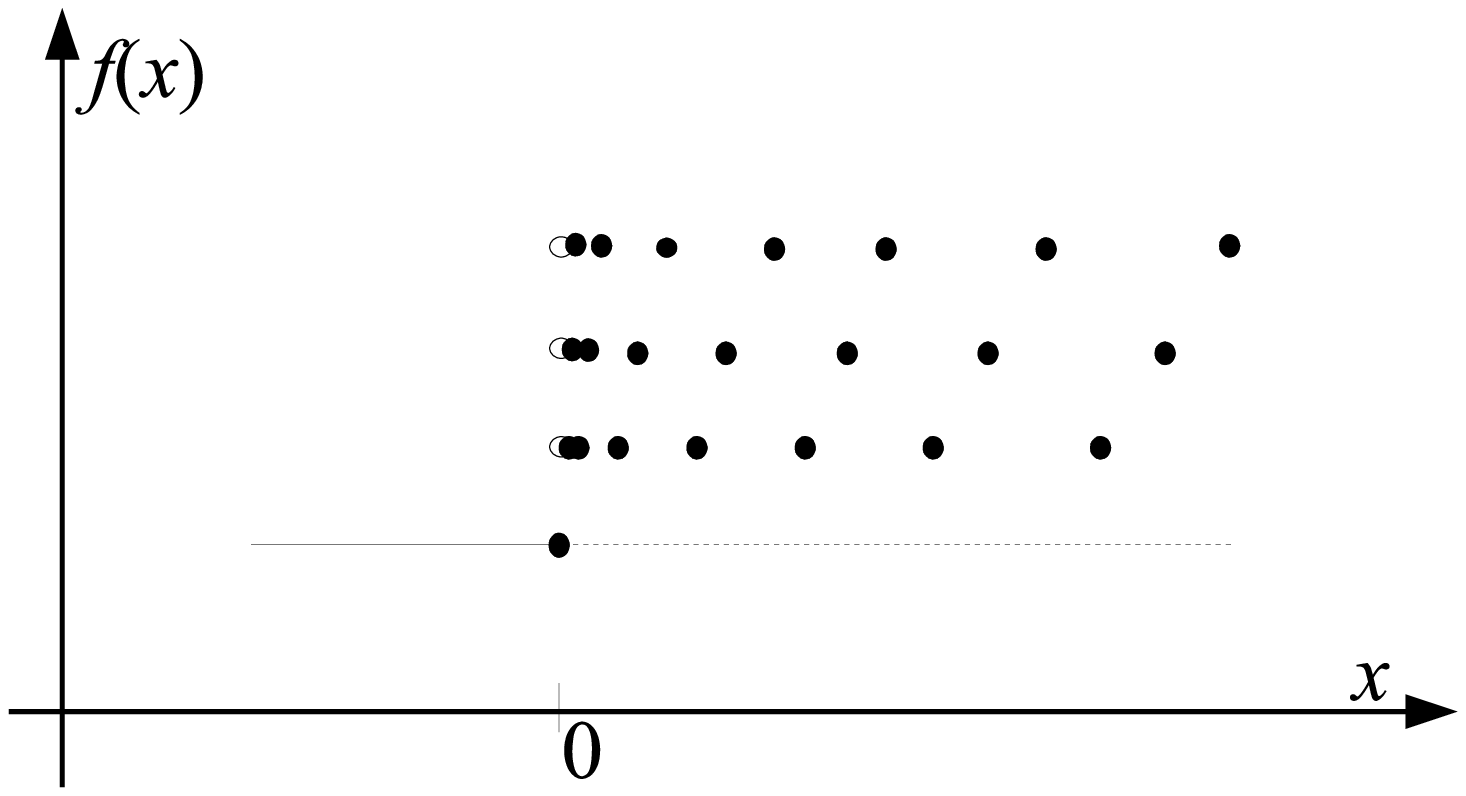}\qquad
\includegraphics[width=0.45\textwidth]{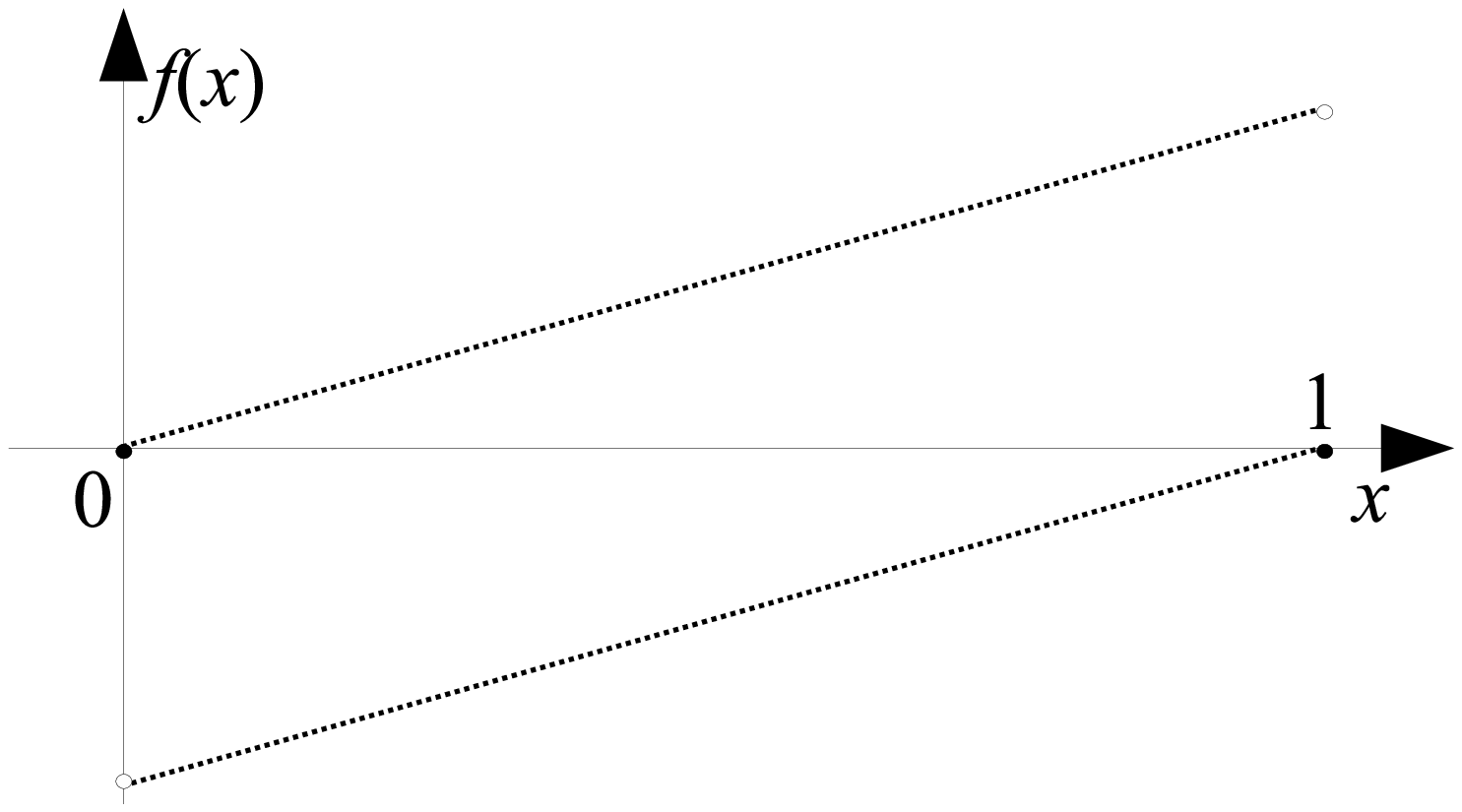}}
\caption{\label{f:noLim3}Left: the cardinal of discontinuity
cannot be lower bounded by the number of limit points.
Right: A 2-continuous function which,
after identifying arguments $x=0$ and $x=1$,
exhibits mere 3-continuity.}
\end{figure}

\smallskip\noindent
In order to apply Lemma~\ref{l:Flag} for
proving $k$-discontinuity of a function $f:A\to B$,
it may help to compactify the co-domain:

\begin{myexample}
Consider $f:[0,1]\to\IR$,
$0\mapsto 0$, $0<x\mapsto 1/x$. \\
Then $f$ admits no witness of 1-discontinuity; \\
whereas $\tilde f:[0,1]\to\IR\cup\{+\infty\}$
does admit such a witness.
\end{myexample}
The crucial point is of course that
$x:=0$ and $x_n:=1/n$ constitutes a witness of
1-discontinuity only for $\tilde f$,
because $0=f(x)\neq\lim_n f(x_n)=+\infty$
exists only in $\IR\cup\{+\infty\}$.
Finally we remark that
the notation $\delta$ in Definition~\ref{d:Nonunif}b)
is usually straight-forward and natural;
although an artificially bad choice is possible 
even for $2$-wise computable functions:

\begin{myexample}
The characteristic function $\chi_H:\IN\to\{0,1\}$
of the Halting problem $H\subseteq\IN$ is
obviously 2-wise $(\nu,\nu)$--computable
by virtue of $\Delta=\{H,\IN\setminus H\}$,
namely for $\delta:\subseteq\Sigma^*\to\Delta$ with
$1\mapsto H$ and $0\mapsto\IN\setminus H$.
\\
Whereas with respect to the following notation $\tilde\delta$,
$\chi_H$ is equally obviously 
\textsf{not} $(\nu,\tilde\delta,\nu)$--computable:
\[ \delta:\Sigma^*\to\Delta, \qquad
\bar x\mapsto H \text{ for } \bar x\in H, \quad
\bar x\mapsto\IN\setminus H \text{ for } \bar x\not\in H \enspace . \]
\end{myexample}
%
%%%%%%%%%%%%%%%%%%%%%%%%%%%%%%%%%%%%%
\subsection{Weak $k$-wise Advice} \label{s:Weak}
Recalling Observation~\ref{o:MainThm2},
(weak) $k$-wise $(\alpha,\beta)$-computability of $f:\subseteq A\to B$ implies
(weak) $k$-wise $(\alpha,\beta)$-continuity
from which in turn follows
\emph{weak} $k$-wise $(\alpha,\beta)$-continuity in the following sense:

\begin{definition} \label{d:Weak}
Consider a function $f:A\to B$ 
between represented spaces $(A,\alpha)$ and $(B,\beta)$.
\begin{enumerate}
\item[a)] 
Call $f$ \emph{$k$-wise $(\alpha,\beta)$-continuous}
if there exists a partition $\Delta$ of $\dom(f)$
of $\Card(\Delta)=k$ such that $f|_D$ is $(\alpha,\beta)$-continuous
on each $D\in\Delta$.
\item[b)]
Call $f$ \textsf{weakly} $k$-wise $(\alpha,\beta)$-continuous
if there exists a $k$-continuous $(\alpha,\beta)$-realizer
$F:\subseteq\Sigma^\omega\to\Sigma^\omega$ of $f$
in the sense of \mycite{Definition~3.1.3}{Weihrauch}.
\item[c)]
Call $f$ \textsf{weakly} $k$-wise $(\alpha,\beta)$-computable
if it admits a $k$-computable $(\alpha,\beta)$-realizer.
\end{enumerate}
\end{definition}
However conversely, and as opposed to the classical case $k=1$, 
weak $2$-wise $(\alpha,\beta)$--continuity in generally
does \textsf{not} imply $2$-wise $(\alpha,\beta)$-continuity.
Basically the reason is that a partition of $\dom(f)$ 
yields a partition of $\dom(F)$; whereas a partition
$\Delta$ of $\dom(F)$ need not be compatible with
the representation in that different $\alpha$ names
for the same argument $a$ may belong to different
elements of $\Delta$:
\begin{myexample} \label{x:Weak}
Consider the following function depicted 
to the right of Figure~\ref{f:noLim3}
\[ f:[0,1]\to[-1,+1],
\qquad [0,1)\cap\IQ\ni x\mapsto x=:g(x), 
\quad \IR\setminus\IQ\ni x\mapsto x-1=:h(x),
\quad f(1):=0. \]
It is continuous on both $\IQ\cap[0,1)$
and on $\{1\}\cup\IR\setminus\IQ$; hence 2-continuous,
and admits a 2-continuous $(\myrho,\myrho)$--realizer.

Now proceed from $[0,1]$ to $\calS^1$,
i.e. identify $x=0$ with $x=1$; 
formally, consider the representation 
$\tilde\myrho:=\imath\circ\myrho:\subseteq\Sigma^\omega\to\calS^1$
where $\imath:\IR\to[0,1)$, $x\mapsto x\mod 1$.
Since $f(0)=0=f(1)$, this induces a well-defined
function $\tilde f:\calS^1\to[-1,+1]$;
which admits a 2-continuous $(\tilde\myrho,\myrho)$--realizer:
namely the 2-continuous $(\myrho,\myrho)$--realizer of $f$.
But $\tilde f$ itself is not 2-continuous:
\\
Suppose $\calS^1=D_1\uplus D_2$ where $\tilde f|_{D_1}$
and $\tilde f|_{D_2}$ are both continuous.
W.l.o.g. $0\in D_1$. 
Observe that $\tilde f(0)=0=g(0)\neq h(0)=-1$.
Hence, as $\IQ$ is dense
and because continuous $h$ is different from continuous $g$,
continuity of $\tilde f|_{D_1}$ requires it to coincide with
$g$: first just locally at $x=0$, but then also globally---which
implies $\lim\limits_{x\nearrow 1} \tilde f|_{D_1}(x)=g(1)=1$,
contradicting $\tilde f|_{D_1}(1)=0$.
\qed\end{myexample}
As already mentioned, Example~\ref{x:Weak} illustrates
that the implication from $k$-wise $(\alpha,\beta)$-continuity
to weak $k$-wise $(\alpha,\beta)$-continuity cannot be reversed
in general---even for admissible representations.
Indeed, $\tilde\myrho$ can be shown equivalent
to the standard representation $\delta_{\calS^1}$
of $\calS^1$ as an effective topological space
\mycite{Definition~3.2.2}{Weihrauch}.

Applying Lemma~\ref{l:Comp} to realizers yields
the following counterpart for weak advice:

\begin{myremark} \label{r:Comp2}
Fix represented spaces $(A,\alpha)$, $(B,\beta)$,
and $(C,\gamma)$.
\begin{enumerate}
\item[a)]
Let $f:A\to B$ be weakly $d$-wise $(\alpha,\beta)$-continuous/computable
and $A'\subseteq A$. Then the restriction $f|_{A'}$
is again weakly $d$-wise $(\alpha,\beta)$-continuous/computable.
\item[b)]
Let $f:A\to B$ be weakly $d$-wise $(\alpha,\beta)$-continuous/computable
and $g:B\to C$ be weakly $k$-wise $(\beta,\gamma)$-continuous/computable.
Then $g\circ f:A\to C$ is weakly $d\cdot k$-wise $(\alpha,\gamma)$-continuous/computable.
\item[c)]
If $f:A\to B$ is weakly $d$-wise $(\alpha,\beta)$-continuous (computable)
and $\alpha'\reduceqt\alpha$ ($\alpha'\reduceq\alpha$)
and $\beta\reduceqt\beta'$ ($\beta\reduceq\beta')$,
then $f$ is also weakly $d$-wise $(\alpha',\beta')$--continuous (computable).
\end{enumerate}
\end{myremark}
Notice that property~b) does not carry over to \emph{multi-}representations
in the sense of \cite{Multirepresentations};
cf. the discussion preceeding Lemma~\ref{l:Comp2} below.

We also observe that Lemma~\ref{l:Flag} does not admit a converse,
even for total functions between compact spaces:

\begin{observation} \label{l:noFlag}
The function $\tilde f:\calS^1\to[-1,+1]$ from Example~\ref{x:Weak}
is not 2-continuous yet
has no witness of 2-discontinuity.
\end{observation}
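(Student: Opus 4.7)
The first assertion (that $\tilde f$ is not 2-continuous) has already been established in Example~\ref{x:Weak}, so the plan is to show that $\tilde f$ admits no witness of 2-discontinuity. Suppose, towards a contradiction, that $x,(x_n)_n,(x_{n,m})_{n,m}$ is such a witness. By Definition~\ref{d:Flag}(d) applied with $k=0$ and $\ell=2$, the diagonal $(x_{m,m})_m$ must itself be a witness of discontinuity of $\tilde f$ at $x$; my goal will be to show the opposite, namely $\tilde f(x_{m,m})\to\tilde f(x)$.

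First I would classify the possible limits of $\tilde f$ at a point $p\in\calS^1=\dom(\tilde f)$. Writing $g(y):=y$ and $h(y):=y-1$ on $[0,1)$ as in Example~\ref{x:Weak}, the only sub-sequential limits of $\tilde f$ along sequences in $\dom(\tilde f)$ converging to $p\in(0,1)$ are $g(p)=p$ along (eventually) rational sequences and $h(p)=p-1$ along (eventually) irrational ones; at the identified point $0=1\in\calS^1$ we have $\tilde f(0)=0$ with possible nonzero limits $+1$ (along rationals $\nearrow 1$) and $-1$ (along irrationals $\searrow 0$). Since sub(multi)sequences of a witness of discontinuity are again such witnesses (cf.\ the remarks following Definition~\ref{d:Flag}), I may repeatedly pass to subsequences so that each sequence $(x_n)$ and $(x_{n,m})_m$ has a fixed rational/irrational type and, when approaching $0\in\calS^1$, approaches from a fixed arc of the circle.

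Now I would argue case by case on $x$. If $x$ is rational in $(0,1)$, then $\tilde f(x)=x$, and for $(x_n)_n$ to witness discontinuity at $x$ the entries $x_n$ must be (w.l.o.g.\ all) irrational; for each $n$, the same reasoning at the irrational point $x_n$ (where $\tilde f(x_n)=x_n-1$) forces $x_{n,m}$ to be (w.l.o.g.\ all) rational. By uniformity of the flag, the diagonal $x_{m,m}$ is then rational and converges to $x$, so $\tilde f(x_{m,m})=x_{m,m}\to x=\tilde f(x)$, contradiction. The symmetric case where $x$ is irrational in $(0,1)$ yields $x_{m,m}$ irrational with $\tilde f(x_{m,m})=x_{m,m}-1\to x-1=\tilde f(x)$. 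In the remaining case $x=0\in\calS^1$, after the subsequence reduction $x_n$ is either rational with $x_n\nearrow 1$ (so $\tilde f(x_n)\to 1$) or irrational with $x_n\searrow 0$ (so $\tilde f(x_n)\to-1$); in the first subcase each $x_{n,m}$ must be irrational near $x_n$, so the diagonal is irrational with $x_{m,m}\to 0$ in $\calS^1$ and $\tilde f(x_{m,m})=x_{m,m}-1\to 0=\tilde f(0)$, and the second subcase is symmetric. Thus in every case $\tilde f(x_{m,m})\to\tilde f(x)$, contradicting the witness property of the diagonal.

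The main obstacle I anticipate is the case $x=0\in\calS^1$, where sequences may approach $0$ from two different arcs of the circle and with mixed rational/irrational entries, so the initial subsequence-extraction must simultaneously fix a side and a parity. Once this is done, the mechanism is the same as in the interior cases: the rational/irrational type of $x_{n,m}$ is forced to be the opposite of that of $x_n$, which is in turn the opposite of that of $x$; uniformity of the flag then makes the diagonal $x_{m,m}$ inherit the type of $x$ itself, on which $\tilde f$ agrees with a \emph{continuous} function (either $g$ or $h$), forcing $\tilde f(x_{m,m})\to\tilde f(x)$.
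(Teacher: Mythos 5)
Your proof is correct and follows essentially the same route as the paper's: the same case split on whether $x$ is a rational or irrational point of $(0,1)$ or the identified point $0\equiv1$, the same subsequence extraction to fix a rational/irrational type (and, at $0\equiv1$, a side of the circle), the same observation that each level of the flag must have the opposite type from the previous one, and the same conclusion that the diagonal $(x_{m,m})_m$ therefore has the same type as $x$, so $\tilde f$ agrees with a continuous function ($g$ or $h$) along it and $\tilde f(x_{m,m})\to\tilde f(x)$, contradicting the diagonal requirement in Definition~\ref{d:Flag}d). Your preliminary classification of the sub-sequential limits of $\tilde f$ and the explicit identification of the only two viable subcases at $0\equiv1$ (rationals $\nearrow1$ giving limit $1$; irrationals $\searrow0$ giving limit $-1$) is slightly more fully spelled out than the paper's, which simply names those two subcases without justifying why the other two are excluded, but the substance is the same.
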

\begin{proof}
Suppose $\big\{x,(x_n),(x_{n,m})\big\}$ is a
witness of 2-discontinuity of $\tilde f$.
First consider the
{\setlength{\parskip}{-9pt}\begin{itemize}
\item case $x\in(0,1)\cap\IQ$.
Since $x_n\to x$ and $x=\tilde f(x)\neq\lim_n\tilde f(x_n)$,
w.l.o.g. $0<x_n<1$ and $x_n\not\in\IQ$:
otherwise proceed to an appropriate subsequence.
Now $\lim_m x_{n,m}=x_n$ and 
$\lim_m \tilde f(x_{n,m})\neq\tilde f(x_n)=x_n-1$
requires, by definition of $\tilde f$,
$\tilde f(x_{n,m})=x_{n,m}$ for almost all $m$
and $n$: contradicting that a witness of discontinuity
is required to satisfy
$\lim_m \tilde f(x_{m,m})\neq\tilde f(x)$
and $\lim_m x_{m,m}=x$.
\item Case $x\in(0,1)\setminus\IQ$:
similarly.
\item Case $x=0\equiv 1$:
As $x_n\to x$ and since
$0\neq\tilde f(x)\neq\lim_n\tilde f(x_n)$ exists,
we may consider two subcases:
\item Subcase $x_n\in(1/2,1)\cap\IQ$
for almost all $n$: \\
Now $x_n=\lim_m x_{n,m}$ and 
$x_n=\tilde f(x_n)\neq\lim_m\tilde f(x_{n,m})$
requires, by definition of $\tilde f$, 
$\tilde f(x_{n,m})=x_{n,m}-1$ for almost all $m$
and $n$: contradicting $\lim_m x_{m,m}=x$ and
$\lim_m \tilde f(x_{m,m})\neq\tilde f(x)=0$.
\item Subcase $x_n\in(0,1/2)\setminus\IQ$
for almost all $n$: similarly.
\qed\end{itemize}}\end{proof}

%%%%%%%%%%%%%%%%%%%%%%%%%%%%%%%%%%%%%%%%%%%%%%%%%%%%%%%%%%%%
\section{Multivalued Functions, i.e. Relations} \label{s:Multival}
Many applications involve functions
which are `non-deterministic' in the sense that,
for a given input argument $x$, several values $y$ 
are acceptable as output;
recall e.g. Items~i) and ii) in Section~\ref{s:Intro}.
Also in linear algebra,
given a singular matrix $A$, we want to find
\emph{some} (say normed) vector $\vec v$ such that
$A\cdot\vec v=0$. This is reflected by relaxing the
mapping $f:x\to y$ to be not a function but
a relation (also called multivalued function);
writing $f:X\toto Y$ instead of 
$f:X\to 2^Y\setminus\{\emptyset\}$
to indicate that for an input $x\in X$,
\emph{any} output $y\in f(x)$ is acceptable.
Many practical problems have been shown computable
as multivalued functions but admit no
computable single-valued so-called \emph{selection};
cf. e.g. \mycite{Exercise~5.1.13}{Weihrauch},
\cite[\textsc{Lemma~12} or \textsc{Proposition~17}]{LA}, 
and the left of Figure~\ref{f:multival} below.
On the other hand, even relations often lack computability 
merely for reasons of continuity---and 
appropriate additional discrete advice renders them
computable, recall Example~\ref{x:Diag} above.

Now Definition~\ref{d:Nonunif} of the 
complexity of non-uniform computability
straight-forwardly extends from single-valued
to multivalued functions; 
and Observation~\ref{o:MainThm2} relates them
to (single-valued) realizers;
which can then be treated using Lemma~\ref{l:Flag}.
However a direct generalization of Lemma~\ref{l:Flag}
to multivalued mappings turns out to be more convenient.
This approach requires a notion of (dis-)continuity
for relations rather than for functions.

%%%%%%%%%%%%%%%%%%%%%%%%%%%%%%%%%%%%%%%%
\subsection{Continuity for Multivalued Mappings}
Like single-valued computable functions (recall the \textsf{Main Theorem}), 
also computable relations satisfy certain topological conditions.
However for such multivalued mappings, literature knows a variety
of easily confusable notions \cite{Neumaier}.
\emph{Hemicontinuity} for instance is not necessary
for real computability; cf. Example~\ref{x:Multival}a) below.
It may be tempting to regard \emph{computing} a multivalued mapping $f$ 
as the task of calculating, given $x$, the set-value $f(x)$ \cite{Spreen}.
In our example applications, however, one wants to capture that
a machine is permitted, given $x$, to `nondeterministically'
choose and output \emph{some} value $y\in f(x)$.
Note that this coincides with \mycite{Definition~3.1.3}{Weihrauch}.
In particular we do not insist that, upon input $x$,
\emph{all} $y\in f(x)$ occur as output for \emph{some}
nondeterministic choice---as required in \mycite{Section~7}{BrattkaCooper}.
Instead, let us generalize Definition~\ref{d:Flag} as follows:

\begin{definition} \label{d:Multival}
Fix some possibly multivalued mapping
$f:\subseteq X\toto Y$ 
and write $\dom(f):=\{x\in X:f(x)\neq\emptyset\}$.
\\
Call $f$ \textsf{continuous} at $x\in X$ if
there is some $y\in f(x)$ such that
for every open neighbourhood $V$ of $y$
there exists a neighbourhood $U$ of $x$
such that $f(z)\cap V\neq\emptyset$ for all $z\in U$.
\end{definition}
For ordinary (i.e. single-valued) functions $f$,
$\dom(f)$ amounts to the usual notion;
and such $f$ is obviously
continuous (at $x$) iff it is continuous (at $x$)
in the original sense.
\begin{figure}[htb]
\centerline{\includegraphics[width=0.95\textwidth]{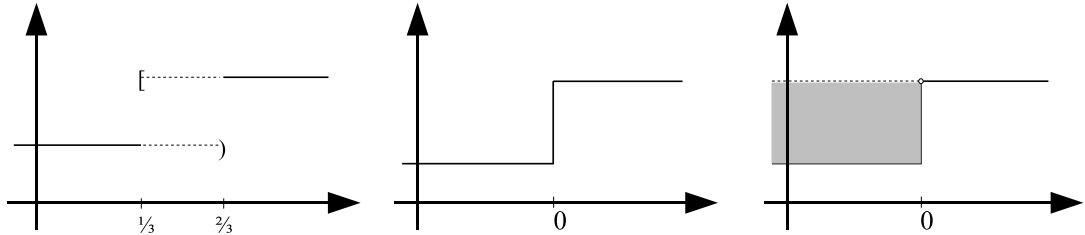}}
\caption{\label{f:multival}Left: A
$(\myrho,\myrho)$--computable relation which is not
hemicontinuous nor admits a
continuous selection. Middle: Quantification
over all $y\in f(x)$ is generally necessary
to capture discontinuity of a multivalued function.}
\end{figure}

\begin{myexample} \label{x:Multival}
\begin{enumerate}
\item[a)]
Consider the left of Figure~\ref{f:multival}, i.e. the multivalued function
\[ f:[0,1]\toto[0,1], \qquad
  1/3 > x\mapsto \{0\}, \quad
  [1/3,2/3)\ni x\mapsto \{0,1\}, \quad
  2/3 \leq x\mapsto \{1\}  \enspace . \]
Then $f$ is neither lower nor upper hemicontinuous---yet 
$(\myrho,\myrho)$--continuous, even computable:
Given $(q_n)\subseteq\IQ$ with $|x-q_n|\leq2^{-n}$,
test $q_3$: if $q_3\leq1/2$ output $0$, otherwise output $1$.
Indeed, $|x-q_3|\leq1/8$ implies $x\leq5/8<2/3$ for $q_3\leq1/2$,
hence $0\in f(x)$; whereas $q_3>1/2$ implies $x\geq3/8>1/3$,
hence $1\in f(x)$.
\item[b)]
Referring to the middle part of Figure~\ref{f:multival}, the multivalued function
\[ g:[-1,1]\toto[0,1], \qquad [-1,0)\ni x\mapsto \{0\},
\quad 0\mapsto[0,1], \quad (0,1]\ni x\mapsto\{1\} \]
is not continuous at 0 w.r.t. any $y\in f(0)=[0,1]$
\textsf{although} $f(0)$ itself 
does intersect $f(z)$ for all $z$.
\item[c)]
Consider the right part of Figure~\ref{f:multival}, i.e. the multivalued function
\[ h:[-1,+1]\toto[0,1], \qquad
0\geq x\mapsto [0,1), \quad 0<x\mapsto\{1\} \enspace . \]
Then $x_n:=2^{-n}$ constitutes a witness of discontinuity of $h$ at $x=0$
in the sense of Definition~\ref{d:Flag2}a) below:
For every $y\in h(x)=[0,1)$, $V:=(0,y/2+1/2)\ni y$
is an open neighbourhood of $y$ 
disjoint from $h(x_n)=\{1\}$ for all $n$.
%Consider some hypothetical continuous
%$(\myrho,\myrho)$--realizer $H$ of $h$.
%Then, given $q_n:\equiv0$ as a $\myrho$--name of $x:=0$,
%$H$ yields a $\myrho$--name $(p_m)$ of some $y<1$, 
%i.e. with $p_m+2^{-m}<1$ for some $m\in\IN$
%depending only on some finite initial input $(q_1,\ldots,q_k)$.
%Then extending this partial input to
%$q_{k+n}:\equiv2^{-k}$ as a $\myrho$--name of $x'=2^{-k}$, 
%we arrive at a contradiction to $h(x')=\{1\}$ not
%admitting output of $p_m+2^{-m}<1$.
\end{enumerate}
\end{myexample}
Lemma~\ref{l:Comp}a) literally applies also to multivalued mappings $f:A\toto B$.
Similarly generalizing Lemma~\ref{l:Comp}b) 
is quite cumbersome: For $B=\bigcup_i B_i$,
the preimages $f^{-1}[B_i]$,
{\setlength{\parskip}{-4pt}\begin{itemize}
\item if defined as $\{a\in A:f(a)\subseteq B_i\}$, need not cover $A$ 
\item if defined as $\{a\in A:f(a)\cap B_i\neq\emptyset\}$,
  need not be mapped to within $B_i$ by $f$.
\end{itemize}}\noindent
On the other hand, already the following partial generalization
of Lemma~\ref{l:Comp}b) turns out as useful:
\begin{lemma} \label{l:Comp2}
\begin{enumerate}
\item[a)]
Let $f:A\to B$ be single-valued and $g:B\toto C$ multivalued.
If $f$ is $d$-continuous (computable) 
and $g$ is $k$-continuous (computable),
then $g\circ f:A\toto C$ is $(d\cdot k)$-continuous (computable).
\item[b)]
Let $f:A\toto B$ and $g:B\toto C$ be multivalued.
If $f$ is $d$-continuous (computable)
and $g$ is continuous (computable),
then $g\circ f:A\toto C$ is again $d$-continuous (computable).
\end{enumerate}
\end{lemma}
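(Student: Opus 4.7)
The plan is to handle the two parts separately, refining or reusing the partition witnessing $d$-continuity of $f$ and chaining neighbourhoods according to Definition~\ref{d:Multival}. For part (a), let $\{A_1,\ldots,A_d\}$ partition $\dom(f)$ with each $f|_{A_i}$ continuous and $\{B_1,\ldots,B_k\}$ partition $\dom(g)$ with each $g|_{B_j}$ continuous. Because $f$ is single-valued, every $a$ has a unique image $f(a)\in B$ lying in a unique block $B_j$; hence the refinement $A_{i,j}:=A_i\cap f^{-1}(B_j)$ yields a partition of $\dom(g\circ f)$ of cardinality at most $d\cdot k$. To verify continuity of $(g\circ f)|_{A_{i,j}}$ at $a\in A_{i,j}$, take the witness $c\in g(f(a))$ from continuity of $g|_{B_j}$ at $f(a)$; for any open $W\ni c$ extract a neighbourhood $V\ni f(a)$ with $g(b)\cap W\neq\emptyset$ for $b\in V\cap B_j$, and then use continuity of $f|_{A_i}$ to pull $V$ back to a neighbourhood $U$ of $a$. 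For computability, realizers compose via combined advice $(i,j)$: run the realizer for $f|_{A_i}$ on the input to obtain a $\beta$-name of $f(a)\in B_j$, then feed it into the realizer for $g|_{B_j}$.

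For part (b), I would keep the partition $\{A_1,\ldots,A_d\}$ of $f$ unchanged and argue continuity of $(g\circ f)|_{A_i}$ directly. Given $a\in A_i\cap\dom(g\circ f)$, continuity of $f|_{A_i}$ at $a$ yields some witness $b\in f(a)$, and continuity of $g$ at $b$ yields $c\in g(b)\subseteq(g\circ f)(a)$. For any open $W\ni c$, continuity of $g$ shrinks to a neighbourhood $V\ni b$ with $g(b')\cap W\neq\emptyset$ for all $b'\in V\cap\dom(g)$; continuity of $f|_{A_i}$ then supplies $U\ni a$ with $f(z)\cap V\neq\emptyset$ for all $z\in U\cap A_i$, and any such $b'\in f(z)\cap V$ produces $g(b')\cap W\neq\emptyset$, hence $(g\circ f)(z)\cap W\neq\emptyset$. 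Computably, on advice $i$ the realizer for $f|_{A_i}$ emits a $\beta$-name of some $b\in f(a)$ which is piped into the unconditional realizer for $g$, yielding a $\gamma$-name of some $c\in g(b)\subseteq(g\circ f)(a)$.

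The delicate point, and the step I expect to require most care, is the alignment in part (b) between the existential witness $b\in f(a)$ supplied by $f$'s continuity and the hypothesis $b\in\dom(g)$ needed to invoke $g$'s continuity at $b$. Although $a\in\dom(g\circ f)$ forces $f(a)\cap\dom(g)\neq\emptyset$, the particular witness from $f|_{A_i}$ might a priori fall outside $\dom(g)$. I would resolve this by arguing that the witness may be taken within $\dom(g)$ -- shrinking the neighbourhood $V$ to remain inside $\dom(g)$ and noting that the pulled-back $U$ still exists, since $f|_{A_i}$ intersects every neighbourhood of its witness -- or, in the computable case, by observing that the realizer for $g$ halts precisely on $\beta$-names of elements of $\dom(g)$, so that the composed machine defines $g\circ f$ exactly on the intended domain. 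The remaining neighbourhood-chain and realizer-chain verifications are routine bookkeeping paralleling the proof of Lemma~\ref{l:Comp}(b).
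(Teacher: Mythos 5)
Your proposal follows the same two-step route as the paper's (quite terse) proof: for (a), refine the partition to $A_i\cap f^{-1}(B_j)$, which is unambiguous precisely because $f$ is single-valued --- this is exactly what the paper points out in order to re-use the argument from Lemma~\ref{l:Comp}b); for (b), keep the partition $\{A_1,\ldots,A_d\}$ coming from $f$ and observe that post-composing with an (un-partitioned) continuous/computable $g$ preserves continuity/computability on each piece. So the strategy is correct and matches the paper.

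The worry you raise about part~(b) --- that the witness $b\in f(a)$ supplied by continuity of $f|_{A_i}$ at $a$ might lie outside $\dom(g)$ --- is a real subtlety, but your proposed remedy does not actually work. If $b\notin\dom(g)$ then no neighbourhood $V$ of $b$ can be ``shrunk to lie inside $\dom(g)$'', and continuity of $f$ at $a$ (Definition~\ref{d:Multival}) speaks only about \emph{that one} witness $b$; it gives you no control over neighbourhoods of a different point of $f(a)\cap\dom(g)$. The issue dissolves once one adopts the standard TTE composition convention for multivalued maps, namely $\dom(g\circ f):=\{a\in\dom(f):f(a)\subseteq\dom(g)\}$ --- the convention that is in any case forced if computability is to be closed under composition, since a realizer for $f|_{A_i}$ may nondeterministically emit a $\beta$-name of \emph{any} $b\in f(a)$ and that name must be accepted by the realizer for $g$. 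Under this convention every $b\in f(a)$ with $a\in\dom(g\circ f)$ lies in $\dom(g)$, and likewise every $b'\in f(z)$ for $z$ in the pulled-back neighbourhood restricted to $\dom(g\circ f)$, so both the $\epsilon$--$\delta$ chain and the realizer chain close cleanly. The paper's one-line proof glosses over this entirely, so noticing the gap was a good instinct; just replace the ``shrink $V$'' fix with the composition-domain convention.
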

\begin{proof}
\begin{enumerate}
\item[a)]
Since $f$ is single-valued, the set
$A_i\cap f^{-1}[B_j]$ is unambiguous and 
mapped by $f$ to a subset of $B_j$;
that is the proof of Lemma~\ref{l:Comp}b) carries over.
\item[b)]
If $f$ is continuous (computable) on each $A_i$,
then so is $g\circ f$.
\qed\end{enumerate}\end{proof}
Lemma~\ref{l:Flag2}a) below is an immediate extension of the
\textsf{Main Theorem of Recursive Analysis}, showing that
any computable \emph{multi}valued mapping is necessarily continuous.
%XXX \cite{BrattkaHertling94}
It seems unknown whether also the converse, namely the
\textsf{Kreitz-Weihrauch Representation Theorem},
extends to the multivalued (for a start, real) case:

\begin{myquestion} \label{q:Kreitz}
Is the notion of multivalued continuity 
in Definition~\ref{d:Multival} strong enough
to assert that any function
$f:\subseteq\IR\toto\IR$ satisfying it
admits a Cantor-continuous 
$(\myrho,\myrho)$--realizer?
\end{myquestion}
%
%%%%%%%%%%%%%%%%%%%%%%%%%%%%%%%%%%%%%%%%%%%%%%%%%%%%%%%%%
\subsection{Witnesses of Discontinuity}
\begin{definition} \label{d:Flag2}
\begin{enumerate}
\item[a)]
For $x\in\dom(f)$,
a \textsf{witness of discontinuity of $f$ at $x$} 
is a sequence $(x_n)\in\dom(f)$ converging to $x$ such that,
for every $y\in f(x)$ there is some 
open neighbourhood $V$ of $y$
disjoint from $f(x_n)$ for
infinitely many $n\in\IN$.
\item[b)]
A uniform $d$-dimensional flag $\calF$ in $X$ 
is a \textsf{witness of $d$-discontinuity of $f$} if,
for each $0\leq k<d$
and for each $\bar n\in\IN^k$
and for each $1\leq\ell\leq d-k$ 
and for each $y\in f(x_{\bar n})$,
$\big(x_{\bar n,\underbrace{\scriptscriptstyle m,\ldots,m}_{\ell\text{ times}}}\big)_m$
is a witness of discontinuity of $f$ at $x_{\bar n}$.
\end{enumerate}
\end{definition}
If multivalued $f$ admits a witness of discontinuity at $x$,
then $f$ is not continuous.
Conversely, if $X$ is first-countable,
discontinuity of $f$ at $x$ yields the existence
of a witness of discontinuity at $x$. 
Also, witnesses of $1$-discontinuity 
coincide with witnesses of discontinuity;
and they generalize the definition from the single-valued case.
Lemma~\ref{l:Flag2} below extends Lemma~\ref{l:Flag}
in showing that a witness of $d$-discontinuity of $f$
inhibits $d$-computability.

\begin{lemma} \label{l:Flag2}
Let $(A,\alpha)$ and $(B,\beta)$ be effective metric spaces\footnote{%
Cf. \mycite{Section~8.1}{Weihrauch} for a formal definition
and imagine Euclidean spaces $\IR^k$ as major examples
and focus of interest for our purpose.}
with corresponding Cauchy representations
and $f:\subseteq A\toto B$ a possibly multivalued mapping.
\begin{enumerate}
\item[a)]
If $f$ admits a witness of discontinuity, 
then it is not $(\alpha,\beta)$--continuous.
\item[b)]
%If $f$ is $(\alpha,\beta)$--discontinuous,
%then there exists a witness of discontinuity.
%\item[c)]
If $f$ admits a witness of $d$-discontinuity, 
then it is not $d$-wise $(\alpha,\beta)$--continuous.
\end{enumerate}
\end{lemma}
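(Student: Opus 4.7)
The plan is to treat part~(a) first and then use it as a base case in a pigeonhole-and-diagonalize argument extending the proof of Lemma~\ref{l:Flag} to multivalued mappings.

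For part~(a), I would prove the contrapositive: any $(\alpha,\beta)$--continuous $f:\subseteq A\toto B$ must be continuous at every $x\in\dom(f)$ in the sense of Definition~\ref{d:Multival}. This is the ``easy'' half of a multivalued Kreitz--Weihrauch theorem, exactly the direction that Question~\ref{q:Kreitz} leaves open in reverse. Fix a Cantor-continuous $(\alpha,\beta)$--realizer $F$ and pick any $\alpha$--name $\bar\sigma$ of $x$; then $y:=\beta\circ F(\bar\sigma)\in f(x)$ is the witness for the ``$\exists\,y$'' clause. Given an open $V\ni y$, admissibility of the Cauchy representation $\beta$ supplies a prefix $w\sqsubset F(\bar\sigma)$ with $\beta[w\Sigma^\omega\cap\dom(\beta)]\subseteq V$; Cantor-continuity of $F$ supplies a prefix $v\sqsubset\bar\sigma$ whose $F$--image extends $w$; and the Cauchy structure of $\alpha$ lets $v$ be extended to an $\alpha$--name of any $z$ in a small enough metric ball $U$ around $x$, so $F$ then produces a $\beta$--name of some point of $f(z)\cap V\neq\emptyset$. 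A witness of discontinuity at $x$ supplies, for exactly this $y$, an open $V\ni y$ disjoint from $f(x_n)$ infinitely often---contradicting what has just been established.

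For part~(b), I plan to mimic the proof of Lemma~\ref{l:Flag} verbatim, with part~(a) applied to subspace Cauchy representations replacing the classical Main Theorem as the device that promotes $(\alpha,\beta)$--continuity of each $f|_{D_i}$ to multivalued topological continuity. Suppose $\dom(f)=\biguplus_{i=1}^d D_i$ with each $f|_{D_i}$ being $(\alpha,\beta)$--continuous; w.l.o.g.\ $x\in D_1$. At the first flag level, pigeonhole on $(x_n)$ gives some $D_{i_1}$ containing (w.l.o.g.\ all) $x_n$: if $i_1=1$, then continuity of $f|_{D_1}$ at $x$ yields a single $y_0\in f(x)$ for which every neighborhood eventually meets $f(x_n)$, directly contradicting the ``for every $y$'' quantifier in Definition~\ref{d:Flag2}a) applied at $y=y_0$; so $i_1\neq 1$, w.l.o.g.\ $i_1=2$. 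At level two, per-$n$ pigeonhole on $(x_{n,m})_m$ excludes $i(n)=2$ by the analogous argument at $x_n\in D_2$; a further pigeonhole on $n\mapsto i(n)$ fixes an index $i_2$, and uniformity of the flag makes the diagonal $(x_{m,m})_m$ itself a witness of discontinuity of $f$ at $x$, so combining with continuity of $f|_{D_{i_2}}$ at $x$ forces $i_2\neq 1$, hence w.l.o.g.\ $i_2=3$. Iterating through all $d$ levels exhausts $\{1,\ldots,d\}$---the desired contradiction.

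The principal obstacle is the quantifier alignment in part~(a): the ``$\exists\,y$'' of Definition~\ref{d:Multival} must be filled by exactly the $y$ that the realizer $F$ produces on the chosen name of $x$, which is why the hypothesis of Cauchy representations on effective metric spaces is indispensable---it ensures that $\alpha$--names of nearby points can be chosen to share arbitrarily long prefixes with a fixed name of $x$, so that the finite prefix $v$ really converts back into a topological neighborhood $U$. A secondary subtlety in part~(b) is that the ``for every $y\in f(x_{\bar n})$'' clause at each flag node in Definition~\ref{d:Flag2}b) is exactly what is needed to refute the single $y_0$ produced by continuity of $f|_{D_{i_k}}$; once this quantifier match-up is in place, the combinatorial bookkeeping through the $d$ levels is otherwise identical to the single-valued argument of Lemma~\ref{l:Flag}.
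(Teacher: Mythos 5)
Your proof is correct and follows essentially the same route as the paper: part~(a) hinges on the same prefix/Cauchy-name argument (a finite prefix of a name of $x$ determines a finite prefix of a name of some $y\in f(x)$ pinning $y$ to $V$, while that input prefix is extendable to names of infinitely many $x_n$ whose $f$-values miss $V$), and part~(b) is exactly what the paper means by "combines the arguments for a) with the proof of Lemma~\ref{l:Flag}"---a pigeonhole-and-diagonalize over the $d$ flag levels, using~(a) at each level to rule out $D_i$ with $i$ already excluded. Your write-up of~(b) is more detailed than the paper's one-line proof, but it is the same argument.
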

\begin{proof}
\begin{enumerate}
\item[a)]
Suppose $F:\subseteq\Sigma^\omega\to\Sigma^\omega$ 
is a continuous $(\alpha,\beta)$--realizer of $f$.
It maps some $\alpha$-name $\bar\sigma$
of $x$ to a $\beta$-name $\bar\tau$ of some $y\in f(x)$.
Now consider the neighbourhood $V\ni y$ according
to Definition~\ref{d:Multival}b).
By definition of the Cauchy representation $\beta$,
some finite initial part $(\tau_1,\ldots,\tau_M)=:\bar\tau|_{\leq M}$ 
of $\bar\tau$ restricts $y$ to belong to $V$;
and by continuity of $F$, this $\bar\tau|_{\leq M}$
depends on some finite initial part 
$\bar\sigma|_{\leq N}$ of $\bar\sigma$.
On the other hand $\bar\sigma|_{\leq N}$ is 
also initial part of an $\alpha$-name of some
element $x_n$ of the witness of discontinuity;
in fact of infinitely many of them.
But for $n$ sufficiently large,
$f(x_n)$ was supposed to not meet $V$; 
that is $\bar\tau|_{\leq M}$ is not initial part 
of a $\beta$-name of any $y'\in f(x_n)$:
contradiction.
\item[b)]
combines the arguments for a) with the proof of \ref{l:Flag}.
\qed\end{enumerate}\end{proof}
In comparison with the single-valued case,
a witness of discontinuity of a multivalued mapping 
involves one additional 
quantifier ranging universally over all $y\in f(x)$;
and Example~\ref{x:Multival}b) shows that this is
generally also necessary.
Nevertheless,
the following tool gives a (weaker yet) simpler condition
to be applied in Section~\ref{s:Applications}.

\begin{lemma} \label{l:Simpler}
Fix metric spaces $(A,\alpha)$ and $(B,\beta)$,
$\epsilon>0$, and $f:\subseteq A\toto B$.
\begin{enumerate}
\item[a)]
 For $S,T\subseteq A$,
 $\ball(S,\epsilon):=\{a\in A|\exists s\in S: d_A(a,s)<\epsilon\}$
 is disjoint from $T$ ~iff~ $\ball(T,\epsilon)$ is disjoint from $S$,
 and implies $\ball(S,\epsilon/2)\cap\ball(T,\epsilon/2)=\emptyset$.
\item[b)]
 Let $u_n$ and $v_n$ denote sequences in $\dom(f)$
 with $\lim_n u_n=x=\lim_n v_n$ such that
 $\ball\big(f(u_n),\epsilon\big)$ is disjoint from
  $f(v_n)$ for all but finitely many $n$.
 Then at least one of the sequences is a 
 witness of discontinuity of $f$ at $x$.
\item[c)]
 For $r\in\IN$ and $1\leq i\leq r$,
 let $(x_n^{(i)})_{_n}$ denote sequences in $\dom(f)$
 with $\lim_n x_n^{(i)}=x$ such that
 $\bigcap_{i=1}^r \ball\big(f(x_n^{(i)}),\epsilon\big)=\emptyset$
 holds for infinitely many $n$.
 Then, for some $i$, $(x_n^{(i)})_{_n}$ is a 
 witness of discontinuity of $f$ at $x$.
\item[d)]
 Fix $r,d\in\IN$ and
 consider a family of (multi-)sequences
\[ x, \quad x_{n_1}^{(i_1)}, \quad 
x_{n_1,n_2}^{(i_1,i_2)}, \quad \ldots, \quad
x_{n_1,\ldots,n_d}^{(i_1,\ldots,i_d)}, \qquad  n_1,\ldots,n_d\in\IN,
  \quad 1\leq i_1,\ldots,i_d\leq r \]
 such that, for each $\bar\imath\in\{1,\ldots,r\}^d$,
 $\big(x,x_{n_1}^{(i_1)},x_{n_1,n_2}^{(i_1,i_2)},\ldots,
 x_{n_1,\ldots,n_d}^{(i_1,\ldots,i_d)}\big)$ constitutes
 a uniform $d$-dimensional flag.
 Furthermore suppose that,
 for each $\bar n\in\IN^k$ and $\bar\imath\in\{1,\ldots,r\}^k$  ($0\leq k<d$) 
 and for each $1\leq\ell\leq d-k$,
\begin{equation} \label{e:Simpler}
\bigcap\nolimits_{j=1}^r \ball\Big(f\big(
x^{(\bar\imath,\overbrace{\scriptscriptstyle j,\ldots,j}^{\ell\text{ times}})}_{
  \bar n,\underbrace{\scriptscriptstyle m,\ldots,m}_{\ell\text{ times}}}\big),
\epsilon\Big)\quad=\quad\emptyset
\end{equation}
for infinitely many $m\in\IN$.
Then this family contains a witness of $d$-discontinuity of $f$.
\end{enumerate}
\end{lemma}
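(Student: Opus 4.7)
The plan is to prove the four parts in sequence: (a) follows from simple metric considerations; (c) is the main engine, established by contradiction and a finite pigeonhole; (b) drops out by combining (a) with (c); and (d) is an inductive lift of (c) through the flag structure.

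For (a), I would note that both $\ball(S,\epsilon)\cap T=\emptyset$ and $\ball(T,\epsilon)\cap S=\emptyset$ encode the same symmetric condition $d(s,t)\geq\epsilon$ for all $s\in S,\,t\in T$, yielding the first equivalence. The implication to $\ball(S,\epsilon/2)\cap\ball(T,\epsilon/2)=\emptyset$ is then a direct consequence of the triangle inequality: any common point $a$ would produce $s\in S,\,t\in T$ with $d(s,t)\leq d(s,a)+d(a,t)<\epsilon$, contradicting the first condition.

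For (c), I would argue by contraposition. Suppose none of the $r$ sequences $(x_n^{(i)})_n$ is a witness of discontinuity of $f$ at $x$; negating Definition~\ref{d:Flag2}a gives, for each $i$, that every $y\in f(x)$ satisfies $d(y,f(x_n^{(i)}))\to 0$ as $n\to\infty$. Now fix one $y\in f(x)$ and use the finiteness of $\{1,\ldots,r\}$: the exceptional union $\bigcup_{i=1}^{r}\{n:d(y,f(x_n^{(i)}))\geq\epsilon/2\}$ is finite, so for cofinitely many $n$ one has $y\in\bigcap_{i=1}^{r}\ball(f(x_n^{(i)}),\epsilon/2)\subseteq\bigcap_{i=1}^{r}\ball(f(x_n^{(i)}),\epsilon)$, contradicting the hypothesis that the latter intersection is empty for infinitely many $n$. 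Part (b) then follows by chaining (a) and (c): by (a) the stated disjointness yields $\ball(f(u_n),\epsilon/2)\cap\ball(f(v_n),\epsilon/2)=\emptyset$ for all but finitely many $n$, and applying (c) with $r:=2$, $x_n^{(1)}:=u_n$, $x_n^{(2)}:=v_n$, and $\epsilon$ rescaled to $\epsilon/2$ forces one of $(u_n)_n,(v_n)_n$ to be a witness of discontinuity at $x$.

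For (d), the most intricate part, one must select $\bar\imath\in\{1,\ldots,r\}^d$ so that the resulting flag is a witness of $d$-discontinuity. Since the hypothesis only furnishes empty-intersection data for tails of the form $(j,\ldots,j)$, matching it to the required witness condition at every level $(k,\bar n,\ell)$ forces $\bar\imath$ to be a \emph{constant} tuple $(i,\ldots,i)$. The approach is then to iterate (c) through the flag: at each level $(k,\bar n,\ell)$, (c) applied with prefix $(i,\ldots,i)$ names some $j^\ast\in\{1,\ldots,r\}$ turning the corresponding tail-$j^\ast$ diagonal into a witness, and the remaining task is to secure one $i$ serving as $j^\ast$ uniformly across all $k,\bar n,\ell$. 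I would handle this by induction on $d$: the base $d=1$ is (c) itself; in the inductive step one picks a topmost candidate $i^\ast$ via (c), applies the induction hypothesis to each $(d{-}1)$-dimensional sub-family rooted at $x_{n_1}^{(i^\ast)}$, and uses pigeonhole on $\{1,\ldots,r\}$ to extract a subsequence in $n_1$ along which the inductively produced inner index is a single fixed $i^{\ast\ast}$. The hardest step will be reconciling $i^\ast$ with $i^{\ast\ast}$; this likely requires either a strengthening of (c)'s conclusion (showing that a positive fraction of indices, not just one, work) or iterative back-tracking of the outer selection to force agreement across the levels.
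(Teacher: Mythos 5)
Your part~(a) is correct, and part~(b) is a legitimate reduction to (c) via (a), so both stand or fall with (c).

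The error is in (c): you have negated the witness condition with the wrong quantifier. Definition~\ref{d:Flag2}a) says a witness of discontinuity is a sequence $(x_n)$ converging to $x$ such that \emph{for every} $y\in f(x)$ there is a neighbourhood of $y$ missing $f(x_n)$ for infinitely many $n$. Its negation is therefore that \emph{there exists some} $y_i\in f(x)$ with $d(y_i,f(x_n^{(i)}))\to 0$ --- not, as you wrote, that \emph{every} $y\in f(x)$ has this property. With the correct (existential) negation the $r$ non-witness assumptions only hand you $r$ possibly distinct points $y_1,\ldots,y_r\in f(x)$, one per sequence, and the step ``fix one $y\in f(x)$'' no longer places a common $y$ into all of $\bigcap_i\ball\big(f(x_n^{(i)}),\epsilon\big)$. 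A concrete obstruction: take $B=\{0,1\}$ with the discrete metric, $f(x)=\{0,1\}$, $f(x_n^{(1)})\equiv\{0\}$, $f(x_n^{(2)})\equiv\{1\}$, $\epsilon=\tfrac{1}{2}$; the intersection of $\epsilon$-balls is always empty, yet neither sequence is a witness in the sense of Definition~\ref{d:Flag2}a), because $y_1:=0$ and $y_2:=1$ are respective limit points of the image values. (For what it is worth, the paper's own proof of (b) starts from ``there exists some $y\in f(x)$ such that $\ball(y,\epsilon/2)$ intersects \emph{both} $f(u_n)$ and $f(v_n)$'', which implicitly collapses $y_u$ and $y_v$ into one $y$; your misquantification reproduces essentially the same hidden assumption.)

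For (d), your structural observation is sound: since (\ref{e:Simpler}) only supplies ball-intersection data for superscript tails of the constant form $(j,\ldots,j)$, the diagonal conditions in Definition~\ref{d:Flag2}b) are only covered by the hypothesis when the flag's superscript $\bar\imath$ is a \emph{constant} tuple $(i,\ldots,i)$. The paper instead selects a top-level $i$ by (c) and then a possibly different $j$ at the next level via (c) and pigeonhole, and then asserts without derivation that the $2$-diagonal $(x_{m,m}^{(i,j)})_m$ is a witness at $x$ --- a step not backed by (\ref{e:Simpler}) when $i\neq j$. Your constant-superscript route is the more defensible one, but, as you admit, you stop at precisely the hard step: nothing in (c) lets you force the \emph{same} index $i$ to be produced at all levels $(k,\bar n,\ell)$ simultaneously. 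Without a mechanism to reconcile those selections (e.g.\ a strengthened version of (c) giving information about \emph{which} indices succeed, together with a K\H{o}nig-type argument), (d) remains an unproved sketch.
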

%XXX Bild zu Lemma~\ref{l:Simpler}c) ???
\begin{proof}
\begin{enumerate}
\item[a)]
 If $t\in T\cap\ball(S,\epsilon)$,
 there is some $s\in S$ with $d_A(s,t)<\epsilon$;
 hence $s\in S\cap\ball(T,\epsilon)$.
 So $T\cap\ball(S,\epsilon)\neq\emptyset$
 implies $S\cap\ball(T,\epsilon)\neq\emptyset$.
 The converse implication holds symmetrically. \\
 For $x\in\ball(S,\epsilon/2)\cap\ball(T,\epsilon/2)$
 there exist $s\in S$ and $t\in T$ with
 $d(s,x),d(t,x)<\epsilon/2$; hence 
 $d(s,t)<\epsilon$ by triangle inequality
 and $s\in S\cap\ball(T,\epsilon)$.
\item[b)]
 Suppose conversely that there exists some
 $y\in f(x)$ such that $V:=\ball(y,\epsilon/2)$
 intersects both $f(u_n)$ and $f(v_n)$
 for all $n\geq n_0$.
 Then $y\in\ball(\big(f(u_n),\epsilon/2\big)
 \cap\big(f(v_n),\epsilon\big)$;
 hence by a), $\ball\big(f(u_n),\epsilon\big)$
 intersects $f(v_n)$: contradiction.
\item[c)]
 similarly.
\item[d)]
 The case $r=1$ is that of c). We now treat $r=2$,
 the cases of higher values proceed similarly. \\
 By c) there is some $i$ such that 
 $\big(x_{n}^{(i)}\big)$ constitutes a
 witness of discontinuity of $f$ at $x$.
 Now consider the sequences
 $\big(x_{n,m}^{(i,j)}\big)_{_m}$ for
 $n\in\IN$ and $1\leq j\leq r$.
 Again by c), to each $n$ there is some $j(n)\in\{1,\ldots,r\}$
 such that $\big(x_{n,m}^{(i,j)}\big)_{_m}$
 is a witness of discontinuity of $f$ at $x_n^{(i)}$.
 According to pigeonhole, 
 $j(n)=j$ for some $j$ and for infinitely many $n$;
 hence we may proceed to an appropriate
 subsequence of $x_n$ and presume that 
 $\big(x_{n,m}^{(i,j)}\big)_{_m}$
 is a witness of discontinuity of $f$ at $x_n^{(i)}$
 for one common $j$;
 and $\big(x_{m,m}^{(i,j)}\big)_{_m}$
 a witness of discontinuity of $f$ at $x$:
 arriving at $\big(x,x^{(i)}_n,x^{(i,j)}_{n,m}\big)$
 a witness of 2-discontinuity of $f$.
\qed\end{enumerate}\end{proof}

%%%%%%%%%%%%%%%%%%%%%%%%%%%%%%%%%%%%%%%%%%%%
\subsection{Example: Rational Approximations vs. Binary Expansion}
It is long known \cite{Turing2} that
a sequence of rational approximations 
to some $x\in\IR$ with error bounds
cannot continuously be converted into
a binary expansion of $x$. On the other hand
for \emph{non-}dyadic reals, i.e. for
\[ x \;\not\in\; \ID\;:=\;\big\{
(2r+1)/2^{k}: r,k\in\IZ\big\} \enspace , \]
such a conversion is computably possible
\mycite{Theorem~4.1.13.1}{Weihrauch};
while each \emph{rational} $x=r/s$ \underline{has} an
(ultimately periodic, hence) computable binary expansion
\mycite{Theorem~4.1.13.2}{Weihrauch}.
We observe that \emph{finding} such an expansion
for dyadic $x$ is infinitely discontinuous.
Recall that $\tfrac{1}{2}=(0.1000\ldots)_2=(0.01111\ldots)_2$
(and in fact each $x\in\ID$) admits two distinct
binary expansions.
\begin{proposition} \label{p:Rational}
The multivalued mapping
\[ \Adic_2:\ID\cap[0,1)\toto\{0,1\}^\omega, \quad
\sum_{m=1}^\infty b_m 2^{-m} \mapsto (b_1,b_2,\ldots) \]
is not $d$-wise $(\myrho,\nu^\omega)$-continuous for any
$d\in\IN$.
\end{proposition}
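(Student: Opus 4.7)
The plan is to apply Lemma~\ref{l:Flag2}b): for each fixed $d \in \IN$ I will exhibit a witness of $d$-discontinuity of $\Adic_2$ rooted at $x := 1/2$, built from iterated alternating perturbations. The basic observation---call it the \emph{Dyadic Dichotomy}---is this: if $x_0 \in \ID \cap (0,1)$ has terminating binary expansion $(b_1, \ldots, b_{p-1}, 1, 0, 0, \ldots)$ and alternative expansion $(b_1, \ldots, b_{p-1}, 0, 1, 1, \ldots)$, then for every $\delta \in (0, 2^{-p})$ both elements of $\Adic_2(x_0 + \delta)$ carry the bit $1$ at position $p$, while both elements of $\Adic_2(x_0 - \delta)$ carry the bit $0$ at position $p$. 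This holds because the integer part of $2^p(x_0 + \delta)$ equals $2^p x_0$ while that of $2^p(x_0 - \delta)$ equals $2^p x_0 - 1$. Consequently, writing $V^{\mathrm t} := \{\sigma \in \{0,1\}^\omega : \sigma_1\cdots\sigma_p = b_1\cdots b_{p-1} 1\}$ and $V^{\mathrm a} := \{\sigma : \sigma_1\cdots\sigma_{p+1} = b_1 \cdots b_{p-1} 0 1\}$, $V^{\mathrm t}$ is an open neighbourhood of the terminating expansion disjoint from $\Adic_2(x_0-\delta)$, and $V^{\mathrm a}$ is an open neighbourhood of the alternative expansion disjoint from $\Adic_2(x_0+\delta)$. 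Hence any sequence $(u_m)_m$ alternately approaching $x_0$ from above and from below by amounts less than $2^{-p}$ is a witness of discontinuity of $\Adic_2$ at $x_0$ in the sense of Definition~\ref{d:Flag2}a).

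Iterating the construction, for each multi-index $\bar n = (n_1, \ldots, n_k) \in \IN^k$ I set
\[
x_{\bar n} \;:=\; \tfrac{1}{2} \;+\; \sum_{i=1}^{k} (-1)^{n_i}\, 2^{-T_i - 1}, \qquad T_i := n_1 + \cdots + n_i,
\]
which is dyadic, lies in $(0,1)$, and has its terminating expansion's last $1$ at some position $p(\bar n) \leq T_k + 1$. For each $\bar n \in \IN^k$ with $k < d$ and each $1 \leq \ell \leq d - k$,
\[
x_{\bar n, \underbrace{\scriptscriptstyle m, \ldots, m}_{\ell\text{ times}}} - x_{\bar n} \;=\; (-1)^{m}\cdot 2^{-T_k - m - 1}\bigl(1 + 2^{-m} + \cdots + 2^{-(\ell-1)m}\bigr)
\]
is a perturbation of sign $(-1)^m$ and magnitude strictly less than $2^{-T_k - m} \leq 2^{-p(\bar n)}$ for $m$ sufficiently large. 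By the Dyadic Dichotomy this constant-multi-index subsequence is therefore a witness of discontinuity of $\Adic_2$ at $x_{\bar n}$; and since the perturbation tends to $0$, the whole family is a uniform $d$-dimensional flag in $\ID \cap [0,1)$.

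Putting the two steps together, the family is a witness of $d$-discontinuity of $\Adic_2$ in the sense of Definition~\ref{d:Flag2}b); Lemma~\ref{l:Flag2}b) then yields that $\Adic_2$ is not $d$-wise $(\myrho, \nu^\omega)$--continuous, and arbitrariness of $d$ concludes the proof. The one subtle point is the Dyadic Dichotomy itself: one must verify it holds not only for perturbations of the pure form $\pm 2^{-S}$ but for any $\delta \in (0, 2^{-p})$, because the cumulative perturbations $\sum 2^{-T_k - jm - 1}$ arising in the flag are not themselves of this pure form; this is what the integer-part computation above settles, irrespective of how many layers of the flag have been peeled off.
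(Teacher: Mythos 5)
Your proof is correct and uses essentially the same underlying flag as the paper---dyadic reals obtained from $1/2$ by nested perturbations whose scale shrinks geometrically with each level---but the bookkeeping and verification are genuinely different, and I think cleaner. The paper carries a separate sign index $\bar\imath\in\{+,-\}^k$ alongside $\bar n\in\IN^k$ and invokes Lemma~\ref{l:Simpler}d) (with $r=2$; note the paper's in-text citations to ``Lemma~\ref{l:Flag2}b)'' and ``d)'' there are slips for Lemma~\ref{l:Simpler}) to conclude that the family \emph{contains} a witness of $d$-discontinuity, the extraction relying on the pigeonhole argument inside that lemma. You instead fold the sign into the parity $(-1)^{n_i}$ of the flag index itself, so that along the constant-extension subsequence $(x_{\bar n,m,\ldots,m})_m$ the perturbation alternates around $x_{\bar n}$: the odd-$m$ subsequence kills the terminating expansion and the even-$m$ subsequence kills the alternative one, which is exactly Definition~\ref{d:Flag2}a) verbatim for both $y\in\Adic_2(x_{\bar n})$. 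This lets you apply Lemma~\ref{l:Flag2}b) directly, with no auxiliary extraction lemma. The one place your argument requires care is exactly the one you flag: the cumulative perturbation $\sum_{j=1}^\ell 2^{-T_k-jm-1}$ is not a pure power of two, so the Dyadic Dichotomy must be established for every $\delta\in(0,2^{-p})$ rather than only for $\delta=2^{-s}$; your integer-part argument does this correctly, and one can check (via the observation that no multiple of $2^{-(p-1)}$ lies strictly inside $((K-1)/2^p,(K+1)/2^p)$) that even when $x_0\pm\delta$ is itself dyadic, both of its expansions agree at position $p$, so the dichotomy is genuinely two-sided. Minor nits: you should explicitly note $|x_{\bar n}-\tfrac12|<\sum_{i\ge1}2^{-i-1}=\tfrac12$ so the flag stays in $(0,1)$ and hence in $\dom(\Adic_2)$, and the bound on the perturbation magnitude in fact holds for all $m\ge1$, not merely ``$m$ sufficiently large''---but neither affects the conclusion, since a witness only needs the disjointness for infinitely many $m$.
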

We remark that 
in fact for each $q=2,3,4,\ldots$,
the mapping $\Adic_q$ 
extracting $q$-adic expansions
is infinitely discontinuous on $\IQ$.
\begin{proof}[Proposition~\ref{p:Rational}]
Start with the rational sequence $q^{()}:=(\tfrac{1}{2},\tfrac{1}{2},\ldots)$,
a $\myrho$--name of $x:=\tfrac{1}{2}$.
Then consider the sequence of sequences 
\[ q^{(+)}_{n} \;\;:=\;\;\big(\underbrace{\tfrac{1}{2},\ldots,\tfrac{1}{2}}_{n\text{ times}},
\underbrace{\tfrac{1}{2}+2^{-n},\tfrac{1}{2}+2^{-n},\ldots}_{\infty\text{ times}}\big)
\enspace : \]
$\myrho$--names of $x^{(+)}_n:=\tfrac{1}{2}+2^{-n}$,
since $|x^{(+)}_n-q^{(0)}_{n,\ell}|=2^{-n}\leq 2^{-\ell}$ for $\ell\leq n$.
Moreover, $q^{(+)}_n$ converges in the Baire metric
to the sequence $q^{()}$. Similarly, 
$q^{(-)}_{n}:=(\tfrac{1}{2},\ldots,\tfrac{1}{2},
\tfrac{1}{2}-2^{-n},\tfrac{1}{2}-2^{-n},\ldots)$ is a $\myrho$--name 
of $x^{(-)}:=\tfrac{1}{2}-2^{-n}$ also converging to $q^{()}$.
And although the binary expansions of 
$x^{(+)}_n\neq x^{(-)}_n$ are both not unique,
\begin{align*}
x^{(+)}_n &\;=\;
(0.1\underbrace{0\cdots0}_{\makebox[0pt]{$\scriptscriptstyle(n-2)\text{ times}$}}100\ldots)_2 \;=\;
(0.1\underbrace{0\cdots0}_{\makebox[0pt]{$\scriptscriptstyle(n-2)\text{ times}$}}011\ldots)_2 \\
x^{(-)}_n &\;=\;
(0.0\underbrace{1\cdots1}_{\makebox[0pt]{$\scriptscriptstyle(n-2)\text{ times}$}}100\ldots)_2 \;=\;
(0.0\underbrace{1\cdots1}_{\makebox[0pt]{$\scriptscriptstyle(n-2)\text{ times}$}}011\ldots)_2
\end{align*}
shows that for $n\geq 2$
they must differ already in the first place.
Put differently, for 
$\epsilon:=1/2$ and with respect to Cantor metric,
$\ball\big(\Adic_2(x^{(+)}_n),\epsilon\big)\cap
\ball\big(\Adic_2(x^{(-)}_n),\epsilon\big)=\emptyset$:
a witness of discontinuity according to Lemma~\ref{l:Flag2}b).
\\
Next take
\[ q^{(\pm,\pm)}_{n,m} \;\;:=\;\; \big(
\underbrace{\tfrac{1}{2},\ldots,\tfrac{1}{2}}_{n\text{ times}},
\underbrace{\tfrac{1}{2}\pm2^{-n},\ldots,\tfrac{1}{2}\pm2^{-n}}_{m\text{ times}},
\underbrace{\tfrac{1}{2}\pm2^{-n}\pm2^{-n-m},1/2\pm2^{-n}\pm2^{-n-m},\ldots}_{\infty\text{ times}}
\big) \]
as $\myrho$--names of $x^{(\pm,\pm)}_{n,m}:=\tfrac{1}{2}\pm2^{-n}\pm2^{-n-m}
\not\in\{x,x^{(\pm)}_n:n\in\IN\}$
converging to $q^{(\pm)}_n$ for $m\to\infty$;
and $q^{(\pm,\pm)}_{m,m}$ to $q^{()}$.
Here, any binary expansions of $x^{(s,-)}$ and of
$x^{(s,+)}$ must differ in position $n$, i.e
$\ball\big(\Adic_2(x^{(s,+)}_{n,m}),\epsilon\big)\cap
\ball\big(\Adic_2(x^{(s,-)}_{n,m}),\epsilon\big)=\emptyset$
for $\epsilon\leq2^{-n}$; while still
$\ball\big(\Adic_2(x^{(+,s)}_{n,m}),\epsilon\big)\cap
\ball\big(\Adic_2(x^{(-,s)}_{n,m}),\epsilon\big)=\emptyset$
for $\epsilon\leq1/2$:
yielding a witness of 2-discontinuity
according to Lemma~\ref{l:Flag2}d)
with $r:=2$.
\\
And, continuing, $\displaystyle q^{(\pm,\ldots,\pm)}_{n_1,\ldots,n_d} :=$
\begin{align*}
\Big(&\underbrace{\tfrac{1}{2},\ldots,\tfrac{1}{2}}_{n_1\text{ times}},
\underbrace{\tfrac{1}{2}\pm2^{-n_1},\ldots,\tfrac{1}{2}\pm2^{-n_1}}_{n_2\text{ times}},
\underbrace{\tfrac{1}{2}\pm2^{-n_1}\pm2^{-n_1-n_2},\ldots,\tfrac{1}{2}\pm2^{-n_1}\pm2^{-n_1-n_2}}_{n_3\text{ times}},
\ldots,\\
&\ldots,\quad\ldots,\quad\ldots,\\
&\underbrace{\tfrac{1}{2}\pm2^{-n_1}\pm2^{-n_1-n_2}\pm\cdots\pm2^{-n_1-\cdots-n_{d-1}},
\ldots,\tfrac{1}{2}\pm2^{-n_1}\pm2^{-n_1-n_2}\pm\cdots\pm2^{-n_1-\cdots-n_{d-1}}}_{n_d\text{ times}},\\
&\underbrace{%
\tfrac{1}{2}\pm2^{-n_1}\pm2^{-n_1-n_2}\pm\cdots\pm2^{-n_1-\cdots-n_{d-1}}\pm2^{-n_1-\cdots-n_{d-1}-n_d},
\ldots}_{\infty\text{ times}}\Big)
\end{align*}
constitutes a compact $d$-flag of 
$\myrho$--names for $x^{(\pm,\ldots,\pm)}_{n_1,\ldots,n_d}:=
1/2\pm2^{-n_1}\pm2^{-n_1-n_2}\pm\cdots\pm2^{-n_1-\cdots-n_{d-1}}$
and witness of $d$-discontinuity of $\Adic_2$.
\qed\end{proof}
Now consider the problem
$\Adic_2^{(n)}:[0,1)\toto\{0,1\}^n$ of computing
only the first $n$ bits of the binary expansion
of $x$, given by rational approximations with error bounds.
Since $\Adic_2(x)$ is in a sense the limit of 
$\Adic_2^{(n)}(x)$ converging with $2^{-n}$ as $n\to\infty$,
it might seem natural to conjecture in view of Proposition~\ref{p:Rational}
that $\mycomp(\Adic_2^{(n)},\myrho,\nu)\to\infty$ for $n\to\infty$.
Indeed $2^n$-wise advice trivially suffices
for computing $\Adic_2^{(n)}(x)\in\{0,1\}^n$.
But one can do much better:

\begin{observation} \label{o:Rational}
$\Adic_2^{(n)}:[0,1)\toto\{0,1\}^n$ is 
$(\myrho,\nu)$--computable with 2-wise advice;
namely when giving, in addition to a $\myrho$--name
of $x$, also the $n$-th bit of its binary expansion.
\end{observation}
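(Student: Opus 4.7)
My plan is to use the advice itself as the index of the partition, i.e.\ to split $[0,1)$ according to the value of the $n$-th bit, and to exhibit on each piece a computable single-valued selection of $\Adic_2^{(n)}$. Concretely, for $i \in \{0,1\}$ set
\[ D_i \;:=\; \{x \in [0,1) : \text{some binary expansion of } x \text{ has } n\text{-th bit equal to } i\}. \]
Every $x \in [0,1)$ has at least one binary expansion, so $D_0 \cup D_1 = [0,1)$, providing the required 2-wise cover.

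The first step I would carry out is a combinatorial characterization of each $D_i$ as a disjoint union of closed rational intervals. A string $(b_1,\ldots,b_n) \in \{0,1\}^n$ encoding the integer $j := \sum_{\ell=1}^n b_\ell 2^{n-\ell}$ is the first-$n$-bit prefix of \emph{some} expansion of $x$ if and only if $x \in [j/2^n,(j+1)/2^n]$; and the $n$-th bit of that prefix is $j \bmod 2$. Hence
\[ D_i \;=\; \bigcup_{\substack{0 \leq j < 2^n \\ j \equiv i\pmod{2}}} [j/2^n,(j+1)/2^n], \]
and within a single parity class the summand intervals are pairwise disjoint, separated by open gaps of length $2^{-n}$. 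Consequently every $x \in D_i$ lies in a unique such interval $[j^\ast/2^n,(j^\ast+1)/2^n]$, whose index $j^\ast$ (written in $n$-bit binary) constitutes a valid value of $\Adic_2^{(n)}(x)$.

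The second step is the algorithm: given a $\myrho$-name of $x$ together with the advice $i$, I would dovetail, over the $2^{n-1}$ candidates $j$ of parity $i$, the semidecision of the open predicate ``$x < j/2^n$ or $x > (j+1)/2^n$''; once all but one such $j$ has been ruled out, output the $n$-bit binary encoding of the surviving $j^\ast$. Correctness and termination both follow from the disjointness-with-gap structure: for each $j \neq j^\ast$ of parity $i$, the point $x$ lies strictly outside $[j/2^n,(j+1)/2^n]$ with slack at least $2^{-n}$ on one side, so any sufficiently precise rational approximation $q_m$ witnesses the exclusion.

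I do not anticipate any serious obstacle. The only mildly subtle point is that a dyadic $x = k/2^n$ lies in both $D_0$ and $D_1$ (corresponding to its two distinct expansions), but this is harmless: the partition is indexed by the advice rather than by $x$, and within each $D_i$ separately such an $x$ still falls into a unique same-parity interval, so the dovetailed exclusion singles out a consistent $j^\ast$.
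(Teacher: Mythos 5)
Your proof is correct and essentially the same as the paper's: both use the $n$-th bit of the expansion to partition $[0,1)$, observe that for a fixed parity $i$ the point $x$ lies in exactly one of $2^{n-1}$ disjoint dyadic intervals separated by gaps of length $2^{-n}$, and identify that interval by a dovetailed semi-decision of strict rational inequalities. The only cosmetic difference is that the paper searches \emph{positively} for the one enlarged open interval $\big((2k-\tfrac12)2^{-n},(2k+\tfrac32)2^{-n}\big)$ that must contain $x$, whereas you search \emph{negatively} by ruling out all other same-parity closed intervals; these are dual formulations of the same dovetailing and rest on the same gap structure.
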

This can be considered an example of a phase transition.
(Note however that, implicitly, $n$ is given here.)

\begin{proof}[Observation~\ref{o:Rational}]
Suppose that $[0,1)\ni x=\sum_{i=1}^\infty b_i 2^{-i}$
with $b_i\in\{0,1\}$ and $b_n=0$.
(The other case $b_n=1$ proceeds analogously.)
Then it holds
\[ x \;\in\; \big[0,2^{-n}\big]\cup\big[2\cdot2^{-n},3\cdot2^{-n}\big]
\cup\cdots\cup\big[(2^n-2)\cdot2^{-n},(2^n-1)\cdot2^{-n}\big] \enspace , \]
corresponding to the $2^{n-1}$ possible choices of 
$(b_1,\ldots,b_{n-1},b_n)$ with $b_n=0$.
Conversely 
\begin{equation} \label{x:Rational}
x\in\big((2k-\tfrac{1}{2})\cdot2^{-n},(2k+\tfrac{3}{2})\cdot2^{-n}\big)
\quad\text{for}\quad k\in\{0,1,\ldots,2^{n-1}\}
\end{equation}
implies (since $b_n=0$)
$x\in\big[(2k)\cdot2^{-n},(2k+1)\cdot2^{-n}\big]$
and $(b_1,\ldots,b_{n-1})=\bin(k)$.
As \emph{strict} real inequalities are semi-decidable
(formally: $\myrho$--r.e. open in the sense of \mycite{Definition~3.1.3.2}{Weihrauch}),
dovetailing can search for $k$ to satisfy Equation~(\ref{x:Rational}).
\qed\end{proof}
%%%%%%%%%%%%%%%%%%%%%%%%%%%%%%%%
\COMMENTED{
\subsection{Multivalued Continuity versus Continuous (Single-valued) Realizers}
The Main Theorem of Recursive Analysis requires 
any computable real function to be continuous. 
More precisely, 
{\setlength{\parskip}{-9pt}\begin{enumerate}
\item[i)] any computable function 
$F:\subseteq\Sigma^\omega\to\Sigma^\omega$
on infinite strings must be Cantor-continuous;
\item[ii)]
and if Cantor-continuous $F$ is a $(\myrho,\myrho)$--realizer
of some $f:\subseteq\IR\to\IR$, then $f$ is itself continuous
with respect to the usual Euclidean topology.
\end{enumerate}}
The important \textsf{Kreitz-Weihrauch Theorem}
reverses the implication of ii):
{\setlength{\parskip}{-9pt}\begin{enumerate}
\item[iii)]
If $f$ is continuous, then it admits a Cantor-continuous
$(\myrho,\myrho)$--realizer $F$.
\item[iv)] 
More generally let $\alpha$ and $\beta$ be
\emph{admissible} representations of spaces $A$ and $B$.
A function $f:\subseteq A\to B$ is continuous
~iff~ it has a Cantor-continuous
$(\alpha,\beta)$--realizer; cf. e.g. \mycite{Theorem~3.2.11}{Weihrauch}.
\end{enumerate}}
Observe that these considerations apply only to
\emph{single-}valued functions $f$.

\begin{proposition}
Fix a multivalued function $f:X\toto Y$
between metric spaces
and consider the following properties:
\begin{enumerate}
\item[a)] $\displaystyle
\forall x\;\exists y\in f(x) \;\forall\epsilon\;\exists\delta\;
\forall x'\in\ball(x,\delta): 
\quad f(x')\cap\ball(y,\epsilon)\neq\emptyset$
\item[b)] $\displaystyle
\forall\epsilon\;\exists\delta\;\exists y\in f(x)\;\forall x'\in\ball(x,\delta):
\quad f(x')\cap\ball(y,\epsilon)\neq\emptyset$
\item[c)] $\displaystyle
\forall\epsilon\;\exists\delta\;\forall x'\in\ball(x,\delta)\;\exists y\in f(x):
\quad f(x')\cap\ball(y,\epsilon)\neq\emptyset$
\item[d)] $\displaystyle
\forall x\;\forall\epsilon\;\exists\delta\;\exists y\in f(x)\;\forall x'\in\ball(x,\delta):
\quad f(x')\cap\ball(y,\epsilon)\neq\emptyset$
\item[e)] $\displaystyle
\exists(\delta_n)_{_n}\;\forall x\;\exists y\in f(x)
\;\forall n\;\forall x'\in\ball(x,\delta_n):
\quad f(x')\cap\ball(y,1/n)\neq\emptyset$
\item[f)] $\forall N\;\forall\delta_1>\delta_2>\cdots>\delta_{N-1}>0\;
\exists0<\delta_{N}<\delta_{N-1}\;\forall x\;\exists y\in f(x)\;
\forall n\leq N\;\forall x'\in\ball(x,\delta_n):
\quad f(x')\cap\ball(y,1/n)\neq\emptyset$
\item[g)] $\displaystyle
\forall x\;\exists(\delta_n)_{_n}\;\exists y\in f(x)
\;\forall n\;\forall x'\in\ball(x,\delta_n):
\quad f(x')\cap\ball(y,1/n)\neq\emptyset$
\end{enumerate}
\begin{itemize}
\item If $X$ is compact, a) implies b).
\item b) implies both c) and d).
\item If $f(x)$ is compact for all $x$, then b) implies a).
\item c) implies neither b) nor d), even for compact $X$ and compact $f(x)$.
\item computability requires b) and, on compact $X$, g).
\end{itemize}
\end{proposition}
}

%%%%%%%%%%%%%%%%%%%%%%%%%%%%%%%%%%%%%%%%%%%%%%%%%%%%%%%%%%%%%%%%
\section{Applications} \label{s:Applications}
Based on Lemma~\ref{l:Flag}b), we now determine the
complexity of non-uniform computability for several
concrete functions 
including the examples from Section~\ref{s:Intro}.

%%%%%%%%%%%%%%%%%%%%%%%%%%%%%%%%%%%%%%%%
\subsection[Linear Equation Solving: Linear Complexity]{Linear Equation Solving}
We first consider the problem of
solving a system of linear equations;
more precisely of finding a nonzero
vector in the kernel of a given singular matrix.
It is for mere notational convenience that we formulate
for the case of real matrices: complex ones work just as well.

\begin{theorem} \label{t:LinEq}
Fix $n,m\in\IN$, $d:=\min(n,m-1)$, and consider the space $\IR^{n\times m}$
of $n\times m$ matrices, considered as linear mappings
from $\IR^m$ to $\IR^n$. Then the multivalued mapping
\[ \LinEq_{n,m}:A\quad\mapsto\quad
\kernel(A)\setminus\{0\},\qquad
\dom(\LinEq)\;:=\;\{A\in\IR^{n\times m}:\rank(A)\leq d\}  \]
is well-defined and
has complexity $\mycard(\LinEq_{n,m})=\mycomp(\LinEq_{n,m},\myrho^{n\times m},\myrho^m)=d+1$.
\end{theorem}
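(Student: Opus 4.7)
The plan is to prove the matching bounds $\mycomp(\LinEq_{n,m}) \leq d+1$ and $\mycard(\LinEq_{n,m}) \geq d+1$; Observation~\ref{o:MainThm2} then supplies $\mycard \leq \mycomp$, giving the claimed equality.

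For the upper bound, I would use $r := \rank(A) \in \{0, 1, \ldots, d\}$ as the discrete advice, yielding $d+1$ values. On input $(A, r)$: if $r = 0$, then $A = 0$ and output $e_1 \in \IR^m$. If $r \geq 1$, dovetail over pairs $J \subseteq \{1, \ldots, n\}$ and $I \subseteq \{1, \ldots, m\}$ with $|J| = |I| = r$ until the submatrix $A_{J,I}$ is seen to be invertible; this search succeeds because $\rank(A) = r$, and the condition $\det(A_{J,I}) \neq 0$ is $\myrho$-semi-decidable. Then choose any $k \in \{1, \ldots, m\} \setminus I$ (possible since $r \leq d \leq m-1$), solve the $r \times r$ linear system $A_{J,I}\, c = A_{J,k}$ for $c \in \IR^r$, and output the vector $v \in \IR^m$ with $v_k := 1$, $v_{i_s} := -c_s$ for $i_s \in I$, and other entries $0$. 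The equality $Av = 0$ then holds on the rows indexed by $J$ by construction, and extends to all rows because $\rank(A) = r$; the $k$-th entry $v_k = 1 \neq 0$ ensures $v \in \kernel(A) \setminus \{0\}$.

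For the lower bound, I would construct a uniform $d$-dimensional witness of $d$-discontinuity per Definition~\ref{d:Flag2}b) and invoke Lemma~\ref{l:Flag2}b). Fix any function $\iota : \IN \to \{1, \ldots, m\}$ with the \emph{cycling property} that $\iota^{-1}(j)$ is infinite for every $j$---for instance $\iota(n) := (n \bmod m) + 1$---and define
\[
X_{(n_1, \ldots, n_k)} \;:=\; \sum_{s=1}^{k} \tfrac{1}{n_s}\, e_s \bigl(e_{\iota(n_s)}\bigr)^{\!T} \;\in\; \IR^{n \times m},
\qquad X_{()} := 0,
\]
where the left basis vector $e_s$ lives in $\IR^n$ and the right one $e_{\iota(n_s)}$ in $\IR^m$. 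Each $X_{\bar n}$ has rank at most $|\bar n| \leq d$, so belongs to $\dom(\LinEq_{n,m})$; its kernel equals $\operatorname{span}\{e_j \in \IR^m : j \notin S_{\bar n}\}$ with $S_{\bar n} := \{\iota(n_s) : 1 \leq s \leq |\bar n|\} \subseteq \{1, \ldots, m\}$. The convergence $\lim_{n_k \to \infty} X_{(n_1, \ldots, n_k)} = X_{(n_1, \ldots, n_{k-1})}$ gives a uniform $d$-dimensional flag.

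The hard part is verifying the witness condition of Definition~\ref{d:Flag2}b) at every level. Given a base $X_{\bar n}$ with $|\bar n| = k < d$, an $\ell \in \{1, \ldots, d - k\}$, and a point $y \in \LinEq(X_{\bar n}) \setminus \{0\}$: since $y_j = 0$ for all $j \in S_{\bar n}$ and $|S_{\bar n}| \leq k < m$, there is some $j^* \notin S_{\bar n}$ with $y_{j^*} \neq 0$. By the cycling property, $\iota(m) = j^*$ for infinitely many $m \in \IN$. For any such $m$, all $\ell$ additional summands of $X_{\bar n, m, \ldots, m}$ share the column index $\iota(m) = j^*$, so the kernel of $X_{\bar n, m, \ldots, m}$ contains the constraint $v_{j^*} = 0$. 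Consequently $\kernel(X_{\bar n, m, \ldots, m}) \setminus \{0\}$ is disjoint from the open ball $V := \ball(y, |y_{j^*}|/2)$, because any $v \in V$ satisfies $|v_{j^*} - y_{j^*}| < |y_{j^*}|/2$ and hence $v_{j^*} \neq 0$. This confirms the witness of $d$-discontinuity, and Lemma~\ref{l:Flag2}b) then yields $\mycard(\LinEq_{n,m}) > d$, completing the lower bound.
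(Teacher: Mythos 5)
Your proposal is correct, and the lower bound takes a genuinely different route from the paper's. For the upper bound, the paper simply cites \mycite{Theorem~11}{LA}; your dovetailing search for an invertible $r\times r$ submatrix and the Cramer-style back-substitution is essentially that theorem made explicit, and it is sound (semi-decidability of $\det\neq0$, computability of Gaussian elimination on an invertible block, and the rank condition guaranteeing the remaining rows are dependent). For the lower bound, the paper does \emph{not} verify the witness condition of Definition~\ref{d:Flag2}b) directly. Instead it first post-composes with normalization $\norm:\IR^m\setminus\{0\}\to\calS^{m-1}$ (via Lemma~\ref{l:Comp2}b) so that the images are compact, then builds at each node of the flag a fan of $r=m$ perturbation directions via Lemma~\ref{l:LinEq}a), and finally uses the empty-intersection criterion of Lemma~\ref{l:Simpler}d) together with the quantitative separation Lemma~\ref{l:LinEq}b) to manufacture a uniform $\epsilon$ before concluding. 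Your construction collapses the branching factor from $m$ to $1$: a single sequence $\big(X_{\bar n,m,\ldots,m}\big)_m$ whose added rows all target column $\iota(m)$, with $\iota$ hitting every column index infinitely often, so that for any given $y\in\kernel(X_{\bar n})\setminus\{0\}$ you can pick the coordinate $j^*$ witnessing $y\neq0$ and use the (infinitely many) $m$ with $\iota(m)=j^*$ to kill that coordinate in the perturbed kernel. This verifies the quantifier order of Definition~\ref{d:Flag2}a)--b) directly, needs no compactness and no uniform $\epsilon$, and so bypasses $\norm$, Lemma~\ref{l:Simpler}, and Lemma~\ref{l:LinEq} entirely. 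The trade-off is that the paper's machinery, once set up, is reused for $\Diag$ and $\EVec$ (Theorems~\ref{t:Diag} and \ref{t:EV}), where a cycling trick of this sort is less immediate because eigenspaces, unlike kernels, do not sit so rigidly in coordinate subspaces; your argument is tailored to the combinatorial structure of $\kernel(A)$ for row-supported perturbations, and that is exactly what makes it shorter here.
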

\begin{proof}
Observe that $\{0\}\subsetneq\kernel(A)\subseteq\IR^m$
holds iff $\rank(A)\leq m-1$. Also $\rank(A)\leq n$ is
a tautology. Hence $\LinEq$ is totally defined.
\mycite{Theorem~11}{LA} has shown that
knowing $\rank(A)\in\{0,1,\ldots,d\}$ suffices for
computably finding a non-zero vector in
(and even an orthonormal basis of) $\kernel(A)$;
hence $\mycard(\LinEq)\leq\mycomp(\LinEq,\myrho^{n\times m},\myrho^m)\leq d+1$.

Conversely, we apply Lemma~\ref{l:Simpler}d) 
with $r:=m$ to assert $d$-discontinuity of $\LinEq$.
\\
Start with $A:=0^{n\times m}$, i.e. $\LinEq(A)=\IR^m\setminus\{0\}$.
Now Lemma~\ref{l:LinEq}a) below for $\delta:=1/N$ 
yields $m$ sequences $A^{(1)}_N,\ldots,A^{(m)}_N$ ($N\in\IN$)
with $\rank(A^{(i)}_N)\equiv 1$,
all converging to $A$
and with $\bigcap_i\kernel(A^{(i)}_N)=\{0\}$;
hence $\bigcap_i\LinEq(A^{(i)}_N)=\emptyset$.
However, Lemma~\ref{l:Simpler}c) requires
$\bigcap_i \ball\big(\LinEq(A^{(i)}_N),\epsilon\big)=\emptyset$
for some $\epsilon>0$.
On the other hand, observe that vector normalization 
\[ \norm : \IR^m\setminus\{0\} \;\ni\; \vec x\;\mapsto\; \vec x/\|\vec x\|\;\in\;\calS^{m-1} \]
is single-valued computable and continuous.
Hence by Lemma~\ref{l:Comp2}b) it suffices to prove
($d$-wise) discontinuity of $\LinEq':=\norm\circ\LinEq$.
Notice that $\LinEq'(A)=\LinEq(A)\cap\calS^{m-1}$ is compact. 
Thus, now, $\bigcap_i\LinEq(A^{(i)}_N)=\emptyset$ does imply 
\[ \bigcap\nolimits_i \ball\big(\LinEq'(A^{(i)}_N),\epsilon\big)
\quad\subseteq\quad
\ball\Big(\bigcap\nolimits_i \LinEq'(A^{(i)}_N),\delta\Big)
\quad=\quad\emptyset \]
for some appropriate $\epsilon>0$ according to 
(an inductive application of) Lemma~\ref{l:LinEq}b) below.
Indeed, $\epsilon$ can be chosen independent of $N$
since the subspaces $V_i$ from Lemma~\ref{l:LinEq}a)
do not depend on $\delta$. Hence we obtain by
Lemma~\ref{l:Simpler}c)---in a complicated way---a 
witness of (1-) discontinuity of $\LinEq'$.
\\
In case $d\leq 3$,
again applying Lemma~\ref{l:LinEq}a) similarly yields
rank-2 matrices $A^{(i,j)}_{N,M}$ ($j=1,\ldots,m-1$;
$A^{(i,m)}_{N,M}:=A^{(i)}_N$)
with $\lim_M A^{(i,j)}_{N,M}=A^i_N$
uniformly in $j,N$ and with
$\bigcap_j \kernel(A^{(i,j)}_{N,M})=\{0\}$,
hence again
$\bigcap\ball\big(\LinEq'(A^{(i)}),\epsilon\big)=\emptyset$
for some $\epsilon>0$ according to Lemma~\ref{l:LinEq}b):
and thus a witness of 2-discontinuity
by Lemma~\ref{l:Simpler}d).
\\
We may continue this process until arriving at
rank-$d$ matrices $A^{(i_1,\ldots,i_d)}_{N_1,\ldots,N_d}$
and a witness of $d$-discontinuity.
(And we cannot proceed any further because either
$d=\min(n,m)$ prohibits application of Lemma~\ref{l:LinEq}a
or, in case $d=m-1$, the matrices it yields exceed the
domain of $\LinEq$.)
\qed\end{proof}
The following tool, in addition to completing the proof of
Theorem~\ref{t:LinEq}, also gives further justification
for Figure~\ref{f:matrank}:
\begin{lemma} \label{l:LinEq}
\begin{enumerate}
\item[a)]
Let $n,m\in\IN$, $d:=\min(n,m)$, $A\in\IR^{n\times m}$,
$r:=\rank(A)<d$. There are subspaces
$V_1,\ldots,V_{m-r}$ of $\IR^m$ with $\bigcap_i V_i=\{0\}$
such that, to any $\delta>0$,
there exist $A^{(1)},\ldots,A^{(m-r)}\in\IR^{n\times m}$
with $\rank(A^{(i)})=r+1$, $\|A^{(i)}-A\|\leq\delta$ and
$\kernel(A^{(i)})=V_j$.
\item[b)]
Let $X,Y$ be closed subsets of $\IR^n$, $X$ compact, and $\delta>0$.
Then there exists $\epsilon>0$ such that
$\ball(X,\epsilon)\cap\ball(Y,\epsilon)\subseteq\ball(X\cap Y,\delta)$.
\end{enumerate}
\end{lemma}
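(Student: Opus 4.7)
For part (a), my plan is to pick a basis $k_1,\ldots,k_{m-r}$ of $K:=\kernel(A)$ and set $V_i:=\lspan\{k_j:j\neq i\}$, a subspace of $K$ of dimension $m-r-1$. Then automatically $\bigcap_{i=1}^{m-r} V_i=\{0\}$, because a vector $\sum_j c_j k_j$ lying in $V_i$ forces $c_i=0$, and lying in every $V_i$ forces all $c_j=0$. These subspaces depend only on $A$, not on $\delta$; this independence is precisely what the proof of Theorem~\ref{t:LinEq} exploits to extract an $\epsilon>0$ uniform in $N$ via part (b).

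To realise each $V_i$ as the kernel of a nearby rank-$(r+1)$ matrix, I would add a rank-one perturbation to $A$. Decompose $\IR^m=K\oplus K^\perp$ orthogonally; let $\varphi_i\in(\IR^m)^*$ be the linear functional vanishing on $K^\perp$ and on each $k_j$ with $j\neq i$, but satisfying $\varphi_i(k_i)=1$; and pick any $u\in\IR^n\setminus\range(A)$, which exists because $r<\min(n,m)\leq n$. Set $A^{(i)}:=A+\epsilon\,u\,\varphi_i^{\top}$ with $\epsilon>0$ small enough that the operator norm $\epsilon\|u\|\cdot\|\varphi_i\|\leq\delta$. The kernel calculation is direct: decomposing $v=v_K+v_\perp$, we have $A^{(i)}v=Av_\perp+\epsilon\,\varphi_i(v_K)\,u$; since $Av_\perp\in\range(A)$ and $u\notin\range(A)$, vanishing forces both $\varphi_i(v_K)=0$ (so $v_K\in V_i$) and $Av_\perp=0$, whence $v_\perp=0$ by injectivity of $A|_{K^\perp}$. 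So $\kernel(A^{(i)})=V_i$, and rank-nullity gives $\rank(A^{(i)})=r+1$.

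For part (b), I would argue by contradiction. If no such $\epsilon>0$ works, then for each $n\in\IN$ there is some $z_n\in\ball(X,1/n)\cap\ball(Y,1/n)\setminus\ball(X\cap Y,\delta)$; choose $x_n\in X$ and $y_n\in Y$ with $\|z_n-x_n\|<1/n$ and $\|z_n-y_n\|<1/n$. By compactness of $X$, some subsequence $x_{n_k}$ converges to $x\in X$, and then $z_{n_k}\to x$ and $y_{n_k}\to x$ as well; closedness of $Y$ gives $x\in Y$, hence $x\in X\cap Y$. But $d(z_{n_k},X\cap Y)\geq\delta$ for all $k$ contradicts $z_{n_k}\to x\in X\cap Y$.

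The main obstacle is in part (a): ensuring that a single rank-one perturbation changes the kernel exactly from $K$ to $V_i$, neither less (which would leave rank $r$) nor more (which would kill some $k_j$ with $j\neq i$ or collapse onto $\range(A)$). The choice $u\notin\range(A)$ is what cleanly decouples the new direction from the old image and makes the kernel computation go through without any auxiliary case analysis; without this, one would need to track possible cancellations between $Av_\perp$ and the perturbation term.
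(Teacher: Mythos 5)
Your part (a) is essentially the paper's argument: both construct $A^{(i)}$ as a rank-one perturbation $A + (\text{scalar})\cdot u\,\varphi_i^\top$ with $u\notin\range(A)$ and $\varphi_i$ a functional on $\kernel(A)$ dual to the $i$-th basis vector, and both conclude $\kernel(A^{(i)})=V_i$ by exactly the decoupling you describe. The only cosmetic difference is that the paper takes the basis of $\kernel(A)$ orthonormal so that $\varphi_i(\cdot)=\langle\cdot,\vec z_i\rangle$ with $\|\varphi_i\|=1$, letting it write the perturbation with coefficient exactly $\delta$ and claim $\|A^{(i)}-A\|\leq\delta$ without a further ``shrink $\epsilon$'' step; your arbitrary-basis version is just as correct once you tune the scalar. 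For part (b) your route is genuinely different: you argue by contradiction via sequential compactness (extract a convergent subsequence of near-intersection points and derive a limit in $X\cap Y$), whereas the paper argues directly, first handling the disjoint case by a positive lower bound for the continuous function $d_Y$ on compact $X$, and then reducing the general case by carving out $\ball(X\cap Y,\delta/2)$ from both sets and doing some set algebra. Your proof is shorter and avoids the bookkeeping with $X'$, $Y'$, $Z$; the paper's proof has the modest advantage of exhibiting the $\epsilon$ more explicitly (as half the minimum of $d_{Y'}$ on $X'$, capped by $\delta/2$), which reads as marginally more constructive even though neither argument is effective. Both proofs are correct and complete.
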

\begin{proof}\begin{enumerate}
\item[a)]
Since $\rank(A)<n$, there exists some (w.l.o.g. normed)
$\vec w\in\IR^n\setminus\range(A)$.
Moreover by the \textsf{Rank-Nullity Theorem}, $\dim\kernel(A)=m-r$.
So consider an orthonormal
basis $\vec z_1,\ldots,\vec z_{m-r}\in\IR^m$ of $\kernel(A)$
and linear mappings 
\[
A^{(i)}:\IR^m\;\ni\;\vec x\;\mapsto 
\;A\cdot\vec x\;+\;\delta\cdot\langle\vec x,\vec z_i\rangle\vec w
\enspace : \]
These obviously satisfy $\|A^{(i)}-A\|\leq\delta$.
Moreover it holds $\range(A^{(i)})=\range(A)\oplus\lspan(\vec w)$
and $\kernel(A^{(i)})=\kernel(A)\cap\vec z_i^\bot=:V_i$:
because all $\vec z_j$ ($j\neq i$) are still mapped to $0$.
Hence $\rank(A^{(i)})=\rank(A)+1$ and 
$\bigcap_i V_i=\kernel(A)\bigcap_i\vec z_i^\bot=\{0\}$.
\item[b)] First consider the disjoint case $X\cap Y=\emptyset$.
Then the distance function $d_Y$ from Equation~(\ref{e:distA})
is positive on $X$. Moreover $d_Y$ is continuous
and therefore, on compact $X$, bounded from below by some $2\epsilon>0$.
Hence $\ball(X,\epsilon)\cap\ball(Y,\epsilon)=\emptyset$.
\\
In the general case, $Z:=X\cap Y$ is not necessarily empty but closed.
Now consider $X':=X\setminus\ball(Z,\delta/2)$
and $Y':=Y\setminus\ball(Z,\delta/2)$:
$X'$ is compact and disjoint from closed $Y'$;
hence $\ball(X',\epsilon)\cap\ball(Y',\epsilon)=\emptyset$
for some $0<\epsilon\leq\delta/2$
according to the first case.
Since $X\subseteq X'\cup\ball(Z,\delta/2)$,
\begin{multline*} \ball(X,\epsilon)\cap\ball(Y,\epsilon)
\;\subseteq\; \underbrace{\ball\big(X'\cup\ball(Z,\delta/2),\epsilon\big)}_{
\subseteq\ball(X',\epsilon)\cup\ball\big(\ball(Z,\delta/2),\epsilon\big)}
\;\cap\;\Big(\ball(Y',\epsilon)\cup
\underbrace{\ball\big(\ball(Z,\delta/2),\epsilon\big)}_{
=\ball(Z,\delta/2+\epsilon)\subseteq\ball(Z,\delta)}\Big)
\\\subseteq\;\big(\underbrace{\ball(X',\epsilon)\cap\ball(Y',\epsilon)}_{=\emptyset}\big)\;\cup\;
\big(\ball(X',\epsilon)\cap\ball(Z,\delta)\big)\cup
\big(\ball(Z,\delta)\cap\ball(Y,\epsilon)\big)
\quad\subseteq\quad\ball(Z,\delta)
\end{multline*}
\vspace*{-2.5ex}\qed\end{enumerate}\end{proof}
%
%%%%%%%%%%%%%%%%%%%%%%%%%%%%%%%%%%%%%%%%%
\subsection[Symmetric Matrix Diagonalization: 
Linear Complexity]{Symmetric Matrix Diagonalization} \label{s:Diag}
Similarly to Lemma~\ref{l:LinEq}a), we 
\begin{myremark} \label{r:Diag}
Let $\epsilon>0$ and 
let $A:\IC^n\to\IC^n$ denote an hermitian linear map
with $k$--fold degenerate eigenvalue $\lambda\in\IR$,
i.e. $\kernel(A-\lambda\id)=\lspan(\vec w)\oplus U$
for some eigenvector $\vec w$ (w.l.o.g. of norm 1)
orthogonal to a $(k-1)$--dimensional subspace $U\subseteq\IR^n$.
\\
Then the linear map $A':\IC^n\to\IC^n$ with 
$A'\big|_{\vec w^\bot}:\equiv A\big|_{\vec w^\bot}$
and $A':\vec w\mapsto (\lambda+\epsilon)\vec w$ is 
{\setlength{\parskip}{-9pt}\begin{itemize}
\item well-defined and hermitian (and real if $A$ was),
\item has $\|A-A'\|\leq\epsilon$ and
\item eigenspace to eigenvalue $\lambda$ cut down to $U$.
\vspace*{-1.5ex}\end{itemize}}\noindent%
Moreover, if $\epsilon$ is smaller than the difference between
 any two distinct eigenvalues of $A$, then
{\setlength{\parskip}{-9pt}\begin{itemize}
\item $\lambda+\epsilon$ is a new eigenvalue
\item with 1D eigenspace $\lspan(\vec w)$
\item while all other eigenspaces of $A'$ coincide with those of $A$.
\qed\end{itemize}}
\end{myremark}
\begin{figure}[htb]
\centerline{\includegraphics[width=0.95\textwidth]{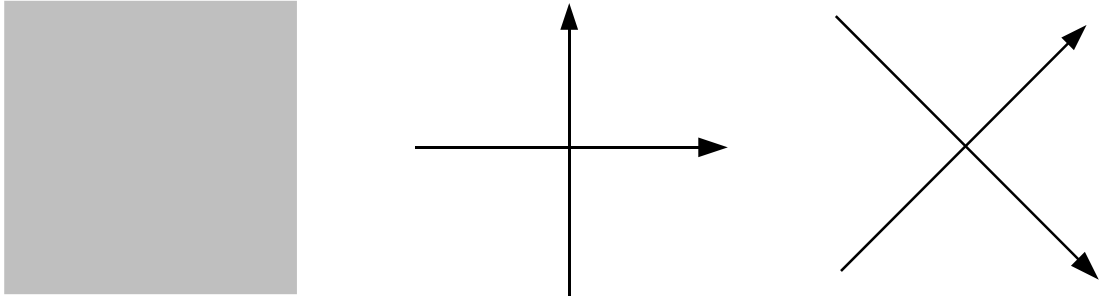}}%
\caption{\label{f:diag}Breaking 2--fold degeneracy of an eigenspace
to $A=\binom{0\;\;0}{0\;\;0}$
(left) in two ways 
$B_N=\binom{1\;\;0}{0\;\;1}/N$
and $C_N=\binom{0\;\;1}{1\;\;0}/N$
admitting no common eigenvectors (middle and right).}
\end{figure}
\begin{figure}[htb]
\centerline{\includegraphics[width=0.95\textwidth]{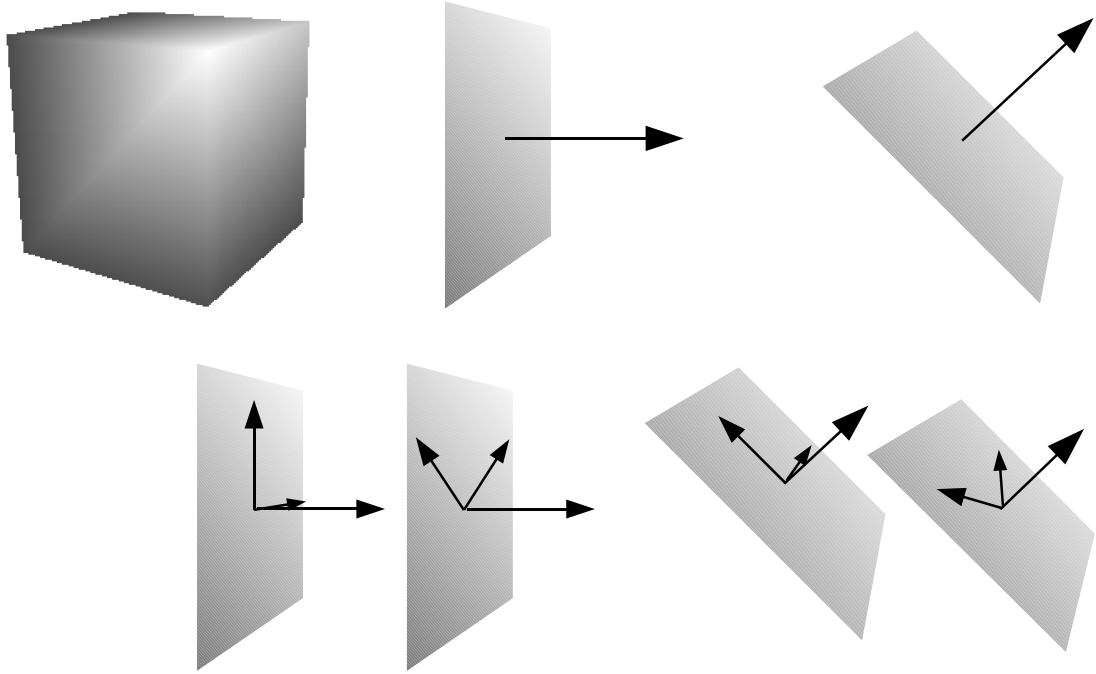}}%
\caption{\label{f:diag3}Construction similar to Figure~\ref{f:diag},
now iterated in 3D.}
\end{figure}
\begin{theorem} \label{t:Diag}
Fix $d\in\IN$ and consider the space $\IR^{\binom{d}{2}}$
of real symmetric $d\times d$ matrices.
Then the multivalued mapping
\[ \Diag_d:\IR^{\binom{d}{2}}\quad\ni\quad A\quad\mapsto\quad
\big\{(\vec w_1,\ldots,\vec w_d) \text{ basis of $\IR^d$ of eigenvectors
to $A$}\big\} \]
has complexity $\mycard(\Diag_d)=\mycomp\big(\Diag_d,\myrho^{\binom{d}{2}},\myrho^{d\times d}\big)=d$.
\end{theorem}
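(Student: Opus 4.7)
The upper bound $\mycomp(\Diag_d)\le d$ is immediate from Example~\ref{x:Diag}: partition the space of real symmetric $d\times d$ matrices by the number $k\in\{1,\ldots,d\}$ of distinct eigenvalues of $A$, and invoke the $(\myrho^{\binom{d}{2}},\nu,\myrho^{d\times d})$-computability of diagonalization on each part proven in \mycite{Section~3.5}{LA}; Observation~\ref{o:MainThm2} then also gives $\mycard(\Diag_d)\le d$. The heart of the proof is therefore the matching lower bound, for which I plan to construct a witness of $(d-1)$-discontinuity of $\Diag_d$ at the zero matrix and apply Lemma~\ref{l:Flag2}b) to conclude $\mycomp(\Diag_d)\ge d$ (and a fortiori $\mycard(\Diag_d)\ge d$).

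The witness is built by iterating Remark~\ref{r:Diag}. The root of the flag is $A:=0\in\IR^{d\times d}$, whose $\Diag_d$-image is the full set of orthonormal bases of $\IR^d$. Inductively, having constructed $A^{(\bar{\imath})}_{\bar N}$ for $\bar{\imath}\in\{1,2\}^{k-1}$ and $\bar N\in\IN^{k-1}$ with $k-1$ already-frozen simple eigenspaces at distinct nonzero eigenvalues and a residual $(d-k+1)$-dimensional $0$-eigenspace $W^{(\bar{\imath})}\subseteq\IR^d$, I fix once and for all (independently of $\bar N$) two linearly independent unit vectors $u^{(\bar{\imath},1)},u^{(\bar{\imath},2)}\in W^{(\bar{\imath})}$ and set
\[
A^{(\bar{\imath},i_k)}_{\bar N,N_k} \;:=\; A^{(\bar{\imath})}_{\bar N}
\;+\; \eta\cdot u^{(\bar{\imath},i_k)}\bigl(u^{(\bar{\imath},i_k)}\bigr)^{T},
\]
with $\eta=\eta(\bar N,N_k)>0$ of order $1/N_k$ and strictly smaller than the current spectral gap of $A^{(\bar{\imath})}_{\bar N}$. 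Remark~\ref{r:Diag} then splits $\lspan(u^{(\bar{\imath},i_k)})$ off of $W^{(\bar{\imath})}$ as a new simple eigenspace at a fresh nonzero eigenvalue, leaving $W^{(\bar{\imath},i_k)}:=W^{(\bar{\imath})}\cap\lspan(u^{(\bar{\imath},i_k)})^{\perp}$ at $0$; iterating down to $k=d-1$ produces matrices whose spectra are fully simple. The flag and uniformity properties hold since $\eta\to 0$ as $N_k\to\infty$, after passing to subsequences if needed.

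For the incompatibility hypothesis of Lemma~\ref{l:Simpler}d) with $r:=2$, I compare $\Diag_d(A^{(\bar{\imath},1)}_{\bar N,m})$ with $\Diag_d(A^{(\bar{\imath},2)}_{\bar N,m})$: every frame in the former contains $\pm u^{(\bar{\imath},1)}$ among its components, every frame in the latter contains $\pm u^{(\bar{\imath},2)}$, and $u^{(\bar{\imath},2)}$ is not an eigenvector of $A^{(\bar{\imath},1)}_{\bar N,m}$ since it has a nontrivial projection onto the newly split-off direction $\lspan(u^{(\bar{\imath},1)})$ inside $W^{(\bar{\imath})}$. The two sets of admissible frames are thus disjoint; each is a finite union of compact orbits of the stabilizer subgroup of $O(d)$ fixing the already-frozen eigenspaces, so disjointness forces a positive separation $\epsilon>0$, which moreover can be taken independent of $\bar N$ and $m$ because the splitting directions $u^{(\bar{\imath},i_k)}$ live in the $\bar N$-independent subspace $W^{(\bar{\imath})}$. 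The main obstacle is the bookkeeping of spectral gaps: $\eta$ at level $k$ must be small enough relative to the gap accumulated from earlier levels so that Remark~\ref{r:Diag} applies without spectrum mixing, which I propose to handle by strengthening the inductive hypothesis with explicit quantitative gap bounds and, if necessary, extracting subsequences of the $N_k$ — remembering that a witness of $d$-discontinuity is preserved under taking subsequences of its flag.
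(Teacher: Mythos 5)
Your proposal follows the same overall strategy as the paper's proof: invoke \mycite{Theorem~19}{LA} for the upper bound, then iterate Remark~\ref{r:Diag} from $A=0$ to build a witness of $(d-1)$-discontinuity and apply Lemma~\ref{l:Simpler}d) with $r=2$, using Lemma~\ref{l:LinEq}b) to turn disjointness of the image sets into uniform positive separation. Two points in your lower-bound argument need repair.

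First, you pick $u^{(\bar\imath,1)},u^{(\bar\imath,2)}\in W^{(\bar\imath)}$ that are merely \emph{linearly independent}. That is not enough: if they happen to be orthogonal, then $u^{(\bar\imath,2)}\in W^{(\bar\imath)}\cap(u^{(\bar\imath,1)})^\perp$, so $u^{(\bar\imath,2)}$ \emph{is} a $0$-eigenvector of $A^{(\bar\imath,1)}_{\bar N,m}$ and the two perturbations are simultaneously diagonalizable, destroying the disjointness you need. Your subsequent sentence (``nontrivial projection onto $\lspan(u^{(\bar\imath,1)})$'') in fact uses non-orthogonality, so the fix is simply to stipulate, as the paper does and as Observation~\ref{o:Diag} is built to exploit, that the two splitting directions be \emph{neither collinear nor orthogonal}.

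Second, and more substantively, the positive-separation step has a gap. $\Diag_d(A)$ as stated is the set of \emph{all} bases of eigenvectors, not orthonormal frames, and this set is not compact: one may scale each eigenvector freely. One can in fact produce $M_n\in\Diag_d(A^{(\bar\imath,1)}_{\bar N,m})$ and $M_n'\in\Diag_d(A^{(\bar\imath,2)}_{\bar N,m})$ with the offending columns $c_n u^{(\bar\imath,1)}$ and $c_n' u^{(\bar\imath,2)}$ tending to $0$ while the frames stay bases, so the two disjoint image sets can have infimum distance $0$ and no uniform $\epsilon$ exists. Your phrase ``compact orbits of the stabilizer subgroup of $O(d)$'' tacitly assumes orthonormal output, but $\Diag_d$ does not deliver that. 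The paper closes this gap explicitly: it post-composes $\Diag_d$ with orthonormalization (Gram--Schmidt), invoking Lemma~\ref{l:Comp2}b) to note that this computable, single-valued post-processing does not reduce the advice complexity, and then works with the mapping $\Diag'_d$ into the \emph{compact} space of orthogonal matrices, to which Lemma~\ref{l:LinEq}b) genuinely applies. You need the same reduction before your compactness/separation argument is valid. With those two corrections the proposal matches the paper's proof.
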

The lack of continuity of the mapping $\Diag$ is 
closely related to inputs with degenerate eigenvalues \mycite{Example~18}{LA}.
In fact our below proof yields a witness of $d$-discontinuity
by constructing an iterated sequence of \emph{symmetry breakings}
in the sense of Mathematical Physics; cf. Figure~\ref{f:diag3}.
On the other hand even in the non-degenerate case,
$\Diag$ is inherently multivalued since any permutation
of a basis constitutes again a basis.
\begin{proof}[Theorem~\ref{t:Diag}]
Let $\sigma(A)\subseteq\IR$ denote the set (!) of eigenvalues of $A$,
that is \emph{not} counting multiplicities.
\mycite{Theorem~19}{LA} has shown that knowing 
$\Card\sigma(A)\in\{1,\ldots,d\}$ suffices to 
compute some orthonormal basis of eigenvectors;
hence $\mycard(\Diag)\leq\mycomp\big(\Diag,\myrho^{\binom{d}{2}},\myrho^{d\times d}\big)\leq d$.

For the converse inequality, we shall apply Lemma~\ref{l:Simpler}d);
but, as in the proof of Theorem~\ref{t:LinEq}, 
first invoke Lemma~\ref{l:Comp2}b) by appending to $\Diag$
a computable mapping: namely orthonormalization. 
Indeed, standard \textsc{Gram-Schmidt} constitutes an effective procedure 
for turning a basis into an orthonormal one; and this process respects
eigenspaces because those belonging to different eigenvalues are
mutually orthogonal anyway. In the sequel we will therefore investigate
the multivalued mapping $\Diag':\IR^{\binom{d}{2}}\toto\calO(\IR,d)$
to the compact space of orthogonal matrices.
\\
Start with symmetric $A:=0^{d\times d}$, 
eigenvalue $0$ being $d$--fold degenerate
with eigenspace entire $\IR^d$, $d\geq2$.
Now consider two unit vectors $\vec v$ and $\vec w$ 
neither collinear nor orthogonal.
According to Remark~\ref{r:Diag} above
there exist corresponding 
sequences $(B_N)_{_N}$ and $(C_N)_{_N}$ 
of symmetric matrices,
both converging to $A$ and
with $(d-1)$--fold degenerate eigenspaces
and further 1D ones:
$B_N\cdot\vec v=1/N\cdot\vec v$ and
$C_N\cdot\vec w=1/N\cdot\vec w$; 
all eigenspaces are independent of $N$.
Hence, by Observation~\ref{o:Diag} below, 
$B_N$ and $C_N$ do not admit a common eigenvector basis;
i.e. $\Diag'(B_N)\cap\Diag'(C_N)=\emptyset$ for all $N$.
And Lemma~\ref{l:LinEq}b) implies 
$\ball\big(\Diag'(B_N),\epsilon\big)\cap
\ball\big(\Diag'(C_N),\epsilon\big)=\emptyset$
for some $\epsilon>0$ independent of $N$.
See also Figure~\ref{f:diag}\ldots
\\
We have satisfied the prerequisites to
Lemma~\ref{l:Simpler}b) and hence conclude
that there is a witness of discontinuity for $\Diag'$.
For a witness of 2-discontinuity 
observe that the above construction can be iterated
in case $d\geq3$ as depicted in Figure~\ref{f:diag3}:
To each $B_N=:A^{(0)}_N$ there exist sequences
$A^{(0,0)}_{N,M}$ and $A^{(0,1)}_{N,M}$ of symmetric
matrices converging to $A^{(0)}_N$ uniformly in $N$
with common $(d-2)$--fold degenerate eigenspaces
and further ones 
$A^{(0,j)}_{N,M}\cdot\vec x^{(0,j)}=1/M\cdot\vec x^{(0,j)}$
for $j=0,1$ where unit vectors $\vec x^{(0,0)}$ and $\vec x^{(0,1)}$
are neither collinear nor orthogonal;
similarly sequences $A^{(1,j)}_{N,M}$ and 
eigenvectors $\vec x^{(1,j)}$
correspond to $A^{(1)}_N:=C_N$.
Again, it follows 
$\ball\big(\Diag'(A^{(i,0)}_{N,M}),\epsilon\big)\cap
\ball\big(\Diag'(A^{(i,1)}_{N,M}),\epsilon\big)=\emptyset$
for some $\epsilon>0$ independent of $N,M$;
hence Lemma~\ref{l:Simpler}d) applies.
\\
And so on, until arriving at a witness of $(d-1)$-discontinuity
and at matrices $A^{(i_1,\ldots,i_{d-1})}_{N_1,\ldots,N_{d-1}}$
with $1$--fold (i.e. non-)degenerate eigenspaces
(where we cannot apply Remark~\ref{r:Diag} any more).
\qed\end{proof}
\begin{observation} \label{o:Diag}
Let $B,C\in\IC^{d\times d}$ be hermitian matrices.
Let $B\cdot\vec v=\lambda\vec v$ and $C\cdot\vec w=\mu\vec w$
denote respective eigenvectors to non-degenerate
eigenvalues $\lambda$ and $\mu$.
\\
If $B$ and $C$ admit a common eigenvector basis,
then $\vec v$ and $\vec w$ are either collinear or orthogonal.
\end{observation}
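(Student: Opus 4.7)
The strategy is to upgrade ``shared eigenvector basis'' to ``$B$ and $C$ commute,'' and then let the two non-degeneracy hypotheses pin down the relative position of $\vec v$ and $\vec w$.

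First I would verify $BC=CB$. If $\vec u_1,\ldots,\vec u_d$ is a common eigenvector basis with $B\vec u_i=\beta_i\vec u_i$ and $C\vec u_i=\gamma_i\vec u_i$, then $BC\vec u_i=\beta_i\gamma_i\vec u_i=CB\vec u_i$ for every $i$; two linear maps agreeing on a basis agree everywhere.

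Next I would show that $\vec w$ is itself an eigenvector of $B$. From the commutation relation, $C(B\vec w)=B(C\vec w)=\mu B\vec w$, so $B\vec w$ lies in the $\mu$-eigenspace of $C$. Non-degeneracy of $\mu$ forces this eigenspace to be $\lspan(\vec w)$; hence $B\vec w=\alpha\vec w$ for some $\alpha\in\IR$ (real because $B$ is Hermitian), where $\alpha=0$ covers the possibility $B\vec w=0$.

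Finally I would split on the value of $\alpha$. If $\alpha=\lambda$, then $\vec w$ lies in the one-dimensional $\lambda$-eigenspace $\lspan(\vec v)$ of $B$, so $\vec v$ and $\vec w$ are collinear. If $\alpha\neq\lambda$, then $\vec v$ and $\vec w$ are eigenvectors of the Hermitian matrix $B$ associated with the distinct eigenvalues $\lambda$ and $\alpha$, and are therefore orthogonal by the standard orthogonality of eigenspaces of a Hermitian operator. I do not foresee a real obstacle; the only step deserving a line of care is the commutation $BC=CB$, and everything else is a brief chase through one-dimensional eigenspaces.
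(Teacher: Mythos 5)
Your proof is correct. The paper states this as an Observation without supplying any proof, so there is no official argument to compare against; the author evidently regarded it as elementary. Your route -- upgrade ``common eigenvector basis'' to $BC=CB$, use commutativity together with non-degeneracy of $\mu$ to force $\vec w$ to be an eigenvector of $B$, and then split on whether its $B$-eigenvalue equals $\lambda$ -- is clean, complete, and a perfectly standard linear-algebra argument. A marginally more hands-on alternative that skips the commutation step: among the common eigenvector basis $\vec u_1,\ldots,\vec u_d$, exactly one $\vec u_i$ lies in the one-dimensional $\lambda$-eigenspace of $B$ (hence is collinear with $\vec v$), and exactly one $\vec u_j$ lies in the one-dimensional $\mu$-eigenspace of $C$ (hence collinear with $\vec w$); either $i=j$, giving collinearity, or $i\neq j$, in which case $\vec u_i$ and $\vec u_j$ are eigenvectors of the Hermitian $B$ to distinct eigenvalues and thus orthogonal. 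Both arguments rest on the same two facts (one-dimensionality of the non-degenerate eigenspaces and orthogonality of Hermitian eigenspaces), so the difference is purely cosmetic.
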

%
%%%%%%%%%%%%%%%%%%%%%%%%%%%%%%%%%%%%%%
\subsection[Finding Some Eigenvector: 
Logarithmic Complexity]{Finding Some Eigenvector} \label{s:EV}
Instead of computing an entire basis of eigenvectors,
we now turn to the problem of determining just one
arbitrary eigenvector to a given real symmetric matrix.
This turns out to be considerably less `complex':

\begin{theorem} \label{t:EV1}
For a real symmetric $n\times n$-matrix $A$,
consider the quantity
\[
m(A)\quad:=\quad
\min\big\{\dim\kernel(A-\lambda\id):\lambda\in\sigma(A)\big\}
\quad\in\quad\{1,\ldots,n\} \enspace . \]
Given $d:=\lfloor\log_2 m\rfloor\in\{0,1,\ldots,\lfloor\log_2 n\rfloor\}$
and a $\myrho^{\binom{n}{2}}$--name of $A$, 
one can $\myrho^n$--compute (i.e. effectively find)
some eigenvector of $A$.
\end{theorem}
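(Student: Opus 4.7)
The plan is to leverage the advice $d$ to localize a single eigenvalue of $A$ inside an open interval $(a,b)\subset\IR$ and then extract an eigenvector by invoking Theorem~\ref{t:LinEq} on the then rank-known matrix $A-\nu I$. The combinatorial heart is a pigeonhole: on the advice cell $D_d:=\{A:m(A)\in[2^d,2^{d+1})\}$ every eigenvalue of $A$ has multiplicity $\geq 2^d$, so any interval $(a,b)$ whose integer spectral count $N(a,b):=\operatorname{tr}\bigl(E_{(a,b)}(A)\bigr)$ lies in the window $[2^d,2^{d+1})$ necessarily contains \emph{exactly} one distinct eigenvalue of $A$: two distinct ones would force $N\geq 2\cdot 2^d=2^{d+1}$, while zero eigenvalues would give $N=0<2^d$.

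First, I would dovetail over $(a,b)\in\IQ^2$ with $a<b$. For each candidate, in parallel, semi-decide $a,b\notin\sigma(A)$ by certifying $\det(A-aI)\neq 0$, which is an $\myrho$--r.e.\ open condition since $\det$ is computable from the matrix entries. Once both endpoints are certified outside $\sigma(A)$, the spectral projection $E_{(a,b)}(A)$ becomes $(\myrho,\myrho)$--computable via the resolvent formula $E_{(a,b)}(A)=\tfrac{1}{2\pi\mathrm{i}}\oint_{\gamma}(zI-A)^{-1}\,dz$ on a rectangular contour $\gamma$ enclosing $(a,b)$ and avoiding the rest of $\sigma(A)$; since $N(a,b)$ is integer valued, any approximation within $\tfrac12$ reveals its exact value. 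The dovetail halts at the first pair $(a,b)$ with $N(a,b)\in[2^d,2^{d+1})$. Termination on $D_d$ is assured by density of $\IQ$: taking any eigenvalue $\nu^*$ of multiplicity $m(A)\in[2^d,2^{d+1})$ and rationals $a^*,b^*$ lying in the two open gaps of $\sigma(A)$ adjacent to $\nu^*$ yields $N_A(a^*,b^*)=m(A)\in[2^d,2^{d+1})$.

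Having committed to the first certified pair $(a,b)$ with $N:=N(a,b)$, the unique $\nu\in\sigma(A)\cap(a,b)$ is recovered from $\nu=\operatorname{tr}\bigl(A\cdot E_{(a,b)}(A)\bigr)/N$, which is $(\myrho,\myrho)$--computable. Finally I would apply Theorem~\ref{t:LinEq} to the symmetric singular matrix $A-\nu I\in\IR^{n\times n}$, whose rank $n-N$ is now deterministically known from $N$, to extract a nonzero vector $v\in\kernel(A-\nu I)$---precisely the required eigenvector of $A$.

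The hard part is to verify that this dovetail-and-commit procedure really assembles into a single $(\myrho^{\binom{n}{2}},\myrho^n)$--computable realizer on the whole cell $D_d$, and not merely a pointwise existence argument. This works because both certification predicates, $a\notin\sigma(A)$ and $N(a,b)=N$, are open and locally constant in $A$: once the triple $(a,b,N)$ is committed for some $A\in D_d$, the very same triple continues to certify for all $A'$ in a sufficiently small neighborhood, and the subsequent map $A'\mapsto\nu(A')\mapsto v(A')$ is then computable by composition of computable maps. Since $d$ ranges over the $\lfloor\log_2 n\rfloor+1$ values $\{0,1,\ldots,\lfloor\log_2 n\rfloor\}$, the resulting partition $\{D_0,\ldots,D_{\lfloor\log_2 n\rfloor}\}$ of $\IR^{\binom{n}{2}}$ realizes the claimed $(\lfloor\log_2 n\rfloor+1)$-wise advice.
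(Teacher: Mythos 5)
Your proof is correct and reaches the same conclusion, but it localizes the eigenvalue by a genuinely different technique from the paper. The paper first invokes \mycite{Proposition~17}{LA} to compute an $n$-tuple $(\lambda_1,\ldots,\lambda_n)$ of eigenvalues listed with multiplicities, and then applies the combinatorial Lemma~\ref{l:Distinct}c) to that tuple: dovetailing over index sets $I\subseteq[n]$ with $2^d\leq\Card(I)<2^{d+1}$ and semi-deciding the separation condition~(\ref{e:Distinct}), the pigeonhole bound $2^d\leq m(A)<2^{d+1}$ forces any such certified $I$ to be exactly one equivalence class, yielding an eigenvalue together with its exact multiplicity. You instead work directly on the matrix, dovetailing over rational intervals $(a,b)$, certifying $a,b\notin\sigma(A)$ via $\det(A-aI)\neq0$, and computing the spectral count $N(a,b)=\operatorname{tr}E_{(a,b)}(A)$ through the resolvent contour integral; the same pigeonhole shows that $N(a,b)\in[2^d,2^{d+1})$ pins down a single distinct eigenvalue, recoverable as $\nu=\operatorname{tr}(A\,E_{(a,b)}(A))/N$. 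Both proofs then finish identically, feeding the now-known rank of $A-\nu\id$ into the kernel-vector algorithm of Theorem~\ref{t:LinEq} (i.e.\ \mycite{Theorem~11}{LA}). The paper's route stays inside the combinatorics of the eigenvalue multi-set and so keeps the argument entirely within the toolbox already cited from \cite{LA}; yours is more analytic, trading the explicit eigenvalue-list computation for spectral calculus (one should make explicit that, once $a,b\notin\sigma(A)$ is certified, a rational lower bound on $\operatorname{dist}(\{a,b\},\sigma(A))$ can also be found by dovetailing so that the contour integral carries effective error bounds). The essential insight --- the factor-two window $[2^d,2^{d+1})$ making the pigeonhole work --- is the same in both.
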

The proof employs the following tool about 
certain combinatorics and computability of finite multi-sets.
\begin{lemma} \label{l:Distinct}
Let $(x_1,\ldots,x_n)$ denote an $n$-tuple of real
numbers and consider the induced partition
$\calI:=\big\{\{1\leq i\leq n:x_i=x_j\}:1\leq j\leq n\big\}$
of the index set $\{1,\ldots,n\}=:[n]$
according to the equivalence relation
$i\equiv j:\Leftrightarrow x_i=x_j$.
Furthermore let $m:=\min\big\{\Card(I):I\in\calI\big\}$.
\begin{enumerate}
\item[a)]
 Consider $I\subseteq[n]$ with $1\leq\Card(I)<2m$ such that
\begin{equation} \label{e:Distinct}
x_i\neq x_j \quad\text{for all}\quad i\in I
\quad\text{and all}\quad j\in[n]\setminus I \enspace . 
\end{equation}
 Then $I\in\calI$.
\item[b)]
 Suppose $k\in\IN$ is such that $k\leq m<2k$.
 Then there exists $I\in\calI$ with $k\leq\Card(I)<2k$
 satisfying (\ref{e:Distinct}).
 Conversely every $I\subseteq[n]$ with $k\leq\Card(I)<2k$
 satisfying (\ref{e:Distinct}) has $I\in\calI$.
\item[c)]
 Given a $\myrho^n$--name of $(x_1,\ldots,x_n)$
 and given $k\in\IN$ with $k\leq m<2k$,
 one can computably find some $I\in\calI$.
\item[d)]
 Given a $\myrho^n$--name of $(x_1,\ldots,x_n)$
 and given $\Card(\calI)$, one can compute $\calI$.
\end{enumerate}
\end{lemma}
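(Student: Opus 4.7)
The plan is to prove~(a) as the structural core, obtain~(b) as an immediate corollary, and then exploit semidecidability of strict real inequality together with~(b) to furnish the two semialgorithms of~(c) and~(d).

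For~(a), I would observe that condition~(\ref{e:Distinct}) says precisely that $I$ is \emph{saturated} under the equivalence relation $i\equiv j :\Leftrightarrow x_i=x_j$; equivalently $I$ is a union of equivalence classes from $\calI$. Since each such class has cardinality at least $m$, a union of two or more classes would already comprise at least $2m$ elements; the hypothesis $\Card(I)<2m$ therefore allows at most one class, and $\Card(I)\geq 1$ rules out the empty union---so $I\in\calI$. For~(b), pick any $J\in\calI$ realizing the minimum, so $\Card(J)=m$; then $k\leq m=\Card(J)<2k$ and~(\ref{e:Distinct}) is vacuously met by any equivalence class. The converse assertion in~(b) is just~(a) applied with the size bound $2k\leq 2m$.

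For~(c), recall that strict real inequality is $\myrho$-r.e.\ open, hence semidecidable. Dovetail over all subsets $I\subseteq[n]$ with $k\leq\Card(I)<2k$: for each candidate, attempt to semidecide $x_i\neq x_j$ for every $i\in I$ and $j\in[n]\setminus I$, and output the first $I$ for which the verification succeeds. By~(b) such an $I$ exists, and by~(b) again whichever is found belongs to $\calI$. For~(d), let $N:=\Card(\calI)$ and enumerate all ordered partitions $[n]=I_1\uplus\cdots\uplus I_N$ into $N$ non-empty parts; for each such candidate, semidecide in parallel that every $I_j$ satisfies~(\ref{e:Distinct}) with respect to its complement. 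The true partition certainly qualifies; conversely any accepted partition has each $I_j$ saturated, hence a union of $\calI$-classes, so the $N$ parts together exhaust $\Card(\calI)=N$ classes---and a pigeonhole count forces each $I_j$ to be exactly one class. Thus the first accepted partition agrees with $\calI$ as a collection of subsets.

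The main obstacle is isolating the combinatorial heart of~(a): recognising that within the cardinality window $[1,2m)$ a saturated subset has no room to accommodate two different classes and must therefore be a single one. Once this is in hand,~(b) is merely a choice of witness plus the converse, and~(c)--(d) become routine dovetailed searches; their correctness rests entirely on the structural fact that, thanks to~(a) and the pigeonhole count in~(d), every candidate the search can accept is automatically the intended one.
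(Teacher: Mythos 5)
Your proposal is correct and, for parts~(a)--(c), follows essentially the same route as the paper. For~(a) the paper argues in two steps ($J\subseteq I$, then a somewhat awkwardly-phrased cardinality count to show $I\subseteq J$), whereas your reformulation via saturation---condition~(\ref{e:Distinct}) is exactly the statement that $I$ is a union of $\calI$-classes, each of size $\geq m$, and the window $1\leq\Card(I)<2m$ leaves room for exactly one class---is the same idea packaged more transparently; (b)~and~(c) are then identical to the paper's treatment (take $\Card(J)=m$ as the existence witness, apply~(a) for the converse, and dovetail the $\myrho$-r.e.\ open inequality tests). One small wording nit: in~(b), condition~(\ref{e:Distinct}) is not ``vacuously'' met by an equivalence class, it is simply met (the quantified statement has genuine instances when $J\neq[n]$; it holds because inequivalent elements have distinct $x$-values). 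For~(d) the paper gives no proof at all, instead citing \mycite{Proposition~20}{LA}; your supplied argument---enumerate $N$-part partitions, semidecide that every part satisfies~(\ref{e:Distinct}), and note that an accepted partition has each part a nonempty union of classes so the pigeonhole on $N=\Card(\calI)$ forces each part to be a single class---is a legitimate self-contained proof and is an improvement in that sense.
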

Claim~c) can be considered a weakening of Claim~d) 
which had been established in \mycite{Proposition~20}{LA}.
\begin{proof}
\begin{enumerate}
\item[a)]
 Take $i\in I$, $i\in J$ for some $J\in\calI$.
 Then obviously $I\supseteq J$, because $j\in J\setminus I$
 would imply $x_i=x_j$: contradicting Equation~(\ref{e:Distinct}).
\\
 It remains to show $I\subseteq J$.
 Suppose that $x_i\neq x_{i'}$ for some $i'\in I$.
 Then $i'\in J'$ for some $J'\in\calI$ disjoint to $J$.
 Thus condition ``$x_i\neq x_j$'' fails for all $j\in J$;
 and ``$x_{i'}\neq x_j$'' fails for all $j\in J'$:
 i.e. for a total of $\Card(J)+\Card(J')\geq 2m$ choices of $j\in[n]$,
 whereas by Equation~(\ref{e:Distinct})
 it is supposed to hold for all $j\in[n]\setminus I$:
 a total of $>n-2m$ choices---contradiction.
\item[b)]
 For the first claim, simply choose $I\in\calI$
 with $\Card(I)=m$. Concerning the second claim,
 observe that $k\leq m$ and $\Card(I)<2k$ imply
 $\Card(I)<2m$; hence Item~a) applies.
\item[c)]
 Recall that \emph{in}equality of real numbers is 
 `semi-decidable'; formally:
 $\{(x,y):x\neq y\}\subseteq\IR^2$ is
  $\myrho^2$--r.e. open in $\IR^2$
 in the sense of \mycite{Definition~3.1.3.2}{Weihrauch}.
 Hence we may simultaneously try every 
 $I\subseteq[n]$ with $k\leq\Card(I)<2k$
 and semi-decide Condition~(\ref{e:Distinct}):
 according to Item~b) this will succeed
 precisely for some $I\in\calI$.
\qed\end{enumerate}\end{proof}

\begin{proof}[Theorem~\ref{t:EV1}]
Compute according to \mycite{Proposition~17}{LA}
some ($\myrho^n$--name of an)  $n$-tuple of 
eigenvalues $(\lambda_1,\ldots,\lambda_n)$
of $A$, repeated according to their multiplicities.
Now due to \mycite{Theorem~11}{LA},
(some eigenvector in) the eigenspace $\kernel(A-\lambda_i\id)$
can be computably found when knowing $\rank(A-\lambda_i\id)$
(recall Theorem~\ref{t:LinEq}), that is the
multiplicity of $\lambda_i$ in the multi-set
$(\lambda_1,\ldots,\lambda_n)$.
To this end we apply Lemma~\ref{l:Distinct}c),
observing $k:=2^d\leq m<2k$ since $d=\lfloor\log_2 m\rfloor$.
\qed\end{proof}
\begin{theorem} \label{t:EV}
The multivalued mapping
\[ \EVec_n:\IR^{\binom{n}{2}}\quad\ni\quad A\quad\mapsto\quad
\{\vec w \text{ eigenvector of }A\} \]
has complexity 
$\mycard(\EVec_n)=\mycomp\big(\EVec_n,\myrho^{\binom{n}{2}},\myrho^{n}\big)=
\lfloor\log_2 n\rfloor+1$.
\end{theorem}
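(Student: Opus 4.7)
The upper bound $\mycomp(\EVec_n,\myrho^{\binom{n}{2}},\myrho^n)\leq\lfloor\log_2 n\rfloor+1$ follows immediately from Theorem~\ref{t:EV1}: the advice $d:=\lfloor\log_2 m(A)\rfloor$ takes values in the $(\lfloor\log_2 n\rfloor+1)$-element set $\{0,1,\ldots,\lfloor\log_2 n\rfloor\}$, inducing a partition of $\dom(\EVec_n)$ into that many computable pieces. Since $\mycard\leq\mycomp$ by Observation~\ref{o:MainThm2}, it remains to establish the matching lower bound $\mycard(\EVec_n)\geq\lfloor\log_2 n\rfloor+1$.

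For this lower bound I mimic the strategy of Theorem~\ref{t:Diag}. Since vector normalization $\norm:\IR^n\setminus\{0\}\to\calS^{n-1}$ is continuous and single-valued, Lemma~\ref{l:Comp2}b) permits me to replace $\EVec_n$ by the compact-valued map $\EVec'_n:=\norm\circ\EVec_n:\IR^{\binom{n}{2}}\toto\calS^{n-1}$, and Lemma~\ref{l:Flag2}b) then reduces matters to exhibiting a witness of $\lfloor\log_2 n\rfloor$-discontinuity for $\EVec'_n$. I build such a witness via Lemma~\ref{l:Simpler}d) by iterated symmetry breaking starting from $A:=0_{n\times n}$: at each level $k=1,\ldots,\lfloor\log_2 n\rfloor$, the child matrix $A^{(\bar\imath,j)}_{\bar N,M}$ is obtained from its parent $A^{(\bar\imath)}_{\bar N}$ by a Remark~\ref{r:Diag}-style perturbation of magnitude $O(1/M)$ that halves every current eigenspace $W$ of dimension $\geq 2$ along an orthogonal decomposition $W=U^{(j)}\oplus V^{(j)}$; the $r$ children $j=1,\ldots,r$ use $r$ distinct generic decompositions. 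The decomposition tree is arranged so that at every intermediate level all eigenspaces have dimension $\geq 2$ (bundling leftover dimensions into adjacent blocks, e.g., $5\to 3+2\to$ leaves for $n=5$, and $9\to 5+4\to(3+2)+(2+2)\to$ leaves for $n=9$), which is achievable at depth $\lfloor\log_2 n\rfloor$ for every $n$.

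The required $\epsilon$-ball intersection conditions of Lemma~\ref{l:Simpler}d) reduce, at each internal node, to the $r$-wise $\bigcap\ball$ of the children's unit eigenvector sets being empty. Within each eigenspace $W$, this follows when the $r$ generic orthogonal decompositions yield sub-subspaces whose $r$-wise intersections are trivial: by a dimension count, $r=2$ suffices when $\dim W$ is even, while $r\geq 3$ suffices in general (e.g., for 3-dimensional blocks). The resulting unit eigenvector sets are then compact subsets of $\calS^{n-1}$ with trivial $r$-wise intersection, hence $r$-wise $\epsilon$-separated uniformly in $\bar N,M$ by an inductive application of Lemma~\ref{l:LinEq}b) (the subspaces depending only on $\bar\imath,j$, not on the scalar parameters). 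The diagonal conditions ($\ell>1$ in Lemma~\ref{l:Simpler}d)) are handled analogously at the leaves, where the finite eigenvector sets across differing branches can be placed in general position.

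The main obstacle is to ensure that no intermediate matrix $A^{(\bar\imath)}_{\bar N}$ develops an isolated 1-dimensional eigenspace with a distinct eigenvalue: any such eigenspace would force all $r$ children to inherit (by matrix convergence and continuity of simple eigenspaces) a common eigenvector converging to the parent's 1D direction, thereby spoiling the $r$-wise $\bigcap\ball$ condition irrespective of how large $r$ is chosen. The bundling strategy above circumvents this by confining isolated 1-dimensional eigenspaces to the leaves of the tree. With this structure in place, Lemma~\ref{l:Simpler}d) delivers the desired witness of $\lfloor\log_2 n\rfloor$-discontinuity, hence $\mycard(\EVec_n)\geq\lfloor\log_2 n\rfloor+1$, completing the proof.
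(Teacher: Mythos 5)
Your proof follows the paper's strategy closely: the upper bound is Theorem~\ref{t:EV1}, and the lower bound is obtained by exhibiting a witness of $\lfloor\log_2 n\rfloor$-discontinuity via Lemmas~\ref{l:Comp2}b), \ref{l:Flag2}b), and \ref{l:Simpler}d), after composing with normalization to get a compact-valued map and iterating a symmetry-breaking perturbation from $A=0$. The genuine point of divergence is in how the witness is built. The paper asserts ``w.l.o.g. $n=2^d$'' and then uses the \emph{explicit} nested orthogonal subspaces $W^{(\bar\jmath)}_{\bar\imath}$ of Remark~\ref{r:EV}, whose concrete pairwise-trivial-intersection properties (Remark~\ref{r:EV}b(i)--(iii)) make $r=2$ suffice at every level because every intermediate eigenspace has dimension $2^{d-k}$. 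You instead handle arbitrary $n$ head-on, allowing $r\geq3$ generic orthogonal decompositions and a ``bundling'' schedule that keeps every non-leaf eigenspace of dimension $\geq 2$ at depth $\lfloor\log_2 n\rfloor$. Your observation that an isolated 1-dimensional eigenspace at an intermediate node would, by continuity of simple eigenvectors, plant a common limiting direction in all $r$ children and defeat the $\epsilon$-ball separation is exactly right, and it is precisely the reason the paper's ``w.l.o.g.'' is not as innocent as it looks: a block-diagonal embedding of $\EVec_{2^d}$ into $\EVec_n$ leaves behind a fixed eigenspace common to all perturbations. On the other hand, your argument buys generality at the cost of replacing the paper's explicit Remark~\ref{r:EV} subspaces by unverified genericity and dimension-count claims; these do check out (for a balanced $\lceil m/2\rceil+\lfloor m/2\rfloor$ split, three generic subspaces of dimension $\lceil m/2\rceil$ in $\IR^m$ have trivial triple intersection since $\lceil m/2\rceil\leq 2m/3$ for $m\geq 2$, and the $\ell>1$ diagonal conditions then reduce, via nesting plus orthogonality of distinct parent eigenspaces, to the level-one separation), but spelling them out would bring your proof roughly to the length of the paper's.
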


\begin{myremark} \label{r:EV}
\begin{enumerate}
\item[a)]
In $\IK^{2m}$, the two subspaces
$U:=\{ (x_1,\ldots,x_m,0,\ldots,0) :x_i\in\IK\}$ and
$V:=\{ (x_1,\ldots,x_m,x_1,\ldots,x_m):x_i\in\IK\}$,
as well as their orthogonal complements
$U^\bot=\{(0,\ldots,0,x_1,\ldots,x_m) :x_i\in\IK\}$ and
$V^\bot=\{ (x_1,\ldots,x_m,-x_1,\ldots,-x_m):x_i\in\IK\}$,
have dimension $m$ and satisfy
$\{0\}=U\cap V=U\cap V^\bot=U^\bot\cap V=U^\bot\cap V^\bot$.
\item[b)]
Write $W^{(0)}_0:=U$, $W^{(0)}_1:=U^\bot$,
$W^{(1)}_0:=V$, and $W^{(1)}_1:=V^\bot$.
Applying the above construction to
each of them again, we iteratively obtain
in $\IK^{2^d}$ subspaces $W^{(j_1,\ldots,j_k)}_{i_1,\ldots,i_k}$
of dimension $2^{d-k}$ ($0\leq k\leq d$, $i_\ell,j_\ell=0,1$) 
with the following properties:
{\setlength{\parskip}{-9pt}\begin{enumerate}
\item[i)]
$\displaystyle W^{(\bar\jmath)}_{\bar\imath}\bot 
W^{(\bar\jmath)}_{\bar\imath'}$
for all $\bar\imath,\bar\imath',\bar\jmath\in\{0,1\}^k$, $\bar\imath\neq\bar\imath'$.
\item[ii)]
$\displaystyle W^{(j_1,\ldots,j_{k+1})}_{i_1,\ldots,i_{k+1}}
\subseteq W^{(j_1,\ldots,j_{k})}_{i_1,\ldots,i_{k}}$
\item[iii)]
For any choice of 
$\displaystyle U_{\bar\imath}\in\big\{W^{(\bar\jmath)}_{\bar\imath}:\bar\jmath\big\}$,
it holds $\bigcap_{\bar\imath} U_{\bar\imath}=\{0\}$.
\end{enumerate}}
\item[c)]
Let $A:\IC^n\to\IC^n$ denote an hermitian linear map
with $2m$--fold degenerate eigenvalue $\lambda\in\IR$.
Let $W^{(j)}_i$ denote $m$-dimensional
subspaces of $\kernel(A-\lambda\id)$ according
to Item~a), i.e. 
such that $W^{(j)}_0\bot W^{(j)}_1$ and
$W^{(0)}_0\cap W^{(1)}_1=\{0\}=W^{(0)}_1\cap W^{(1)}_0$.
Then to every sufficiently small $\epsilon>0$ and $j=0,1$ there is a
hermitian linear map $A^{(j)}:\IC^n\to\IC^n$ with
{\setlength{\parskip}{-9pt}\begin{itemize}
\item $\displaystyle A^{(j)}\big|_{\kernel(A-\lambda\id)^\bot}\;\equiv\;
A\big|_{\kernel(A-\lambda\id)^\bot}$
\item $\|A-A^{(j)}\|\leq\epsilon$;
\item $A^{(j)}$ is real if $A$ was.
\item $\displaystyle A^{(j)}\big|_{\kernel(A-\lambda\id)}$
 has eigenspaces $W^{(j)}_0$ and $W^{(j)}_1$
\item to different eigenvalues distinct from those of $A$
\end{itemize}}
In particular, eigenvectors of $A^{(j)}$ lie in
$\kernel(A-\lambda\id)^\bot\cup W^{(j)}_0\cup W^{(j)}_1$.
\end{enumerate}
\end{myremark}

\begin{proof}[Theorem~\ref{t:EV}]
Theorem~\ref{t:EV1} shows that $\EVec_n$ is
$(d+1)$-computable for $d:=\lfloor\log_2 n\rfloor$.
It remains to show the existence of a witness of $d$-discontinuity, and w.l.o.g. $n=2^d$.
To this end, start with $A:=0$ and $A^{(j)}_{N}$ ($j=0,1$)
according to Remark~\ref{r:EV}c) with $\epsilon:=1/N$, 
$\lambda:=0$, $m:=2^{d-1}$.
It follows $\EVec(A^{(j)})\subseteq W^{(j)}_0\cup W^{(j)}_1$,
thus $\EVec(A^{(0)})\cap\EVec(A^{(1)})=\emptyset$;
hence 
$\ball\big(\tilde\EVec(A^{(0)}),\delta\big)\cap\ball\big(\tilde\EVec(A^{(1)}),\delta\big)
=\emptyset$ for some $\delta>0$ according to
Lemma~\ref{l:LinEq}b) where 
$\tilde\EVec:=\norm\circ\EVec:\IR^{\binom{n}{2}}\to\calS^{n-1}$
has compact range and, by virtue of Lemma~\ref{l:Comp2}b),
the same complexity of non-uniform computability as $\EVec$.
This yields, according to Lemma~\ref{l:Simpler}b),
(in a complicated way) a witness of discontinuity of $\EVec$.
\\
Now iterating Remark~\ref{r:EV}c) with the subspaces
according to Remark~\ref{r:EV}b), we obtain
matrix sequences $A^{(j_1,\ldots,j_k)}_{n_1,\ldots,n_k}$ 
for $1\leq k\leq d$ and $j_1,\ldots,j_k\in\{0,1\}$
and $n_1,\ldots,n_k\in\IN$ with 
$\displaystyle
\EVec(A^{(\bar\jmath)}_{\bar n})\subseteq
\bigcup\nolimits_{\bar\imath\in\{0,1\}^k} W^{(\bar\jmath)}_{\bar\imath}$
for each $\bar\jmath\in\{0,1\}^k$; hence 
\[ \bigcap\nolimits_{\bar\jmath\in\{0,1\}^k}
\EVec(A^{(\bar\jmath)}_{\bar n})\quad\subseteq\quad
\bigcap\nolimits_{\bar\jmath}\bigcup\nolimits_{\bar\imath}
W^{(\bar\jmath)}_{\bar\imath}
\quad=\quad\{0\} \]
by Remark~\ref{r:EV}b(i-iii).
Therefore $\bigcap\nolimits_{\bar\jmath\in\{0,1\}^k}
\ball\big(\tilde\EVec(A^{(\bar\jmath)},\epsilon\big)=\emptyset$
for some $\epsilon>0$ according to Lemma~\ref{l:LinEq}b).
So Lemma~\ref{l:Simpler}d) finally yields
a witness of $d$-discontinuity.
\qed\end{proof}

%%%%%%%%%%%%%%%%%%%%%%%%%%%%%%%%%%%%%%%
\subsection[Root Finding: 
Countably Infinite Complexity]{Root Finding}
We now address the effective Intermediate Value Theorem
\mycite{Theorem~6.3.8.1}{Weihrauch}.
Closely related is the problem of selecting
from a given closed non-empty interval
some point, recall Example~\ref{x:Roux}d).
Both are treated quantitatively
within our complexity-theoretic framework.

Specifically concerning Example~\ref{x:Roux}d),
observe that any non-degenerate interval $[a,b]$ contains
a rational (and thus computable) point $x$;
and providing an integer numerator and denominator of $x$
makes the problem of computably \emph{selecting} some $x$
from given $[a,b]$ trivial. On the other hand, rational
numbers may require arbitrarily large descriptions;
even more, there are intervals containing rationals
only of such large Kolmogorov Complexity;
cf. Claim~d) of the following

\begin{myremark} \label{r:Liouville}
\begin{enumerate}
\item[a)]
There exists an unbounded function $\varphi:\IN\to\IN$
such that the Kolmogorov Complexity $C(m)$ of any 
integer $m\geq n$ is at least $\varphi(n)$.
\item[b)]
Fix $q\in\{2,3,\ldots\}$. 
For $x\in [0,1)\cap\IQ$, $x=r/s=\sum_{i=1}^\infty a_i q^{-i}$, 
$r,s\in\IZ$ coprime and $a_i\in\{0,1,\ldots,q-1\}$,
$C(r,s)$ and $C\big((a_i)_{_i}\big)$ agree up to some
constant independent of $x$ (but possibly depending on $q$).
\end{enumerate}
That is, just like the usual Kolmogorov Complexity
(of a binary string or integer)
depending up to an additive constant 
on the universal machine under consideration
\mycite{Theorem~2.1.1}{Vitanyi}, the complexity $C(x)$
of a \emph{rational} number $x\in\IQ$ 
is well-defined up to $\pm\calO(1)$.
\begin{enumerate}
\item[c)]
For $a,b\in\IQ$, 
$C(a+b),C(a-b),C(a\cdot b),C(a/b)\leq C(a)+C(b)+\calO(1)$. \\
To every $a\in\IQ$ there exists $\bar\sigma\in\Sigma^\omega$
with $C(a)\leq C(\bar\sigma)+\calO(1)$,
the latter in the sense of Proposition~\ref{p:Kolmogorov}.
\item[d)]
Let $x\in\IR$ be algebraic of degree 2
(e.g. $x=\sqrt{p}+q$ for some prime number $p\in\IP$ and $q\in\IQ$).
Then there exists $\varepsilon>0$ such that
for all $r,s\in\IZ$ with $s>0$,
$|x-r/s|>\varepsilon/s^2$.
\item[e)]
Given $N\in\IN$, there exist $a,b\in\IQ\cap[0,1]$
such that all $x\in\IQ\cap[a,b]$ have
$C(x)\geq N$.
\end{enumerate}
\end{myremark}
\begin{proof}
\begin{enumerate}
\item[a)] is from \cite[\textsc{Theorem~2.3.1}i]{Vitanyi}.
\item[b)] On the one hand, a constant-size program can
easily convert $(r,s)$ to the sequence $(a_i)$;
hence $C\big((a_i)_{_i}\big)\leq C(r,s)+c$.
Concerning a converse inequality, 
$x=\sum_{i=1}^\infty a_i q^{-i}\in\IQ$
implies that $(a_i)_{_i}$ be periodic after some initial segment;
i.e. $a_{i}=a_{i+n}=a_{i+2n}=\ldots$ for all $i\geq m$;
hence 
\begin{eqnarray*} x &=& \underbrace{\sum\nolimits_{i<m} a_i q^{-i}}_{=:u} + 
\Big(\underbrace{\sum\nolimits_{i=0}^{n-1} a_{m+i} q^{n-i}}_{=:v}\Big)\cdot\big(
q^{-m-n}+q^{-m-2n}+q^{-m-3n}+\cdots\big) \quad= \\
 &=& u+v\cdot q^{-m}/(q^n-1)
\end{eqnarray*}
with $u\in\IQ$ and $v,q^{-m}/(q^n-1)\in\IZ$ which
can easily be converted into coprime $r,s$ with $x=r/s$.
Also both $m$ (the length of the initial segment) and $n$
(the period) need not be stored separately but can be sought for 
computationally within the sequence $(a_i)$.
\item[c)]
It is easy, and uses only constant size overhead, 
to combines Turing machines computing $a$ and $b$
into ones computing $a+b$, $a-b$, $a\times b$, and $a/b$, respectively.
Moreover a machine computing numerator and denominator of $a$
can be adapted to calculate a $\myrho$--name $\bar\sigma$ of $a$.
\item[d)] is \textsf{Liouville's Theorem} on Diophantine approximation.
\item[e)] 
Take $x\in(1/3,2/3)$ algebraic of degree 2,
$\varepsilon>0$ according to c). Choose $0<\delta<1/3$
such that $\varphi(\sqrt{\varepsilon/\delta})>N$.
Then $\IQ\ni r/s\in[x-\delta,x+\delta]$ requires
$\varepsilon/s^2<\delta$, hence $s>\sqrt{\varepsilon/\delta}$
and $C(x)=C(r,s)\geq C(s)\geq N$.
\qed\end{enumerate}\end{proof}
Note that Remark~\ref{r:Liouville}e) applies only to
rational numbers; that is $[a,b]$ might still contain, say,
algebraic reals $x$ of low Kolmogorov complexity.
We now extend the claim to computable elements:
Referring to Proposition~\ref{p:Kolmogorov},
Theorem~\ref{t:Roux}b) below shows that, 
even with the help of negative information about
(i.e. a $\psiG{}$--name of) a given interval $[a,b]$,
unbounded discrete advice is in general necessary
to find (a $\myrho$--name of) some $x\in[a,b]$.

\begin{theorem} \label{t:Roux}
\begin{enumerate}
\item[a)]
Finding a zero of a given continuous function $f:[0,1]\to[-1,+1]$
with $f(0)=-1$ and $f(1)=+1$, that is the multivalued mapping
$\Intermed:\subseteq C[0,1]\toto[0,1]$, 
\[ f\mapsto f^{-1}[0] \text{ on }
\dom(\Intermed):=\big\{f:[0,1]\to[-1,+1]\text{ continuous},
f(1)=1=-f(0)\big\} \enspace ,\]
has $\mycomp(\Intermed,[\myrho\myto\myrho],\myrho)=\mycard(\Intermed)=\omega$.
\item[b)]
Selecting some point from a given co-r.e. closed bounded non-degenerate interval,
specifically the multivalued mapping 
\[ \Select:[a,b]\mapsto [a,b], \qquad
\dom(\Select):=\{[a,b] : 0\leq a<b\leq 1\} \enspace , \]
is not $d$-wise $(\psiG{},\myrho)$--continuous 
for any $d\in\IN$.
\end{enumerate}
\end{theorem}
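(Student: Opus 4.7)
My plan for $\mycomp(\Intermed)\leq\omega$ is to equip $\dom(\Intermed)$ with a countable partition indexed by the dyadic rationals in $(0,1)$ together with one extra marker $\star$: set $D_q:=\{f:f(q)=0\}$ (choosing a canonical such $q$ for each $f$ having some dyadic zero) and let $D_\star$ collect those $f$ at which no dyadic rational is a zero. On $D_q$, given advice $q$, the algorithm simply returns $q\in\Intermed(f)$. On $D_\star$ every dyadic $q$ satisfies $f(q)\neq 0$, so, as both $f(q)>0$ and $f(q)<0$ are semi-decidable and mutually exclusive, $\operatorname{sign}f(q)$ is computable; then standard bisection starting from the bracket $f(0)=-1<0<+1=f(1)$ computably refines a zero. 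For the matching lower bound $\mycard(\Intermed)\geq\omega$, I would exhibit, for every fixed $d\in\IN$, a witness of $d$-discontinuity and invoke Lemma~\ref{l:Flag2}b). Take as root the piecewise-linear $f^{()}$ that is $-1$ at $0$, identically $0$ on $[1/4,3/4]$ and $+1$ at $1$, so $\Intermed(f^{()})=[1/4,3/4]$. Inductively, given $x_{\bar n}$ with $\Intermed(x_{\bar n})=J_{\bar n}\subseteq[1/4,3/4]$ a small closed interval, enumerate a sequence $(q_{\bar n,m})_m$ of rationals dense in $J_{\bar n}$ and define $x_{\bar n,m}$ by adding to $x_{\bar n}$ a perturbation of amplitude at most $1/m$, of the form $\psi_{\bar n,m}(x)\cdot x(1-x)$ (to preserve the boundary values $f(0)=-1$ and $f(1)=+1$), where $\psi_{\bar n,m}$ vanishes precisely on a tiny subinterval $J_{\bar n,m}\subseteq J_{\bar n}$ around $q_{\bar n,m}$ and is strictly positive on $J_{\bar n}\setminus J_{\bar n,m}$. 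Sufficiently fast (e.g., doubly geometric) decay of the amplitudes across levels yields the uniformity $\lim_m x_{\bar n,m,\ldots,m}=x_{\bar n}$, and density of each $(q_{\bar n,m})_m$ ensures that for every $y\in J_{\bar n}$ a sufficiently small $V\ni y$ is disjoint from $\Intermed(x_{\bar n,m,\ldots,m})=J_{\bar n,m,\ldots,m}$ for infinitely many $m$.

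For the $\psiG{}$-topological claim in part~(b), I would run a structurally analogous flag construction directly on intervals. Starting from $[0,1]$, alternate between the shrinking brackets $[0,1/n]$ and $[1-1/n,1]$: their $\psiG{}$-names admit a Cantor-convergent arrangement towards a $\psiG{}$-name of $[0,1]$ since the upper Kuratowski limits $\{0\},\{1\}$ are contained in $[0,1]$, and every $y\in[0,1]$ is missed by at least one of the two alternatives infinitely often, providing a level-$1$ witness of discontinuity for $\Select$. Iterating within each branch (alternating between shrinking towards the two endpoints of the current interval) produces uniform $d$-dimensional flags. A direct prefix-commitment argument modelled on the proof of Lemma~\ref{l:Flag2}b) then completes the lower bound: any finite prefix of a $\psiG{}$-name enumerates only finitely many rational exclusion-balls, so a putative realizer must commit its output-approximation before learning which alternating branch the interval ultimately settles into, and the alternation can always be resolved so that the committed rational approximation lies outside the corresponding shrinking bracket.

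The principal difficulty in both parts is coupling the uniform-flag condition $\lim_m x_{\bar n,m,\ldots,m}=x_{\bar n}$ with the witness-of-discontinuity condition at every level: the amplitude (respectively shrinkage) and rational-location schedules must be chosen so that simultaneously all diagonal collapses converge to the parent \emph{and} display enough oscillation across the enumerated rationals to make Lemma~\ref{l:Flag2}b) applicable at that level. A secondary technicality in part~(b) is that Lemma~\ref{l:Flag2} is formally stated for Cauchy representations of effective metric spaces whereas $\psiG{}$ is admissible for the (non-metric) upper Fell topology; one therefore re-runs the prefix argument directly, using admissibility of $\psiG{}$ in place of the Cauchy distance. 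Finally, invoking Proposition~\ref{p:Kolmogorov} together with Remark~\ref{r:Liouville}e) yields the quantitative reading advertised in the introductory discussion: a $d$-wise computable $\Select$ would force $\Kolu\big(x\,|\,\text{$\psiG{}$-name of }[a,b]\big)$ to be uniformly bounded in the input, whereas for rational intervals $[a_N,b_N]$ all of whose rationals have Kolmogorov complexity $\geq N$ this bound must eventually be violated.
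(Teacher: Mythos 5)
Your overall strategy matches the paper's, but there are several local differences and one genuine wrinkle worth flagging.

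\textbf{Part a), upper bound.} You partition $\dom(\Intermed)$ by ``has a dyadic rational zero (advice: which one)'' versus ``has no dyadic rational zero (bisection works since $\operatorname{sign} f(q)$ is then computable at every dyadic $q$)''. The paper instead appeals to the dichotomy ``interval of zeros versus isolated root'' and cites \mycite{Theorem~6.3.7}{Weihrauch} for the latter. Your version is arguably cleaner: it sidesteps the case where $f^{-1}[0]$ is neither an interval nor has isolated points (a nowhere-dense perfect set), which the paper's phrasing glosses over, while still yielding the same conclusion $\mycomp\leq\omega$.

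\textbf{Part a), lower bound.} Both you and the paper build a $d$-flag of perturbations of a piecewise-linear function whose zero set is an interval. The paper perturbs towards only two fixed disjoint subintervals $A,B$ per level and then relies on Lemma~\ref{l:Simpler}d); you instead let the perturbation locations $q_{\bar n,m}$ be dense in $J_{\bar n}$ so that each single sequence $(x_{\bar n,m})_m$ is already a witness of discontinuity in the sense of Definition~\ref{d:Flag2}a), allowing you to invoke Lemma~\ref{l:Flag2}b) directly without passing through Lemma~\ref{l:Simpler}. That is a valid (and in fact more self-contained) route. However, your specific formula has a gap: with $\psi_{\bar n,m}$ vanishing precisely on $J_{\bar n,m}$ it is nonzero outside $J_{\bar n}$; adding $\psi_{\bar n,m}(t)\cdot t(1-t)$ there to $x_{\bar n}$, which tends to $0^{\pm}$ at $\partial J_{\bar n}$, can create spurious zeros just outside $J_{\bar n}$, so that $\Intermed(x_{\bar n,m})\supsetneq J_{\bar n,m}$. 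The perturbation must be supported in $J_{\bar n}$ (or made piecewise linear as in Remark~\ref{r:Intermed}a)) for the zero set to be exactly $J_{\bar n,m}$; as written the claim $\Intermed(x_{\bar n,m})=J_{\bar n,m}$ does not hold.

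\textbf{Part b).} You correctly note that Lemma~\ref{l:Flag2} does not apply verbatim (since $\psiG{}$ is not a Cauchy representation of a metric space) and propose rerunning the prefix-commitment argument on explicit $\psiG{}$-names, which is exactly the paper's move via Remark~\ref{r:Intermed}c); you choose intervals alternately shrinking towards the two endpoints of the current bracket, while the paper uses nested thirds---either works. The closing paragraph about Proposition~\ref{p:Kolmogorov} and Remark~\ref{r:Liouville}e) is tangential: (i) Theorem~\ref{t:Roux}b) asserts non-\emph{continuity}, which a Kolmogorov-complexity argument cannot establish; and (ii) Remark~\ref{r:Liouville}e) only bounds the complexity of \emph{rational} points in $[a_N,b_N]$, so a putative realizer could still return a low-complexity irrational or non-computable point, and your sketch gives no reason it cannot. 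The paper derives its Kolmogorov-style consequence \emph{from} Theorem~\ref{t:Roux}b), not the other way round.
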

Discontinuity of $\Intermed$ is well-known due to, and to occur for, 
arguments $f$ which `hover' \mycite{Theorem~6.3.2}{Weihrauch}. 
We iterate this property
to obtain a witness of $d$-discontinuity for
arbitrary $d\in\IN$:

\begin{figure}[htb]
\centerline{\includegraphics[width=0.99\textwidth]{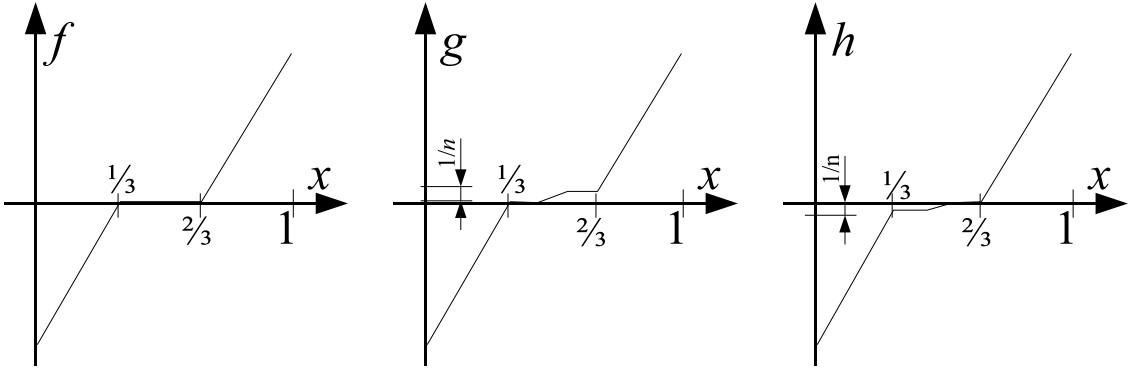}}%
\caption{\label{f:intermed}Witness of discontinuity
for the effective Intermediate Value Theorem.}
\end{figure}

\begin{myremark} \label{r:Intermed}
\begin{enumerate}
\item[a)]
Consider the piecewise linear, continuous function $f:[0,1]\to[-1,+1]$,
\[ f(x):=3x-1
\text{ for } x\in[0,1/3], \quad f:\equiv 0 \text{ on } [1/3,2/3],
\quad f(x):=3x-2 \text{ for } x\in[2/3,1] \enspace . \]
Then $A:=[1/3,5/12]$ and $B:=[7/12,2/3]$ lie in $f^{-1}[0]=[1/3,2/3]=:I$.
Moreover to $n>0$ there are piecewise linear continuous functions
$g,h\in\dom(\Intermed)$ with $g^{-1}[0]=A$ and $h^{-1}[0]=B$
and $\|f-g\|_\infty,\|f-h\|_\infty<1/n$;
cf. Figure~\ref{f:intermed}.\\
In particular, $\ball\big(\Intermed(g),1/27\big)\cap\ball\big(\Intermed(h),1/27\big)=\emptyset$.
\item[b)]
Let $f^{()}:=f$, $I^{()}:=I$,
$f^{(0)}_n:=g$, $I^{(0)}:=A$,
$f^{(1)}_n:=h$, $I^{(1)}:=B$. \\
Iterating the above construction we obtain,
to every $d\in\IN$ and $(i_1,\ldots,i_d)\in\{0,1\}^d$,
closed intervals $I^{(i_1,\ldots,i_d)}$
{\setlength{\parskip}{-9pt}\begin{enumerate}
\item[i)] of length $3^{-d-1}$
\item[ii)] with $I^{(i_1,\ldots,i_{d-1},i_d)}\subseteq I^{(i_1,\ldots,i_{d-1})}$
\item[iii)] such that $\ball\big(I^{(i_1,\ldots,i_{d-1},0)},3^{-d-2}\big)\cap
\ball\big(I^{(i_1,\ldots,i_{d-1},1)},3^{-d-2}\big)=\emptyset$.
\end{enumerate}}
and sequences of functions $f^{(i_1,\ldots,i_d)}_{n_1,\ldots,n_d}\in\dom(\Intermed)$
with 
{\setlength{\parskip}{-9pt}\begin{enumerate}
\item[iv)] $f^{(i_1,\ldots,i_k)}_{n_1,\ldots,n_k}=
\ulim_m f^{(i_1,\ldots,i_k,\ldots,i_d)}_{n_1,\ldots,n_k,m,\ldots,m}$
\item[v)] and $\big(f^{(i_1,\ldots,i_d)}_{n_1,\ldots,n_d}\big)^{-1}[0]=I^{(i_1,\ldots,i_d)}$
\end{enumerate}}
where $\phi=\ulim_m \phi_m$ means 
$\lim_{m\to\infty} \|\phi-\phi_n\|_\infty\to 0$.
\item[c)]
Let $A\subseteq B\subseteq\IR^d$
and $f:\IR^d\to\IR$ continuous
with $f(\vec x)\leq d_B(\vec x)$
for all $\vec x\in\IR^d$.
Then there exists a sequence of continuous
functions $g_\ell:\IR^d\to\IR$ with
$g_1=f$ and $d_A(\vec x)=\sup_\ell g_\ell(\vec x)$: \\
namely $g_\ell:=f\cdot 1/\ell + (1-1/\ell)\cdot d_A$.
\end{enumerate}
\end{myremark}

\begin{proof}[Theorem~\ref{t:Roux}]
\begin{enumerate}
\item[a)]
As has been frequently exploited before
\mycite{Section~6.3}{Weihrauch},
$f\in\dom(\Intermed)$ has an entire
interval of zeros or has some isolated root.
In the latter case, such a root can be found
according to \mycite{Theorem~6.3.7}{Weihrauch}.
In the former case, that interval 
contains a rational one---which
can be provided explicitly by its 
numerator and denominator 
as (unbounded) discrete advice.
We thus have shown $\mycomp(\Intermed)\leq\omega$.\\
Conversely, Remark~\ref{r:Intermed}b) implies in
connection with Lemma~\ref{l:Simpler}d) that
$\Intermed$ is not $d$-continuous
for any $d\in\IN$.
\item[b)]
Consider the 
intervals $I^{(i_1,\ldots,i_d)}$ from Remark~\ref{r:Intermed}b).
However Lemma~\ref{l:Simpler}d) does not apply directly
since the space of closed (non-degenerate) sub-intervals of $[0,1]$
is, equipped with representation $\psiG{}$ rather than with $\psi$
\mycite{Theorem~5.2.9}{Weihrauch}, not metric.
Instead, we resort to 
Lemma~\ref{l:Flag2} as follows: 
Suppose $F$ is a $(\psiG{},\myrho)$--realizer of $\Select$
and recall that a $\psiG{}$--name of closed $\emptyset\not=B\subseteq[0,1]$
is a sequence of continuous functions $g_\ell:[0,1]\to\IR$
with $d_B(x)=\sup_\ell g_\ell(x)$.
By Remark~\ref{r:Intermed}c), any (initial segment of)
such a name can be slightly perturbed to one of 
$A\subseteq B$.
\\
So start with such a sequence $g^{()}=(g_\ell)_{_\ell}$
for $I^{()}$; then take
a sequence of sequences $g^{(0)}_{n,\ell}$ 
with $g^{(0)}_{n,\ell}:=g_\ell$ for $\ell\leq n$
and $\sup_\ell g^{(0)}_{n,\ell}(x)=d_{I^{(0)}}(x)$
according to Remark~\ref{r:Intermed}c);
that is, $(g^{(0)}_{n,\ell})_{_{\ell}}$ is a 
$\psiG{}$--name of $I^{(0)}\subseteq I^{()}$
`resembling' a $\psiG{}$--name $I^{()}$ for $\ell\leq n$.
In particular, $g^{()}_\ell=\ulim_n g^{(0)}_{n,\ell}$
uniformly in $x$ and $\ell$.
Similarly take $g^{(1)}_{n,\ell}$ 
corresponding to $I^{(1)}$
initially resembling $I^{()}$.
Because of Remark~\ref{r:Intermed}b\,iii),
Lemma~\ref{l:Simpler}b) 
yields a witness of discontinuity for $F$.
\\
Now iterate this construction with the
intervals $I^{(n_1,\ldots,n_d)}$ from
Remark~\ref{r:Intermed}b) to obtain
a witness of $d$-discontinuity
according to Lemma~\ref{l:Simpler}c).
\qed\end{enumerate}\end{proof}

%%%%%%%%%%%%%%%%%%%%%%%%%%%%%%%%%%%%%%%%%%%%%%%%%%%%%%%%%%%%%%%%%%%%%%
\section{Conclusion, Extensions, and Perspectives} \label{s:Conclusion}
We claim that a major source of criticism against
Recursive Analysis misses the point:
although computable functions $f$ \emph{are} necessarily continuous
when given approximations to the argument $x$ \emph{only},
most practical $f$'s do become computable
when providing in addition some discrete information about $x$.
Such `advice' usually consists of some very natural 
and mathematically explicit integer value
from a bounded range (e.g. the rank of the matrix under consideration)
and is readily available in practical applications.

We have then turned this observation into a complexity theory,
investigating the minimum \emph{size} (=cardinal) of the range 
this discrete information comes from.
And we have determined this quantity
for several simple and natural problems from linear algebra: 
calculating the rank of a given matrix,
solving a system of linear equalities,
diagonalizing a symmetric matrix, 
and finding some eigenvector to a given symmetric matrix. 
The latter three are inherently multivalued.
And they exhibit a considerable difference in complexity:
for input matrices of format $n\times n$,
usually discrete advice of order $\Theta(n)$ 
is necessary and sufficient; 
whereas some single eigenvector
can be found using only $\Theta(\log n)$--fold advice:
specifically, the quantity 
$\big\lfloor\log_2\min\big\{\dim\kernel(A-\lambda\id):\lambda\in\sigma(A)\big\}\big\rfloor$.
The algorithm exploits this data based on some 
combinatorial considerations---which nicely complement
the heavily analytical and topological arguments
usually dominant in proofs in Recursive Analysis.

\medskip
Our lower bound proofs assert $d$-discontinuity
of the function under consideration. They can
be extended (yet become even more tedious when trying
to do so formally) to \emph{weak} $d$-discontinuity.
Also the major tool for such proofs, namely that of
witnesses of $d$-discontinuity, would deserve
generalizing from effective metric to computable
topological spaces.

%%%%%%%%%%%%%%%%%%%%%%%%%%%%
\subsection{Non-Integral Advice} \label{s:Fractional}
Theorem~\ref{t:LinEq} shows that $d$-fold advice does not
suffice for effectively finding a nontrivial solution $\vec x$ to
a homogeneous equation $A\cdot\vec x=0$; whereas $(d+1)$-fold
advice, namely providing $\rank(A)\in\{0,\ldots,d\}$, does suffice.
\begin{itemize}
\item
Since the rank can be effectively approximated from below 
(i.e. is $\myrhol$-computable) \mycite{Theorem~7}{LA},
it in fact suffices to provide complementing upper approximations 
(i.e. a $\myrhog$--name) to $\rank(A)$.
One may say that this constitutes strictly 
less than $(d+1)$-fold information.
\item
Similarly concerning diagonalization of a real symmetric
$n\times n$-matrix $A$,
since the number $\Card\sigma(A)$ 
of distinct eigenvalues can be effectively
approximated from below, it suffices to provide only
complementing upper approximations---cmp. 
\mycite{Theorem~19}{LA}---which may be regarded
as strictly less than $n$-fold advice.
\item
Similarly, with respect to the problem of finding \emph{some}
eigenvector of $A$, again strictly less than
$(\lfloor\log_2 n\rfloor+1)$-fold advice suffices:
namely lower approximations to $\lfloor\log_2 m(A)\rfloor$
(with $m(A)$ from Theorem~\ref{t:EV1})
based on the following
\end{itemize}

\begin{observation} \label{o:MinDim}
The mapping
$\IR^{\binom{n}{2}}\ni A\mapsto\lfloor\log_2 m(A)\rfloor$
is $(\myrho^{n\cdot(n-1)/2},\myrhog)$--computable.
\end{observation}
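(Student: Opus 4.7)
The plan is to produce, in a dovetailed fashion, a non-increasing sequence of rationals converging to the integer $\lfloor\log_2 m(A)\rfloor$---i.e. a $\myrhog$--name of it---by enumerating upper bounds on $m(A)$. First I invoke \mycite{Proposition~17}{LA} to compute, from a $\myrho^{n(n-1)/2}$--name of $A$, a $\myrho^n$--name of an $n$-tuple $(\lambda_1,\ldots,\lambda_n)$ of eigenvalues of $A$ listed with multiplicities. Since $A$ is real symmetric, algebraic and geometric multiplicities coincide, so $m(A)$ equals the minimum cardinality of a block of the partition $\calI$ of $[n]$ given by $i\equiv j\iff\lambda_i=\lambda_j$---exactly the setting of Lemma~\ref{l:Distinct}.

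Because real \emph{in}equality is $\myrho^2$--r.e. open, for every $I\subseteq[n]$ Condition~(\ref{e:Distinct}) is the conjunction of finitely many semi-decidable predicates and hence itself semi-decidable given the eigenvalue tuple. The algorithm initializes its output to $\lfloor\log_2 n\rfloor$ and dovetails, for each $s\in\{1,\ldots,n\}$ and each $I\subseteq[n]$ with $|I|=s$, the semi-decision of (\ref{e:Distinct}); whenever a verification succeeds, it replaces its current output by the running minimum of $\lfloor\log_2 s\rfloor$ over all confirmed instances.

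For soundness note that if (\ref{e:Distinct}) holds for some $I$, then $I$ is a union of blocks of $\calI$ (for if some $j\in[n]\setminus I$ were equivalent to some $i\in I$, then $\lambda_i=\lambda_j$ would violate (\ref{e:Distinct})); hence some $J\in\calI$ is contained in $I$, and by definition of $m$ one has $m(A)\leq|J|\leq|I|=s$, whence $\lfloor\log_2 m(A)\rfloor\leq\lfloor\log_2 s\rfloor$. Completeness is immediate: the minimum block $I_0\in\calI$ with $|I_0|=m(A)$ trivially satisfies (\ref{e:Distinct}), the dovetail eventually confirms this instance, and the output drops to the exact value $\lfloor\log_2 m(A)\rfloor$.

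The only mild subtlety is that we can semi-decide ``$|I|\geq m(A)$'' (via (\ref{e:Distinct})) but not ``$|I|\leq m(A)$''; yet this one-sided access is precisely what the $\myrhog$ semantics permits, so no genuine obstacle remains. The combinatorial and analytic substance has already been supplied by Lemma~\ref{l:Distinct} and \mycite{Proposition~17}{LA}, and the remainder is routine dovetailing. \qed
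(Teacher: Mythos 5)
Your proof is correct but takes a genuinely different route from the paper's. The paper observes that $\dim\kernel(A-\lambda\id)=n-\rank(A-\lambda\id)$ is $\myrhog$-computable for each eigenvalue $\lambda$ because $\rank$ is $\myrhol$-computable \cite[\textsc{Theorem~7}i]{LA}; it then takes the minimum over the finitely many eigenvalues (which preserves upper-computability) to obtain a $\myrhog$-name of $m(A)$, and finally applies the nondecreasing, purely integral map $\IN\ni m\mapsto\lfloor\log_2 m\rfloor\in\IN_0$, which is $(\myrhog,\myrhog)$-computable. You instead re-encode $m(A)$ combinatorially as the minimum block size of the equality partition of the eigenvalue multiset from \mycite{Proposition~17}{LA} (using that for symmetric $A$ algebraic and geometric multiplicities coincide), then dovetail the semi-decidable test (\ref{e:Distinct}) over all $I\subseteq[n]$, outputting the running minimum of $\lfloor\log_2|I|\rfloor$. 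Your soundness argument is exactly right --- a confirmed $I$ is a union of blocks, so $|I|\geq m(A)$ --- and notably you only need this easy observation, not the sharper conclusion $I\in\calI$ of Lemma~\ref{l:Distinct}a (which requires $|I|<2m$). In effect you trade the paper's analytic ingredient (lower semi-computability of rank) for multiplicity combinatorics and a pure dovetail, and the monotonicity of $\lfloor\log_2(\cdot)\rfloor$ enters only implicitly through taking a running minimum rather than as a stand-alone computability claim about the floor-log function.
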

\begin{proof}
Given $\lambda$,
$\dim\kernel(A-\lambda\id)=n-\rank(A-\lambda\id)$
is $\myrhog$--computable by \cite[\textsc{Theorem~7}i]{LA};
hence so is its minimum $m(A)$ over all $\lambda\in\sigma(A)$,
cmp. \mycite{Proposition~17}{LA} and \mycite{Exercise~4.2.11}{Weihrauch}.
Now although $\log_2:(0,\infty)\ni x\mapsto\ln(x)/\ln(2)\in\IR$
is $(\myrhog,\myrhog)$--computable by monotonicity,
$x\mapsto\lfloor x\rfloor$ is of course not.
On the other hand, $m(A)$ attains only integer values;
and the nondecreasing, purely integral function 
$\IN\ni m\mapsto\lfloor\log_2 m\rfloor\in\IN_0$
is $(\myrhog,\myrhog)$--computable.
\qed\end{proof}
The above examples suggest refining $k$-fold 
advice to non-integral values of $k$:

\begin{definition} \label{d:Fractional}
Let $f:X\to Y$ be a function and $Z$ a topological $\Tnull$ space.
\begin{enumerate}
\item[a)]
Call $f$ \emph{continuous with $Z$-advice} if 
there exists a function $g:X\to Z$ such that
the function $f|^g$, defined as follows, is continuous:
\begin{equation} \label{e:Sierpinski}
\dom(f|^g) \;:=\; \{ (x,z) : x\in X, g(x)=z \}
% \;=\; \biguplus_{z\in Z} \big(g^{-1}(z)\times\{z\}\big)
\;\subseteq\; X\times Z, \qquad
(x,z)\mapsto f(x) \enspace . 
\end{equation}
\item[b)]
Let $Z$ be finite and 
fix some injective notation $\nu_Z:\subseteq\Sigma^*\to\tau$ of 
the (finitely many) open subsets of $Z$. Then the
representation $\delta_{Z}=\delta_{Z,\nu_Z}:\subseteq\Sigma^\omega\to Z$
is defined as follows: \\
$\bar\sigma\in\Sigma^\omega$ is a $\delta_{Z,\nu}$-name of $z$
~iff~ it is a $\nu$-enumeration (with arbitrary repetition)
of all open sets containing $Z$.
\item[c)]
For effective metric spaces $(X,\alpha)$ and $(Y,\beta)$,
call $f$ \emph{$(\alpha,\beta)$-computable with $Z$-advice} if
there exists some $g:X\to Z$ such that the function
$f|^g:\subseteq X\times Z\to Y$ from a) is
$(\alpha\times\delta_Z,\beta)$-computable.
\end{enumerate}
\end{definition}
Restricting $Z$ to discrete spaces, one
recovers Definition~\ref{d:Nonunif}a):

\begin{lemma} \label{l:Fractional}
For $d\in\IN$ let 
$Z_d$ denote the set $\{0,1,\ldots,d-1\}$
equipped with the discrete topology.
\begin{enumerate}
\item[a)] A function $f:X\to Y$ is $d$-continuous 
~iff~ it is continuous with $Z_d$-advice.
\item[b)] The representation $\delta_{Z_d}$ of $Z_d$ 
  is computably equivalent to $\nu_{Z_d}$: $\delta_{Z_d}\equiv\nu_{Z_d}$.
  Whereas in general, $\nu_Z$ is only computably
  reducible to (but not from) $\delta_{Z}$: $\nu_Z\reducneq\delta_Z$.
\item[c)] A function $f:X\to Y$ is $(\alpha,\beta)$-computable
  with $d$-wise advice ~iff~ it is $(\alpha,\beta)$-computable
  with $Z_d$ advice.
\item[d)] 
$(Z,\delta_Z)$ is admissible. In particular
if $(X,\alpha)$ and $(Y,\beta)$ are admissible
and it function  $f:X\to Y$ is
$(\alpha,\beta)$-computable with $Z$-advice,
then $f$ is continuous with $Z$-advice.
\end{enumerate}
\end{lemma}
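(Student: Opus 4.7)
\medskip\noindent\textbf{Proof plan for Lemma~\ref{l:Fractional}.}
The plan is to address the four items in order, using parts (a)--(c) essentially as unwrapping exercises and then invoking standard admissibility machinery for (d).

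For (a), given a $d$-partition $\Delta=\{D_0,\dots,D_{d-1}\}$ of $\dom(f)$ with $f|_{D_i}$ continuous, I would set $g(x):=i$ whenever $x\in D_i$ and verify that $f|^g$ is continuous on $\dom(f|^g)\subseteq X\times Z_d$. The key point is that $Z_d$ is discrete, so each slice $D_i\times\{i\}$ is clopen in $\dom(f|^g)$ and the restriction of $f|^g$ to it agrees with $f|_{D_i}$; continuity on each clopen piece of a partition of a space yields continuity on the whole. Conversely, if $f|^g$ is continuous for some $g:X\to Z_d$, taking $D_i:=g^{-1}(i)$ (omitting empty pieces and, if needed, splitting one nonempty piece to hit cardinality exactly $d$) recovers a $d$-continuous partition because the topology on $D_i$ is homeomorphic via $x\mapsto(x,i)$ to the subspace $D_i\times\{i\}$.

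For (b), I would prove $\nu_Z\reduceq\delta_Z$ uniformly for any finite $T_0$-space $Z$ by the trivial procedure: given $z$ via $\nu_Z$, enumerate the finitely many $\nu_Z$-codes of open sets $U$ and output those with $z\in U$ (decidable since $Z$ is finite and each open is encoded explicitly). For the discrete case $Z=Z_d$ the reverse reduction $\delta_{Z_d}\reduceq\nu_{Z_d}$ works because every singleton $\{z\}$ is open; reading the $\delta_{Z_d}$-enumeration and waiting until some singleton appears terminates and uniquely identifies $z$. For the negative half, I would exhibit the Sierpi\'nski space $\Sierp=\{\bot,\top\}$ with opens $\{\emptyset,\{\top\},\Sierp\}$: a $\delta_\Sierp$-name of $\bot$ lists only $\Sierp$, whereas a $\delta_\Sierp$-name of $\top$ eventually lists $\{\top\}$; no algorithm reading finitely many tokens of a $\delta_\Sierp$-name can commit to the value $\bot$, so $\delta_\Sierp\notreduceq\nu_\Sierp$.

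Part (c) follows the template of (a) but in the computable category. Given a $d$-wise $(\alpha,\beta)$-computable $f$ with partition $\Delta$ and notation $\delta:\subseteq\Sigma^*\to\Delta$, I would encode $\Delta$ as $Z_d$ and use (b) to pass between $\delta$ (essentially $\nu_{Z_d}$) and $\delta_{Z_d}$ computably; the $(\alpha,\delta,\beta)$-computable mapping $f_\Delta$ then yields $(\alpha\times\delta_{Z_d},\beta)$-computability of $f|^g$ for $g(x):=$ the unique $D\in\Delta$ with $x\in D$. The converse composes the $(\alpha\times\delta_{Z_d},\beta)$-algorithm for $f|^g$ with the computable reduction $\nu_{Z_d}\reduceq\delta_{Z_d}$ from (b).

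For (d), admissibility of $(Z,\delta_Z)$ is the main technical step but is standard: $\delta_Z$ is (up to notation) the canonical representation of a $\Tnull$-space by enumerations of its open neighbourhood filter, which is admissible with respect to the given topology in the sense of \mycite{Definition~3.2.7}{Weihrauch}. Concretely I would verify that $\delta_Z$ is continuous (trivial, as the finite set of opens is decidable in $\nu_Z$) and that any other continuous representation $\delta'$ of $Z$ is continuously reducible to $\delta_Z$: given a $\delta'$-name of $z$, by continuity of $\delta'$ one semi-decides ``$z\in U$'' for every open $U\subseteq Z$ (the preimage ${\delta'}^{-1}[U]$ is open in $\dom(\delta')$), and enumerating the successful $U$'s produces a $\delta_Z$-name. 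Product admissibility $\alpha\times\delta_Z$ is then inherited from admissibility of each factor \mycite{Lemma~3.3.14}{Weihrauch}, and Observation~\ref{o:MainThm2} applied to $f|^g$ gives continuity of $f|^g$, which is exactly continuity of $f$ with $Z$-advice. I expect the admissibility verification in (d) to be the most delicate step, since it is the only one that requires invoking external structural results rather than a direct combinatorial check.
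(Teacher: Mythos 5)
Your proposal follows essentially the same route as the paper: item~(a) via the observation that each $D_i\times\{i\}$ is open in the product (using discreteness of $Z_d$) and its converse; item~(b) via the a-priori-bounded listing of opens for $\nu_Z\reduceq\delta_Z$, the ``wait for a singleton'' argument for $\delta_{Z_d}\reduceq\nu_{Z_d}$, and the Sierpi\'nski space for the strictness; item~(c) by routing through~(b); and item~(d) by recognizing $\delta_Z$ as the standard admissible representation of a finite $\Tnull$-space and invoking the Main Theorem (the paper cites {\rm\cite[Corollary~3.2.12]{Weihrauch}} where you invoke Observation~\ref{o:MainThm2} plus product admissibility, but these are the same machinery). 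The only cosmetic addition is your explicit handling of possibly empty fibers $g^{-1}(i)$ in~(a), which the paper silently glosses over.
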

\begin{proof}
\begin{enumerate}
\item[a)]
Let $\Delta=\{D_0,D_1,\ldots,D_{d-1}\}$ denote an $d$-element partition of $X$
such that $f|_{D_z}$ is continuous for each $z=0,1,\ldots,d-1$.
Define $g:X\ni x\mapsto$the unique $z\in Z_d$ with $x\in D_z$.
Then, for open $V\subseteq Y$, 
\begin{equation} \label{e:Fractional}
(f|^g)^{-1}[V] \quad=\quad 
\biguplus_{z\in Z} \big(f^{-1}[V]\cap \underbrace{g^{-1}[\{z\}]}_{=D_z}\big)\times\{z\}
\quad=\quad \biguplus_{z\in Z} \underbrace{(f|_{D_z})^{-1}[V]}_{\text{open in }D_z}
\times\{z\}
\end{equation}
is relatively open 
in $\biguplus_{z\in Z} D_z\times\{z\}=\dom(f|^g)$, i.e. $f|^g$ continuous.
\\
Conversely let $f|^g$ be continuous for $g:X\to Z_d$.
Define $\Delta:=\{D_0,D_1,\ldots,D_{d-1}\}$ where $D_z:=g^{-1}[\{z\}]$ for $z\in Z_d$.
Then Equation~(\ref{e:Fractional}) requires that
$\bigcup_z \big(f^{-1}[V]\cap D_z\big)\times\{z\}$ 
be open in $\bigcup_z D_z\times\{z\}$.
Now $D_z\times\{z\}$ is open by definition of the product
topology and because $z\in Z_d$ is discrete,
this implies that also the intersection 
$(f|^g)^{-1}[V]\cap (D_z\times\{z\})=(f^{-1}[V]\cap D_z)\times\{z\}$
be open in $D_z\times\{z\}$, i.e. that $f|_{D_z}^{-1}[V]$ is open in $D_z$;
hence $f|_{D_z}$ is continuous for each $z\in Z_k$.
\item[b)]
Since $Z$ is finite and $\nu_Z:\subseteq\Sigma^*\to Z$ is injective, 
everything is bounded a priori. For instance, given a $\nu_Z$-name
of $z\in Z$, one can easily produce a pre-stored list of all 
(finitely many) open sets containing this $z$:
thus showing $\nu_Z\reduceq\delta_Z$.
\\
For the converse, exploit that $Z_d$ bears the discrete topology
and therefore is effectively \Tone \cite{EffT}: 
Given an enumeration of all (finite) open sets $U_i$ 
containing $z\in Z_d$, their intersection $\bigcap_i U_i$
becomes a singleton after finite time, thereby identifying $z$.
\\
In the \Sierpinski space $\Sierp$ from Example~\ref{x:Sierpinski}a) below,
(some $\delta_{\Sierp}$-names of)
$1=\bot$ cannot continuously be distinguished from 
(a $\delta_{\Sierp}$-name of) $0=\top$.
\item[c)]
Suppose that $f|_{D_z}$ be $(\alpha,\beta)$-computable 
for each $z\in Z_d$. Since $Z_d$ is finite, it follows
that $f|^g:(x,z)\mapsto f|_{D_z}(x)$ is 
$(\alpha\times\nu_{Z_d},\beta)$-computable
and hence $(\alpha\times\delta_{Z_d},\beta)$-computable by b).
\\
Conversely let $f|^g$ be $(\alpha\times\delta_{Z_d},\beta)$-computable.
Then, similarly to a) and since each $z\in Z_d$ 
is $\delta_{Z_d}$-computable, 
it follows that also the all restrictions
$f|_{D_z}$ be $(\alpha,\beta)$-computable for $z\in Z_d$
where $D_z=g^{-1}[\{z\}]$. 
\item[d)]
Observe that $\delta_Z$ coincides with the \emph{standard representation}
of the (finite, hence effective) $\Tnull$-space $Z$; compare
\mycite{Section~3.2}{Weihrauch}. 
Concerning the second claim, 
\mycite{Corollary~3.2.12}{Weihrauch} reveals that
$f|^g$ is continuous.
\qed\end{enumerate}\end{proof}
In this sense, Example~\ref{x:Staircase} turns out
to suffice with even strictly less\footnote{We
refrain from defining reducibility between general $\Tnull$-spaces
but refer to Example~\ref{x:Sierpinski}d) for the specific spaces
$Z_d$ and $\Sierp_d$, $d\in\IN$.} than 2-fold advice:

\begin{myexample} \label{x:Sierpinski}
\begin{enumerate}
\item[a)]
Consider as $Z$ the \Sierpinski space
$\Sierp$, i.e. the set $\{0,1\}$ equipped with the 
topology $\big\{\emptyset,\{0\},\{0,1\}\big\}$ as
open sets.
Then the characteristic function of the 
\emph{complement} of the Halting problem
$\cf{\IN\setminus H}:\IN\to\Sierp$ is $(\nu,\delta_{\Sierp}$)-computable,
but $\cf{H}$ itself is not.
\item[b)]
The Gau\ss{} Staircase function 
$f:=\lfloor\,\cdot\,\rfloor:\IR\to\IZ$
is $(\myrho,\myrho)$-computable with $\Sierp$-advice.
\item[c)]
Generalizing $\Sierp=:\Sierp_1$, 
denote by $\Sierp_d$ the set $\{0,1,\ldots,d\}$ equipped with the 
following topology:
$\big\{\emptyset,\{0\},\{0,1\},\{0,1,2\},\ldots,\{0,1,2,\ldots,d\}\big\}$.
Then $\rank:\IR^{d\times d}\to\{0,1,\ldots,d\}$ is 
$(\myrho^{d\times d},\myrho)$-computable with $\Sierp_d$-advice.
\item[d)]
For each function $f$ and integer $d$,
continuity/computability with $\Sierp_{d-1}$-advice
implies continuity/computability with $\Sierp_{d}$-advice,
implies continuity/computability with $Z_{d+1}$-advice,
implies continuity/computability with $Z_{d+2}$-advice.
\\
On the other hand, the Dirichlet Function
$\cf{\IQ}:[0,1]\subseteq\IR\to\{0,1\}$
is computable with $Z_2$-advice
but not continuous with $\Sierp_d$-advice
for any $d\in\IN$.
\end{enumerate}
\end{myexample}
In view of \mycite{Theorem~11}{LA},
Example~\ref{x:Sierpinski}c) shows that $\Sierp_{d}$-advice renders
also $\LinEq_{n,m}$ computable for $d:=\min(n,m-1)$.
\begin{proof}[Example~\ref{x:Sierpinski}]
\begin{enumerate}
\item[a)]
Simulate the given Turing machine $M$, and for each step
append ``$\{0,1\}$'' to the $\delta_{\Sierp}$-name of 
$1=\cf{\IN\setminus H}(\langle M\rangle)$ to output in case
that $M$ does not terminate; whereas if and when $M$ does
turn out to terminate, start appending ``$\{0\}$'' to the output,
thus indeed producing a $\delta_{\Sierp}$-name of $0$.
\\
Since any $\delta_{\Sierp}$-name of $0$ must include the set
$\{0\}$ in its enumeration, one can distinguish it in finite
time from a $\delta_{\Sierp}$-name of $1$. 
$\delta_{\Sierp}$-computing $\cf{H}(\langle M\rangle)=0$ would
thus amount to detecting the non-termination of $M$,
contradicting that $H$ is not co-r.e.
\item[b)]
Intuitively, ``$x\not\in\IZ$'' is semi-decidable;
hence suffices to provide only half-sided advice 
for the case ``$x\in\IZ$''. Formally,
define $g:\IR\to\Sierp$ as the characteristic function of $\IR\setminus\IZ$,
i.e., $g:\IZ\ni x\mapsto 0$ and 
$g:\IR\setminus\IZ\ni x\mapsto1$.
Observe $\dom(f|^g)=\big(\IZ\times\{0\}\big)\uplus
\big(\IR\setminus\IZ\times\{1\}\big)$.
Hence, for $y\in\IN$, it is
$\big(f|^g\big)^{-1}[\{y\}]
=\big((y,y+1)\times\{1\}\big)\uplus\big(\{y\}\times\{0\}\big)$
with $(y,y+1)\times\{1\}=\big((y,y+1)\times\Sierp\big)
\cap\dom(f|^g)$ 
and $\{y\}\times\{0\}=\big(IR\times\{0\}\big)\cap\dom(f|^g)$
both open in $\dom(f|^g)$.
\\
$\Sierp\reduceq Z_2$ can be seen from the mapping
$i:Z_2\to\Sierp$ with $0\mapsto 0$ and $1\mapsto 1$
being trivially continuous since $Z_2$ has the discrete topology.
Conversely, both surjective mappings from $\Sierp$ to $Z_2$
are discontinuous; hence $Z_2\not\reduceq\Sierp$.
\item[c)]
The identity mapping $\id:Z_{k+1}\{0,1,\ldots,k\}\to\{0,1,\ldots,k\}=\Sierp_k$,
is surjective and trivially continuous, hence $\Sierp_k\reduceq Z_{k+1}$ holds.
Similarly, $\Sierp_{k-1}\reduceq\Sierp_{k}$ is established 
by the surjection $h_k:\Sierp_k\to\Sierp_{k-1}$ 
defined as $0<i\mapsto i-1$ and $0\mapsto 0$,
whose continuity follows from
$h_k^{-1}[\{0,1,\ldots,i\}]=\{0,1,\ldots,i,i+1\}$ for $0\leq i<k$.
\\
Let $g:=\rank:\IR^{k\times k}\to\Sierp_k$.
Then it holds $\rank|^{\rank}(A,i)=\rank(A)=i$ on
$\dom(\rank|^{\rank})=\{(A,i):A\in\IR^{k\times k},i\in\IN,\rank(A)=i\}$.
In particular $\big(\rank|^{\rank}\big)^{-1}[\{j\}]
=\big(\big\{A:\rank(A)\geq j\}\times\{0,1,\ldots,j\}\big)\cap\dom(\rank|^{\rank})$
is relatively open in $\dom(\rank|^{\rank})$,
because $\{A:\rank(A)\geq j\}\subseteq\IR^{k\times k}$ 
is open by \mycite{Theorem~7}{LA} and $\{0,1,\ldots,j\}\subseteq\Sierp_k$
is open by definition.
\item[d)]
A function $g:X\to\Sierp_{d-1}$ is also one $g:X\to\Sierp_d$;
and each open subset of $\Sierp_{d-1}$ is also open in $\Sierp_d$;
plus $\delta_{\Sierp_{d-1}}\reduceq\delta_{\Sierp_d}$ holds:
Therefore continuity/computability of 
$f|^g:\subseteq X\times\Sierp_{d-1}\to Y$
implies continuity/computability of $f|^g:\subseteq X\times\Sierp_{d}\to Y$.
Similarly for $Z_{d+1}$ with $Z_{d+2}$.
Moreover, a function $g:X\to\Sierp_d$ can be considered
as a function $g:X\to Z_{d+1}$; and each open subset of
$\Sierp_d$ is also open in $Z_{d+1}$; 
plus $\delta_{\Sierp_d}\reduceq\delta_{Z_{d+1}}$ holds: 
Therefore continuity/computability of $f|^g:\subseteq X\times\Sierp_{d}\to Y$
implies continuity/computability of $f|^g:\subseteq X\times Z_{d+1}\to Y$.
\\
Now suppose $\cf{\IQ}|^g:\subseteq\IR\times\Sierp_d\to\{0,1\}$ 
is continuous for $g:\IR\to\Sierp_d$. By Item~c)
and Lemma~\ref{l:Fractional}c), $\cf{\IQ}|_{D_z}$ is continuous
(i.e. constant) on each $D_z:=g^{-1}[\{z\}]$; that is, for each 
$z=0,1,\ldots,d$, it either holds
$D_z\subseteq\IQ$ or $D_z\subseteq\IR\setminus\IQ$.
First observe that there exist $k,\ell$ and 
two sequences $x_n\in D_k$ of rationals and 
$y_n\in D_\ell$ of irrationals with
$|x_n-y_n|\to0$. Indeed, $y_n:=y$ arbitrary irrational
belongs to $D_\ell$ for $\ell:=g(y)$; and, since $\IQ$
is dense, there exists $(x_n)\subseteq\IQ$ with $x_n\to y$;
where, by pigeon-hole, $x_n\in D_k$ for some $k$ and 
infinitely many (by proceeding to a subsequence w.l.o.g. all) $n$.
We treat the case $k<\ell$ ($k>\ell$ works similarly).
By construction, it holds $\cf{\IQ}(y_n)=0$ and $g(y_n)=\ell$;
hence $(y_n,\ell)\in(\cf{\IQ}|^g)^{-1}[(-\tfrac{1}{2},+\tfrac{1}{2})]=:V$
for all $n$.
Since $V$ is open in $\dom(\cf{\IQ}|^g)\subseteq\IR\times\Sierp_d$,
it follows $(x_n,k)\in V$
for all sufficiently large $n$; recall that the topology on
$\Sierp_d$ has $k\in U$ for open $U\subseteq\Sierp_d$ and $k<\ell\in U$.
But $\cf{\IQ}(x_n)=1$ contradicts $(x_n,k)\in V$.
\qed\end{enumerate}\end{proof}
Since $\Sierp$-advice is strictly less than $2$-fold advice
(Lemma~\ref{l:Fractional} plus Example~\ref{x:Sierpinski}d),
and $\Sierp$ is strictly richer than 1-fold (i.e. no) advice
(Example~\ref{x:Sierpinski}b), it is consistent to quantify
$\Sierp$-advice as $(1+\tfrac{1}{2})$-fold. In fact, $Z_2$
and $\Sierp$ are (up to homeomorphism) the only 2-element 
$\Tnull$ spaces; but according to the second part of
Example~\ref{x:Sierpinski}d), $\Sierp_d$-advice is \emph{not} 
more (nor less, for $d\geq2$) than $Z_2$-advice
and hence cannot justly be called $(d+\tfrac{1}{2})$-fold.

In fact the Definition~\ref{d:Nonunif} of integral and cardinal
$k$-continuity has an important structural advantage: the complexities 
of two functions are always comparable---either $\mycard(f)<\mycard(g)$,
or $\mycard(f)>\mycard(g)$, or $\mycard(f)=\mycard(g)$;
Whereas when refining beyond integral advice,
non-comparability emerges. In fact Definition~\ref{d:Fractional}
has been suggested to be related to \textsf{Weihrauch Degrees} 
with their complicated structure \cite{Weihrauch92,Pauly,Guido}.

On the other hand, \person{Arno Pauly} has recently suggested 
(private communication during \textsf{CCA2009}) that
at least some of the above lower bound proofs based on the 
technical and (particularly notationally) cumbersome tools
of \emph{witness of discontinuity} can be simplified 
considerably by Weihrauch-reduction 
\mycite{Theorem~5.9}{Pauly} from (some appropriate product of) 
the function $\MLPO_n$ \mycite{Section~5}{Weihrauch92}.

\begin{myquestion}
By assigning \emph{weights} to the advice values $z\in Z$
and to the measurable subsets of $\dom(X)$,
can one obtain a notion of \emph{average advice} in the
spirit of \person{Shannon}'s \textsf{entropy}?
\end{myquestion}

%%%%%%%%%%%%%%%%%%%%%%%%%%%%
\subsection{Topologically Restricted Advice} \label{s:Topology}
Definition~\ref{d:Nonunif} asks for the number of colour classes
needed to make $f$ continuous/computable on each such class---unconditional
to the topological complexity of the classes themselves:
in principle, they may be arbitrarily high on the Borel Hierarchy
or even non-measurable (subject to the axiom of choice).

From our point of view, 
  determining the discrete advice to
  (i.e. the colour $c$ of) some input $x$ to $f$ is a non-computational
  process preceeding the evaluation of $f$. 
  For instance in the Finite Element Method approach to
  solving a partial differential equation on some surface $S$, 
  its discretization via triangulation gives rise to a matrix $A$
  known a-priori to have 3-band form: its band-width need not be
  `computed', nor does one have to explicitly represent 
  the subset of all 3-band matrices within the collection of
  all matrices.
In fact, since the optimal colour classes themselves 
(rather than the number of colours) is usually far from unique,
this freedom may be exploited to choose them not too wild.

On the other hand, Definition~\ref{d:Nonunif} can easily be 
adapted to take into account topological restrictions:

\begin{definition}
Let $f:A\to B$ denote a function between topological spaces $A,B$
(represented spaces $(A,\alpha)$ and $(B,\beta)$); 
and let $\calA\subseteq2^A$ 
denote a class of subsets of $A=\dom(f)$.
\begin{enumerate}
\item[a)]
$\displaystyle \mycard(f;\calA):=\min\big\{\Card(\Delta):
\Delta\subseteq\calA\text{ partition of }A, f|_D\text{ is continuous }\forall D\in\Delta\big\}$
\item[b)]
$\displaystyle \mycomp(f,\alpha,\beta;\calA):=\min\big\{\Card(\Delta):
\Delta\subseteq\calA\text{ partition of }A, f|_D\text{ is }
(\alpha,\beta)\text{-computable } \forall D\in\Delta\big\}$
\end{enumerate}
\end{definition}
Hence for $\calA:=2^{\dom(f)}$ the powerset of $\dom(f)$
one recovers the previous, unrestricted Definition~\ref{d:Nonunif}:
$\mycard(f;2^{\dom(f)})=\mycard(f)$ and
$\mycomp(f,\alpha,\beta,2^{\dom(f)})=\mycomp(f,\alpha,\beta)$;
whereas restricting the topology of the colour classes may 
increase, but not decrease, the number of colours needed:
$\mycard(f;\calA)\leq\mycard(f,\calB)$ for $\calB\subseteq\calA$;
similarly for $\mycomp$.
Also notice that, unless $\calA$ is closed under finite
unions and intersections, it may now well matter whether
$\Delta$ is a partition or a covering of $\dom(f)$.

Concerning the applications considered in Section~\ref{s:Applications},
Corollary~\ref{c:Comparison2} below shows
that the optimal advice (namely the matrix rank and the 
number of distinct eigenvalues) gives rise to topologically
very tame colour classes. In order to formalize this claim, 
recall that for a metrizable space $X$, each level of 
the Borel Hierarchy $\BorelS_t(X),\BorelP_t(X)\subseteq 
\BorelS_t(X)\cup\BorelP_t(X)
\subseteq\BorelS_{t+1}(X)\cap\BorelP_{t+1}(X)$
of open/closed ($t=1$) set, $\Fsigma/\Gdelta$ ($t=2$)
sets and so on, is strictly refined by the 
\textsf{Hausdorff difference hierarchy};
whose second level $2\Hausd\BorelS_t(X)=2\Hausd\BorelP_t(X)$ consists 
of all sets of the form $U\setminus V$ with $U,V\in\BorelS_t(X)$
(equivalently: of the form $A\setminus B$ with $A,B\in\BorelP_1(X)$)
\mycite{Section~22.E}{Kechris}.
We can now strengthen Proposition~\ref{p:Comparison}a+c):

\begin{lemma} \label{l:Topology}
\begin{enumerate}
\item[a)] Let $X$ be a metrizable space
and $f:X\to Y$. Then, in addition to the inequalities
$\mycard(f;2\Hausd\BorelS_2)\leq\mycard(f;2\Hausd\BorelS_1)$
and $\Lev'(f)\leq\Lev(f)$, it also holds
$\mycard(f;2\Hausd\BorelS_2)\leq\Lev'(f)$
and $\mycard(f;2\Hausd\BorelS_1)\leq\Lev(f)$.
\item[b)]
 The \textsf{Dirichlet Function}, i.e. the characteristic
 function $\cf{\IQ}:[0,1]\subseteq\IR\to\{0,1\}$, has  
 $\mycomp(\cf{\IQ},\myrho,\myrho;2\Hausd\BorelS_2)=\mycard(\cf{\IQ};2\Hausd\BorelS_2)=2$
 but $\Lev'(\cf{\IQ})=\Lev(\cf{\IQ})=\infty$.
\item[c)] Let $f:X\to Y$ be such that $f|_U$ is continuous on open $U\subseteq X$.
  Then it holds $\LEV(f,1)\subseteq X\setminus U$;
  and the prerequisite that $U$ be open is essential.
\item[d)]
  More generally, if $U_{i}\subseteq X\setminus(U_1\cup\cdots\cup U_{i-1})$
  is relatively open and $f|_{U_{i}}$ continuous thereon
  for all $i\leq k$, then $\LEV(f,k)\subseteq X\setminus(U_1\cup\cdots\cup U_k)$.
\end{enumerate}
\end{lemma}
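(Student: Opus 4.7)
The plan is to dispatch parts (c) and (d) first since they are purely topological, then derive (a) by extracting the Borel classification of the sets arising in the $\Lev$ and $\Lev'$ hierarchies, and finally verify (b) directly.

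For (c), I would fix $x \in U$ and show that $f$ is already continuous at $x$ viewed as a map on $X$: given open $V \ni f(x)$, continuity of $f|_U$ yields an $X$-open $W$ with $x \in W$ and $W \cap U \subseteq f^{-1}[V]$; since $U$ itself is $X$-open, $W \cap U$ is already an $X$-neighbourhood of $x$ contained in $f^{-1}[V]$. Consequently $\LEV'(f,1) \cap U = \emptyset$, and since $X \setminus U$ is closed in $X = \dom(f)$, the superset $\LEV(f,1) = \closure{\LEV'(f,1)}$ sits inside $X \setminus U$. Essentiality of the openness hypothesis is witnessed by $f := \cf{\IQ}$ with $U := \IQ \cap [0,1]$: here $f|_U \equiv 1$ is continuous while $\LEV(f,1) = [0,1]$. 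Part (d) then follows by induction on $k$, the base $k = 1$ being (c); in the step I apply (c) to the function $f|_{\LEV(f,k)}$ on the set $U_{k+1} \cap \LEV(f,k)$, which by hypothesis is relatively open in $\LEV(f,k)$, obtaining $\LEV(f,k+1) = \LEV(f|_{\LEV(f,k)},1) \subseteq \LEV(f,k) \setminus U_{k+1}$, and combine with the inductive hypothesis.

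For (a), the key structural observation is that the canonical partition witnessing Proposition~\ref{p:Comparison}c) already consists of suitably classified sets. Each $\LEV(f,i)$ is closed in $\dom(f)$ by construction, so the cells $\LEV(f,i) \setminus \LEV(f,i+1)$ belong to $2\Hausd\BorelS_1$. Similarly, an induction on $i$ (using that for a metrizable space $X$ the discontinuity set of a function into a metric space is always $\Fsigma$) shows that $\LEV'(f,i)$ is $\Fsigma$ in $\dom(f)$, so the cells $\LEV'(f,i) \setminus \LEV'(f,i+1)$ lie in $2\Hausd\BorelS_2$. Since $f$ is continuous on each such cell, this gives $\mycard(f;2\Hausd\BorelS_1) \leq \Lev(f)$ and $\mycard(f;2\Hausd\BorelS_2) \leq \Lev'(f)$. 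The two inequalities quoted as already known, $\mycard(f;2\Hausd\BorelS_2) \leq \mycard(f;2\Hausd\BorelS_1)$ and $\Lev'(f) \leq \Lev(f)$, follow from the inclusions $2\Hausd\BorelS_1 \subseteq 2\Hausd\BorelS_2$ and $\LEV'(f,i) \subseteq \LEV(f,i)$.

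Part (b) is a direct inspection: $\IQ \cap [0,1]$ is $\Fsigma$ and $[0,1] \setminus \IQ$ is $\Gdelta$, both hence in $2\Hausd\BorelS_2$, and $\cf{\IQ}$ is constant (and so $(\myrho,\myrho)$-computable) on each, forcing $\mycomp(\cf{\IQ},\myrho,\myrho;2\Hausd\BorelS_2) \leq 2$; the matching lower bound follows because $\cf{\IQ}$ is itself discontinuous. For the level, observe $\cf{\IQ}$ is discontinuous at every point of $[0,1]$, so $\LEV'(\cf{\IQ},1) = [0,1]$; the restriction to $[0,1]$ is the same function, the argument iterates, and $\LEV'(\cf{\IQ},k) = [0,1]$ for all $k$, giving $\Lev'(\cf{\IQ}) = \Lev(\cf{\IQ}) = \infty$. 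The main nuisance throughout is the bookkeeping in (a): one must verify that ``$\Fsigma$ in an $\Fsigma$-subspace of $\dom(f)$'' collapses back to $\Fsigma$ in $\dom(f)$, so that the inductively defined sets $\LEV'(f,i)$ really lie in $\BorelS_2(\dom(f))$ uniformly in $i$.
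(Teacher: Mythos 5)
Your proof is correct and follows the same overall structure as the paper: (c) via disjointness of $U$ from $\LEV'(f,1)$ and closedness of $X\setminus U$, (d) by induction on $k$, (a) by showing the canonical partition cells $\LEV(f,i)\setminus\LEV(f,i+1)$ resp.\ $\LEV'(f,i)\setminus\LEV'(f,i+1)$ land in the required Hausdorff-difference classes, and (b) by direct inspection. The main divergence is local: for (c) you give a self-contained $\epsilon$--$\delta$ argument where the paper simply cites \mycite{Lemma~2.5.3}{Hertling}; and in (d) you apply (c) directly to $f|_{\LEV(f,k)}$ with open set $U_{k+1}\cap\LEV(f,k)$, whereas the paper first passes through $\LEV\bigl(f|_{X\setminus(U_1\cup\cdots\cup U_k)},1\bigr)$ via a monotonicity result (\mycite{Lemma~2.5.4}{Hertling}) before invoking (c). Your route is a bit shorter and avoids the external citations, which is a genuine simplification. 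One small point you gloss over and the paper makes explicit: the identity $\LEV(f,k+1)=\LEV(f|_{\LEV(f,k)},1)$ relies on the fact that closure in $\dom(f)$ and closure relative to the closed subspace $\LEV(f,k)$ coincide; without it the two $\LEV$-operators could in principle differ. It would also be worth noting that your step ``$U_{k+1}\cap\LEV(f,k)$ is relatively open in $\LEV(f,k)$'' uses the inductive hypothesis $\LEV(f,k)\subseteq X\setminus(U_1\cup\cdots\cup U_k)$ together with the openness hypothesis, not the hypothesis alone.
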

\begin{proof}
\begin{enumerate}
\item[a)]
Recall that $f$ is continuous on $\LEV(f,i)\setminus\LEV(f,i+1)$
where $\LEV(f,0)=\dom(f)$ and 
$\LEV(f,i+1)=\closure{\{x\in\LEV(f,i):f|_{\LEV(f,i)}\text{ discontinuous at }x\}}$
is closed, i.e. belongs to $\BorelP_1\big(\LEV(f,i)\big)\subseteq\BorelP_1\big(\dom(f)\big)$;
hence $\LEV(f,i)\setminus\LEV(f,i+1)\in 2\Hausd\BorelP_1$ constitutes a
$\Lev(f)$--element partition of $\dom(f)$ as required.
\\
For the case of $\Lev'(f)$, recall that the set
$\LEV'(f,i+1)$ of discontinuities of $f|_{\LEV'(f,i)}$ is always
$\Fsigma$, i.e. in $\BorelS_2\big(\LEV'(f,i)\big)$; and by induction
in $\BorelS_2\big(\dom(f)\big)$ since $\Fsigma$ sets are closed under
finite intersection. Now proceed as above.
\item[b)]
Observe that, since $\IQ\in\Fsigma$,
both $\IQ\cap[0,1]$ and $[0,1]\setminus\IQ$
belong to $2\Hausd\BorelS_2$, thus showing
$\mycomp(\cf{\IQ};2\Hausd\BorelS_2)=2$.
\\
However the subset $\LEV'(f,1)$ of discontinuities 
  of $\cf{\IQ}$ coincides with $[0,1]=\LEV'(f,0)$; therefore it holds
  $[0,1]=\LEV'(f,k)=\LEV(f,k)\neq\emptyset$ for all (even transfinite) $k$.
\item[c)] From \mycite{Lemma~2.5.3}{Hertling}, it follows
  that $U$ is disjoint from $\LEV'(f,1)$, i.e.
  $\LEV'(f,1)\subseteq X\setminus U$ a closed set;
  therefore $\LEV(f,1)=\closure{\LEV'(f,1)}$,
  the least closed set containing $\LEV'(f,1)$,
  is a subset of $X\setminus U$.
\\
  Recall from b) the example of $\cf{\IQ}:[0,1]\to\{0,1\}$ 
  continuous on $\IQ$, yet $\IQ$ is certainly not disjoint 
  from $\LEV(\cf{\IQ},1)=\LEV'(\cf{\IQ},1)=[0,1]$.
\item[d)] proceeds by induction on $k$,
  the case $k=1$ been handled in c).
First observe that $\LEV(f,k+1)=\LEV(f|_{\LEV(f,k)},1)$
since the topological closure implicit on the left hand side 
coincides with the closure relative to (closed) $\LEV(f,k)$
on the right hand side. Moreover, the induction hypothesis
$\LEV(f,k)\subseteq X\setminus(U_1\cup\cdots\cup U_k)$ implies 
$\LEV(f|_{\LEV(f,k)},1)\subseteq\LEV(f_{X\setminus(U_1\cup\cdots\cup U_k)},1)$
by \mycite{Lemma~2.5.4}{Hertling},
which is in turn contained in 
$\big(X\setminus(U_1\cup\cdots\cup U_k)\big)\setminus U_{k+1}$
according to c).
\qed\end{enumerate}\end{proof}
Lemma~\ref{l:Topology}a+b) indicates that
the greedy meta-algorithm underlying the definitions of 
$\Lev(f)$ and $\Lev'(f)$ yields topologically mild
colour classes on the one hand, but on the other 
hand not necessarily the least number.
For the problems in linear algebra considered above,
however, greedy is optimal:

\begin{corollary} \label{c:Comparison2}
\begin{enumerate}
\item[a)] Fix $n,m\in\IN$ and recall from Theorem~\ref{t:LinEq}
the problem $\LinEq_{n,m}$ of finding to a given $A\in\IR^{n\times m}$
of $\rank(A)\leq d:=\min(n,m-1)$ some non-zero $\vec x\in\IR^m$ 
such that $A\cdot\vec x\neq0$. It holds
\[ \Lev'(\LinEq_{n,m}) = \Lev(\LinEq_{n,m}) =
\mycard(\LinEq_{n,m}) = 
\mycomp(\LinEq_{n,m},\myrho^{n\times m},\myrho^m;2\Hausd\BorelS_1)
= d+1. \]
\item[b)]
Fix $d\in\IN$ and recall from Theorem~\ref{t:Diag} the problem
$\Diag_{d}$ of finding to a given real symmetric $d\times d$-matrix $A$
a basis of eigenvectors. It holds
\[ \Lev'(\Diag_d)=\Lev(\Diag_d)= \mycard(\Diag_d)
=
\mycomp(\Diag_d,\myrho^{d\cdot(d-1)/2},\myrho^{d\times d};2\Hausd\BorelS_1)
= d
\enspace . \]
\item[c)]
Fix $n\in\IN$ and recall from Theorem~\ref{t:EV} the problem
$\EVec_{n}$ of finding, to a given real symmetric $d\times d$-matrix $A$,
some eigenvector. It holds
\[ \Lev'(\EVec_n)=\Lev(\EVec_n)= \mycard(\EVec_n)
=
\mycomp(\EVec_n,\myrho^{n\cdot(n-1)/2},\myrho^n;2\Hausd\BorelS_1)
=\lfloor\log_2 n\rfloor+1
\enspace . \]
\end{enumerate}\noindent
More precisely, the class $2\Hausd\BorelS_1$ of pairwise differences
of open sets above may be replaced by the class
$2\Hausd\KleeneS_1$ of pairwise differences of
\emph{r.e.} open sets, i.e. by the second Hausdorff level on
the ground level $\KleeneS_1$ of the \emph{effective} Borel Hierarchy.
\end{corollary}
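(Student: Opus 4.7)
The plan is to prove the three parts by a uniform scheme. Theorems~\ref{t:LinEq}, \ref{t:Diag}, and \ref{t:EV} already supply $\mycard(f)=\mycomp(f,\alpha,\beta)=v(f)$, where $v(\LinEq_{n,m})=d{+}1$, $v(\Diag_d)=d$, and $v(\EVec_n)=\lfloor\log_2 n\rfloor{+}1$. Combining this with the trivial monotonicities $\mycard(f)\le\mycard(f;\calA)\le\mycomp(f,\alpha,\beta;\calA)$ and with Lemma~\ref{l:Topology}a), which gives $\mycard(f;2\Hausd\BorelS_2)\le\Lev'(f)\le\Lev(f)$ and $\mycard(f;2\Hausd\BorelS_1)\le\Lev(f)$, it remains to establish two upper bounds $\mycomp(f,\alpha,\beta;2\Hausd\KleeneS_1)\le v(f)$ and $\Lev(f)\le v(f)$; every other quantity is then sandwiched to equal $v(f)$, and the strengthening from $\BorelS_1$ to $\KleeneS_1$ in the last sentence of the corollary is obtained for free.

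I would use, in each part, one and the same partition into $v(f)$ classes: in~a) the rank-strata $D_r:=\{A:\rank(A)=r\}$ ($r=0,\ldots,d$); in~b) the spectrum-size strata $E_k:=\{A:\Card\sigma(A)=k\}$ ($k=1,\ldots,d$); and in~c) the strata $F_k:=\{A:\lfloor\log_2 m(A)\rfloor=k\}$ ($k=0,\ldots,\lfloor\log_2 n\rfloor$). Computability of the restrictions $\LinEq|_{D_r}$, $\Diag|_{E_k}$, $\EVec|_{F_k}$ is precisely what the positive parts of Theorems~\ref{t:LinEq}, \ref{t:Diag}, and~\ref{t:EV} (through Theorem~\ref{t:EV1} in the last case) already prove, since on each stratum the required integer advice is a constant that need not be supplied again. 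Each stratum lies in $2\Hausd\KleeneS_1$ because its defining quantity is semi-computable from one side: $\rank$ and $\Card\sigma$ are $\myrhol$--computable by \mycite{Theorem~7}{LA} and \mycite{Proposition~20}{LA}, so $\{\rank\ge r\}$ and $\{\Card\sigma\ge k\}$ are r.e. open; and $\lfloor\log_2 m(A)\rfloor$ is $\myrhog$--computable by Observation~\ref{o:MinDim}, so $\{\lfloor\log_2 m\rfloor\le k\}$ is r.e. open. Hence each $D_r$, $E_k$, $F_k$ is a difference of two r.e. open sets, which immediately yields the $\mycomp$-bound.

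For the $\Lev$-bound I would apply Lemma~\ref{l:Topology}d) by peeling off the strata in order from the generic to the most degenerate one. In~a) take $U_i:=D_{d-i+1}$ for $i=1,\ldots,d+1$: the set $\dom(f)\setminus(U_1\cup\cdots\cup U_{i-1})$ equals $\{\rank\le d-i+1\}$, and $U_i$ is relatively r.e. open therein because $\{\rank\ge d-i+1\}$ is r.e. open in the ambient space. Part~b) proceeds identically with $U_i:=E_{d-i+1}$. In~c) the direction of peeling flips because $m(A)$ is only $\myrhog$--computable: take $U_i:=F_{i-1}$, so that $\dom(f)\setminus(U_1\cup\cdots\cup U_{i-1})=\{\lfloor\log_2 m\rfloor\ge i-1\}$ and $U_i=\{\lfloor\log_2 m\rfloor\le i-1\}\cap\{\lfloor\log_2 m\rfloor\ge i-1\}$ is relatively r.e. open there. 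Lemma~\ref{l:Topology}d) then delivers $\Lev(f)\le v(f)$. Moreover, since the ``residual'' sets $\{\rank\le r\}$, $\{\Card\sigma\le k\}$, and $\{\lfloor\log_2 m\rfloor\ge k\}$ are all closed (being complements of r.e. open sets), the topological closure implicit in the definition of $\LEV(f,\cdot)$ is vacuous, so $\Lev'(f)=\Lev(f)$ in each instance.

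The main technical obstacle is the bookkeeping for the one-sided semi-computability claims at the interface with \cite{LA}: in particular, confirming that $\{A:\Card\sigma(A)\ge k\}$ is r.e. open---i.e.\ that having at least $k$ distinct eigenvalues persists under small perturbations of a symmetric matrix, so generically clusters can only split, never merge---and that $\lfloor\log_2 m(A)\rfloor$ really is $\myrhog$--computable, which combines $\myrhog$--computability of the individual eigenspace dimensions with an effective minimum over the a priori unknown list of distinct eigenvalues. Once these pieces of the linear-algebra package are in hand, the scheme above is essentially mechanical and the identities in~a)--c) drop out by sandwiching.
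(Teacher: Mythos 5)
Your proposal is correct and follows essentially the same route as the paper: the same rank-, spectrum-size-, and $\lfloor\log_2 m\rfloor$-strata, the same appeal to Lemma~\ref{l:Topology}a)+d) with the identical peeling order (including the flipped direction in part~c), and the same one-sided semi-computability facts from \cite{LA} and Observation~\ref{o:MinDim}, sandwiched against the lower bounds from Theorems~\ref{t:LinEq}, \ref{t:Diag}, and~\ref{t:EV}. The only minor slip is citing Proposition~20 of \cite{LA} for $\myrhol$-computability of $\Card\sigma$; the paper uses Proposition~17.
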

\begin{proof}
\begin{enumerate}
\item[a)]
Theorem~\ref{t:LinEq} refers to arbitrary colour classes
and shows that, there, $d$-fold advice is insufficient to continuity:
$\mycard(\LinEq_{n,m})>d$. In view of Lemma~\ref{l:Topology}a) 
it thus suffices to show
$\Lev(\LinEq_{n,m})\leq d+1$. Indeed,
the set $\rank^{-1}(\geq k)$ of matrices of rank at least $k$
is effectively open a subset of $X:=\IR^{n\times m}$ because
$A\mapsto\rank(A)$ is lower-computable
\cite[\textsc{Theorem~7}i]{LA}. In particular, 
$U_{d-k+1}:=V_{k}:=\rank^{-1}(k)=\rank^{-1}(\geq k)\cap\rank^{-1}(\leq k)$
is effectively open in 
$\rank^{-1}(\leq k)=\dom(\LinEq_{n,m})\setminus(V_{d}\cup\cdots\cup V_{k+1})$;
and $\LinEq_{n,m}$ is computable
and continuous thereon by \mycite{Theorem~11}{LA}.
Now apply Lemma~\ref{l:Topology}d) to conclude 
  $\LEV(\LinEq_{n,m},d+1)\subseteq\dom(\LinEq_{n,m})\setminus(V_d\cup\cdots\cup V_0)
  =\emptyset$.
\item[b)]
Similarly to a) and in view of Theorem~\ref{t:Diag}
it suffices to show $\Lev(\Diag_d)\leq d$. 
Now, again, the set 
$V_k:=\{A:\Card\sigma(A)=k\}=
\{A:\Card\sigma(A)\geq k\}\cap\{A:\Card\sigma(A)\leq k\}$
of symmetric real $d\times d$-matrices $A$ with
exactly $k$ distinct eigenvalues 
is effectively open in 
$\{A:\Card\sigma(A)\leq k\}=\dom(\Diag_d)\setminus (V_d\cup\cdots\cup V_{k+1})$:
because $A\mapsto\Card\sigma(A)$ is lower-computable
and lower-continuous \mycite{Proposition~17}{LA}.
And $\Diag_d$ is computable and continuous on $V_k$ by
\mycite{Theorem~19}{LA}, so Lemma~\ref{l:Topology}d) yields the claim.
\item[c)]
Again,
in order to show $\Lev(\EVec_n)\leq\lfloor\log_2 n\rfloor+1$,
consider the sets
$U_k:=\big\{A\in\IR^{n\cdot(n-1)/2}:\lfloor\log_2 m(A)\rfloor=k\big\}$,
$k=0,\ldots,\lfloor\log_2 n\rfloor$, on which
$\EVec_n$ is computable by Theorem~\ref{t:EV1}.
This time $\{A:\lfloor\log_2 m(A)\rfloor\leq k\}$ 
(rather than ``$\geq k$'') are,
according Observation~\ref{o:MinDim},
effectively open subsets of $\dom(\EVec_n)$.
Hence $U_k$ is relatively open in 
$\{A:\lfloor\log_2 m(A)\rfloor\geq k\}
=\IR^{n\cdot(n-1)/2}\setminus(U_0\cup\cdots\cup U_{k-1})$:
now apply Lemma~\ref{l:Topology}d).
\qed\end{enumerate}\end{proof}
In the discrete realm, the \textsf{Church-Turing Hypothesis} is 
generally accepted and bridges the gap between computational 
practice and formal recursion theory: 
\begin{quote} \it
every function which would naturally be regarded as computable 
is computable under his definition, i.e. by one of his 
{\rm(i.e. Turing's)} machines 
\quad {\rm\cite[p.376]{Kleene}}
\end{quote}
In the real number setting, the Type-2 Machine has not attained
such universal acceptance---mostly due to its inability to
compute any discontinuous function. Hence we propose the 
following as a real counterpart to the discrete Church-Turing Hypothesis:
\begin{quote} \it
The class of real functions $f$
which would naturally be regarded as computable
coincides with those functions 
computable by a Type-2 Machine
\emph{with finite discrete advice of colour classes
in $2\Hausd\KleeneS_1(\dom f)$}.
\end{quote}

\begin{myquestion}
\begin{enumerate}
\item[a)]
Is the rank the (up to permutation) unique
least advice rendering $\LinEq$ computable/continuous?
\item[b)] 
Is the number of distinct eigenvalues the (up to permutation)
unique least advice rendering $\Diag$ computable/continuous?
\item[c)]
More generally, what are sufficient conditions for the sets
$\LEV(f,i)$ ($i=1,\ldots,\Lev(f)$) to be the \emph{unique}
least-size partition of $\dom(f)$ into subsets where
$f$ is continuous?
\end{enumerate}
\end{myquestion}
Recall that in the proof of Corollary~\ref{c:Comparison2},
we have repeatedly employed Lemma~\ref{l:Topology}d) 
giving a sufficient condition for the sets $\LEV(f,i)$ 
to constitute a \emph{least-size} partition of $\dom(f)$ 
into subsets where $f$ is continuous.

%%%%%%%%%%%%%%%%%%%%%%%%%%%%%%%%%%%%%%%%%%%%%%%%%%%%%%%%%%%%%%%%%%%%%%

\end{document}